\tikzstyle{vert2}=[circle,inner sep=1.5,fill=white,draw,minimum size=.2cm]
\tikzstyle{vert3}=[inner sep=1.5,fill=white,draw=black,minimum size=.2cm]
\Crefname{ALC@unique}{Line}{Lines} 
\newtheorem{theorem}{Theorem}
\newtheorem{lemma}[theorem]{Lemma}
\newtheorem{observation}[theorem]{Observation}
\newtheorem{corollary}[theorem]{Corollary}
\crefname{corollary}{Corollary}{Corollaries}
\newtheorem{claim}[theorem]{Claim}
\theoremstyle{definition}
\newtheorem{definition}[theorem]{Definition}
\newcommand{\ie}{i.\,e.,\ }
\newcommand{\etal}{et.\,al.\ }
\newcommand{\dTVC}{$\Delta$-TVC}
\title{\vspace{-0.5cm} The Complexity of Temporal Vertex Cover in Small-Degree Graphs}
\author{Thekla Hamm\thanks{%
		Algorithms and Complexity Group, TU Wien, Vienna, Austria. Email: \texttt{thamm@ac.tuwien.ac.at}}
	\and Nina Klobas\thanks{%
		Department of Computer Science, Durham University, UK. Email: \texttt{nina.klobas@durham.ac.uk}}
	\and George B.~Mertzios \thanks{%
		Department of Computer Science, Durham University, UK. Email: \texttt{george.mertzios@durham.ac.uk}} \thanks{Supported by the EPSRC grant EP/P020372/1.} 
	\and Paul G. Spirakis\thanks{%
		Department of Computer Science, University of Liverpool, UK and Computer Engineering \& Informatics Department, University of Patras, Greece. Email: \texttt{p.spirakis@liverpool.ac.uk}} \thanks{Supported by the NeST initiative of the School of EEE and CS at the University of Liverpool and by the EPSRC grant EP/P02002X/1.}}
\date{\vspace{-1.0cm}}
\newif\iflong
\newif\ifshort
\begin{document}

\maketitle

\begin{abstract}
	Temporal graphs naturally model graphs whose underlying topology changes over time. Recently, the problems \textsc{Temporal Vertex Cover} (or \textsc{TVC}) and \textsc{Sliding-Window Temporal Vertex Cover} (or \textsc{$\Delta$-TVC} for time-windows of a fixed-length $\Delta$) have been established as natural extensions of the classic problem \textsc{Vertex Cover} on static graphs with connections to areas such as surveillance in sensor networks.
	
	In this paper, we initiate a systematic study of the complexity of \textsc{TVC} and \textsc{$\Delta$-TVC} on sparse graphs. Our main result shows that for every $\Delta\geq 2$, \textsc{$\Delta$-TVC} is NP-hard even when the underlying topology is described by a path or a cycle.
	This resolves an open problem from literature and shows a surprising contrast between \textsc{\(\Delta\)-TVC} and \textsc{TVC} for which we provide a polynomial-time algorithm in the same setting.
	To circumvent this hardness, we present some fixed-parameter and approximation algorithms. 
\end{abstract}

\section{Introduction}

A great variety of modern, as well as traditional networks, are dynamic in nature 
as their link availability changes over time. 
Information and communication networks, social networks, transportation networks, 
and various physical systems are only a few indicative examples of such inherently dynamic networks~\cite{Holme-Saramaki-book-13,michailCACM}. 
All these application areas share the common characteristic that the network structure, 
i.e.~the underlying graph topology, is subject to \emph{discrete changes over time}. 
In this paper, embarking from the foundational work of Kempe et al.~\cite{kempe}, 
we adopt the following simple and natural model for time-varying networks, 
which is given by a graph with sets of time-labels associated with its edges, while the vertex set is fixed. 

\begin{definition}[Temporal Graph]
	\label{def:temporal-graph} A \emph{temporal graph} is a pair $(G,\lambda)$,
	where $G=(V,E)$ is an underlying (static) graph and $\lambda :E\rightarrow
	2^{\mathbb{N}}$ is a \emph{time-labeling} function which assigns to every
	edge of $G$ a set of discrete-time labels.
\end{definition}

For an edge~$e \in E$ in the underlying graph~$G$ of a temporal graph~$(G,\lambda)$, 
if $t \in \lambda(e)$ then we say that $e$ is \emph{active} at time~$t$ in $(G,\lambda)$. 
That is, for every edge $e\in E$, $\lambda (e)$ denotes the set of time slots at which $e$ is \emph{active}. 
Due to their relevance and applicability in many areas, 
temporal graphs have been studied from various perspectives and under different names 
such as \emph{dynamic}~\cite{GiakkoupisSS14,CasteigtsFloccini12}, 
\emph{evolving}~\cite{xuan,Ferreira-MANETS-04,clementi10}, 
\emph{time-varying}~\cite{FlocchiniMS09,TangMML10-ACM,krizanc1}, 
and \emph{graphs over time} \cite{Leskovec-Kleinberg-Faloutsos07}. 
For a comprehensive overview of the existing models 
and results on temporal graphs from a distributed computing perspective see the 
surveys~\cite{CasteigtsFloccini12,flocchini1,flocchini2}.

Mainly motivated by the fact that, due to causality, information can be 
transferred in a temporal graph along sequences of edges whose time-labels 
are increasing, the most traditional research on temporal graphs has focused 
on temporal paths and other ``path-related'' notions, 
such as e.g.~temporal analogues of distance, reachability, exploration and centrality~\cite{KlobasMMNZ21,HeegerHMMNS21,akrida16,Erlebach0K21,MertziosMS19,michailTSP16,akridaTOCS17,EnrightMMZ21,ZschocheFMN20,CasteigtsHMZ21}. 
To complement this direction, several attempts have been recently made to 
define meaningful ``non-path'' temporal graph problems which appropriately 
model-specific applications. 
Motivated by the contact patterns among high-school students, 
Viard et al.~\cite{viardCliqueTCS16}, 
introduced $\Delta$-cliques, an extension of the concept of cliques to temporal graphs (see also~\cite{himmel17,BHMMNS18}). 
Chen et al.~\cite{ChenMSS18} presented an extension of the cluster editing problem to temporal graphs, 
 in which all vertices interact with each other at least once every $\Delta$ consecutive time-steps within a given time interval.
Furthermore, Akrida et al.~\cite{AkridaMSZ20} introduced the notion of temporal vertex cover (also with a sliding time window), 
motivated by applications of covering problems in sensor networks. 
Further examples of meaningful ``non-path'' temporal graph problems include variations of temporal graph coloring~\cite{MertziosMZ21,yu2013algorithms,ghosal2015channel} in the context of planning and channel assignment in mobile sensor networks, 
and the temporally transitive orientations of temporal graphs~\cite{MertziosMRSZ21}.


The problems \textsc{Temporal Vertex Cover} (or \textsc{TVC}) 
and \textsc{Sliding-Window Temporal Vertex Cover} (or \textsc{$\Delta$-TVC} for time-windows of a fixed-length $\Delta$) have been established 
as natural extensions of the well-known \textsc{Vertex Cover} problem on static graphs~\cite{AkridaMSZ20}. 
Given a temporal graph $\mathcal{G}$, the aim of \textsc{TVC} is to cover every edge at least once during the lifetime $T$ of $\mathcal{G}$, 
where an edge can be covered by one of its endpoints, and only at a time-step when it is active. 
For any $\Delta\in \mathbb{N}$, the aim of the more ``pragmatic'' problem \textsc{$\Delta$-TVC} is to cover every edge at least once at every $\Delta$ consecutive time-steps.
In both problems, we try to minimize the number of ``vertex appearances'' in the resulting cover, where a vertex appearance is a pair $(v,t)$ 
for some vertex $v$ and $t\in \{1,2,\ldots,T\}$.

\textsc{TVC} and \textsc{$\Delta$-TVC} naturally generalize the applications of the static problem \textsc{Vertex Cover} to more dynamic inputs, especially in the areas of wireless ad hoc networks, as well as network security and scheduling. 
In the case of a static graph, the vertex cover can contain trusted vertices that have the ability to monitor/surveil all transmissions~\cite{ileri2016vertex,RichterHG07} or all link failures~\cite{KavalciUD14} between any pair of vertices through the edges of the graph. In the temporal setting, it makes sense to monitor the transmissions and to check for link failures within every sliding time window of an appropriate length $\Delta$ (which is exactly modeled by $\Delta$-TVC).

It is already known that both \textsc{TVC} and \textsc{$\Delta$-TVC} are NP-hard; for \textsc{$\Delta$-TVC} 
this is even the case when $\Delta=2$ and the minimum degree of the 
underlying graph $G$ is just 3~\cite{AkridaMSZ20}. One of the most intriguing questions left open (see Problem 1 in~\cite{AkridaMSZ20}) is 
whether \textsc{$\Delta$-TVC} (or, more generally, \textsc{Sliding-Window Temporal Vertex Cover}) can be solved in polynomial time 
on always degree at most $2$ temporal graphs, that is, on temporal graphs where the maximum degree of the graph at each time-step is at most 2.

\medskip
\noindent\textbf{Our Contribution.} 
In this paper, we initiate the study of the complexity of \textsc{TVC} and \textsc{$\Delta$-TVC} on sparse graphs. 
Our main result (see~\cref{subsec:2tvc-paths-NP}) is that, for every $\Delta\geq 2$, \textsc{$\Delta$-TVC} is NP-hard even when $G$ is a path or a cycle, while \textsc{TVC} can be 
solved in polynomial time on paths and cycles. This resolves the first open question (Problem 1) of~\cite{AkridaMSZ20}. 
In contrast, we show that \textsc{TVC} (see~\cref{subsec:algres}) can be solved in polynomial time on temporal paths and cycles. 
Moreover, for any $\Delta\geq 2$, we provide a \emph{Polynomial-Time Approximation Scheme (PTAS)} for $\Delta$-TVC on temporal paths and cycles (see~\cref{subsec:algres}), which also complements our hardness result for paths. 

The \NP-hardness of \Cref{subsec:2tvc-paths-NP} signifies that an optimum solution for \(\Delta\)-TVC is hard to compute, even for $\Delta=2$ and under severe degree restrictions of the input instance. 
To counter this hardness for more general temporal graphs than those with underlying paths and cycles as in \Cref{paths-cycles-sec}, in \Cref{bounded-degree-sec} we give three algorithms for every $\Delta\geq 2$:
First we present an exact algorithm for \(\Delta\)-TVC with 
exponential running time dependency on the number of edges in the underlying graph
(see \Cref{subsec:exact}).
Using this algorithm we are able to devise, for any $d \geq 3$, a polynomial-time \((d-1)\)-approximation (see \Cref{better-d-approx-subsec}), where \(d\) is the maximum vertex degree in any time-step, \ie in any part of the temporal graph that is active at the same time.
This improves the currently best known \(d\)-approximation algorithm for \textsc{\(\Delta\)-TVC}~\cite{AkridaMSZ20} and thus also answers another open question (Problem 2 in~\cite{AkridaMSZ20}).
Finally, we present a simple fixed-parameter tractable algorithm with respect to the size of an optimum solution (see \Cref{cp-alg-subsec}).

\section{Preliminaries}\label{prelim-sec}

Given a (static) graph $G=(V,E)$ with vertices in $V$ and edges in $E$, an edge between two vertices~$u$ and~$v$ is denoted by $uv$, 
and in this case~$u$ and~$v$ are said to be \emph{adjacent} in~$G$. 
For every $i,j\in \mathbb{N}$, where $i\leq j$, we let $[i,j]=\{i,i+1,\ldots ,j\}$ and $[j]=[1,j]$. 
Throughout the paper, we consider temporal graphs whose underlying graphs are finite and whose time-labeling functions only map to finite sets.
This implies that there is some \(t \in \mathbb{N}\) such that, for every \(t' > t\), no edge of \(G\) is active at \(t'\) in \((G, \lambda)\).
We denote the smallest such \(t\) by $T$, \ie $T = \max\{t \in \lambda(e) \mid e \in E\}$, and call $T$ the \emph{lifetime} of \((G, \lambda)\). 
Unless otherwise specified, $n$ denotes the number of vertices in the underlying graph $G$, and $T$ denotes the lifetime of the temporal graph $\mathcal{G}$. 
We refer to each
integer $t\in [T]$ as a \emph{time slot} of $(G,\lambda )$.
The \emph{instance} (or \emph{snapshot}) of $(G,\lambda )$ 
\emph{at time}~$t$ is the static graph $G_{t}=(V,E_{t})$, where $%
E_{t}=\{e\in E:t\in \lambda (e)\}$. 

A \emph{temporal path} of length $k$ is a temporal graph $\mathcal{P}=(P, \lambda)$,
where the underlying graph $P$ is the path $(v_0, v_1, v_2, \dots, v_k)$ on $k+1$ vertices,
with edges $e_i = v_{i-1}v_i$ for $i = 1, 2, \dots, k$.
In many places throughout the paper, we visualize a temporal path as a 2-dimensional array  
$V(P) \times [T]$, where two vertices $(x,t), (y,t') \in V(P) \times [T]$ are connected in this array if and only if $t = t'\in \lambda(xy)$ and $xy \in E(P)$.
For example see \cref{fig:ExampleTemporalPath}.

\begin{figure}[h]
	\centering
	\includegraphics[width=0.75\linewidth]{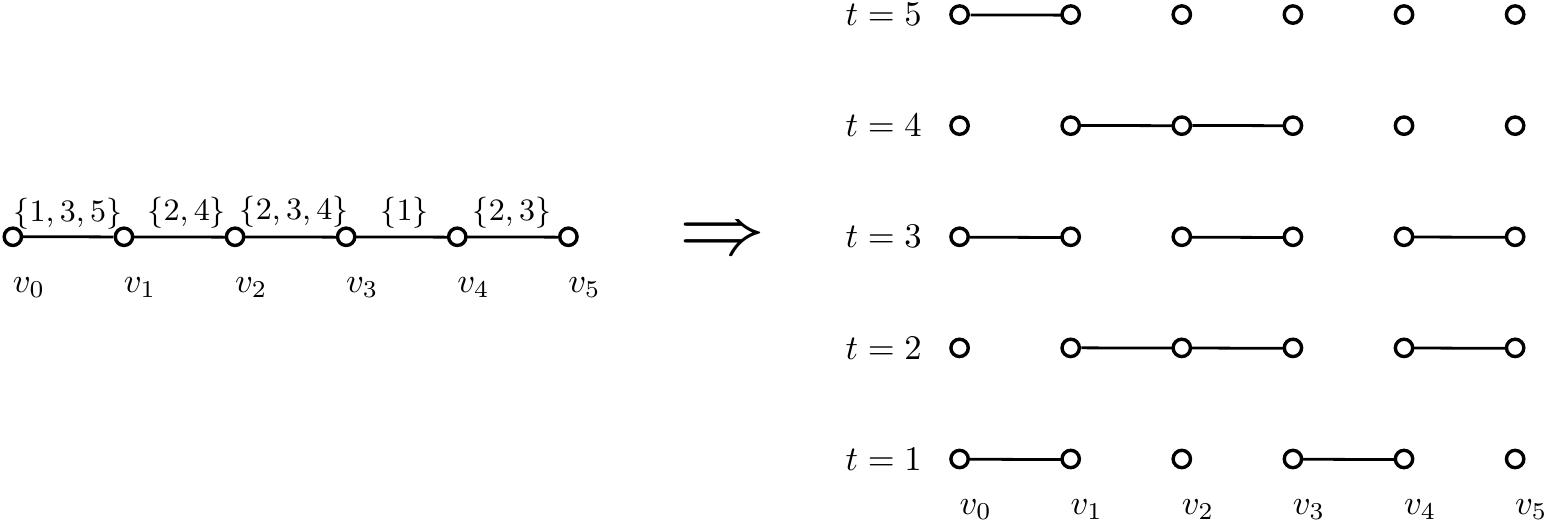}
	\caption{An example of visualizing a temporal path graph $\mathcal{G}$ as a $2$-dimensional array, in which every edge corresponds to a time-edge of $\mathcal{G}$.}
	\label{fig:ExampleTemporalPath} 
\end{figure}

For every $t=1,\ldots,T-\Delta+1$, let $W_t = [t, t+ \Delta -1]$ be the $\Delta$-time window that starts at time $t$.
For every $v\in V$ and every time slot $t$, we denote the \emph{appearance
of vertex} $v$ \emph{at time} $t$ by the pair $(v,t)$ and the
\emph{edge appearance} (or \emph{time-edge}) of $e$ at time $t$ by $(e,t)$.

A \emph{temporal vertex subset} of \((G, \lambda)\) is a set of vertex appearances in $(G, \lambda)$, i.e.\ a set of the form \(S \subseteq \{(v, t) \mid v \in V, t \in [T]\}\).
For a temporal vertex subset~\(S\) and some \(\Delta\)-time window~\(W_i\) within the lifetime of \((G, \lambda)\), we denote by \(S[W_i] = \{(v, t) \in S \mid t \in W_i\}\) the subset of all vertex appearances in \(S\) in the \(\Delta\)-time window~\(W_i\).
For a \(\Delta\)-time window~\(W_i\) within the lifetime of a temporal graph~\((G, \lambda)\), we denote by \(E[W_i] = \{e \in E \mid \lambda(e) \cap W_i \neq \emptyset\}\) the set of all edges which appear at some time slot within \(W_i\).

A temporal vertex subset~$\mathcal{C}$ is a \emph{sliding \(\Delta\)-time window temporal vertex cover}, or \(\Delta\)-TVC for short, 
of a temporal graph \((G, \lambda)\) 
if, for every \(\Delta\)-time window~\(W_i\) within the lifetime of \((G, \lambda)\) and for every edge in \(E[W_i]\), there is some \((v,t) \in \mathcal{C}[W_i]\) such that \(v \in e\), i.e.\ \(v\) is an endpoint of \(e\), and \(t \in \lambda(e)\).
Here we also say \((v,t)\) \emph{covers} \((e,t)\) in time window \(W_i\).

\section{Paths and Cycles}\label{paths-cycles-sec}
In Section~\ref{subsec:2tvc-paths-NP} we provide our main NP-hardness result for $\Delta$-TVC, for any $\Delta\geq 2$, 
on instances whose underlying graph is a path or a cycle (see Theorem~\ref{thm:2tvc-paths-NP} and 
Corollary~\ref{cor:2tvc-cycles-NP}). 
In Section~\ref{subsec:algres} we prove that TVC on underlying paths and cycles is polynomially solvable, and we also provide our PTAS for $\Delta$-TVC on underlying paths and cycles, for every $\Delta\geq 2$.

\subsection{Hardness on Temporal Paths and Cycles}\label{subsec:2tvc-paths-NP}
 
Our NP-hardness reduction of \cref{thm:2tvc-paths-NP} is done from the NP-hard problem \emph{planar monotone rectilinear $3$ satisfiability} 
(or \emph{planar monotone rectilinear 3SAT}), see \cite{Berg2010Optimal}.
This is a specialization of the classical Boolean $3$-satisfiability problem to a planar incidence graph. 
A Boolean satisfiability formula $\phi$ in conjunctive normal form (CNF) is called \emph{monotone} if each clause of $\phi$ consists of only positive or only negative literals. We refer to these clauses as \emph{positive} and \emph{negative} clauses, respectively. 
By possibly repeating literals, we may assume without loss of generality 
that every clause contains exactly three (not necessarily different) literals.

In an instance of planar monotone rectilinear $3$SAT, each variable and each clause is represented with a horizontal line segment, as follows. 
The line segments of all variables lie on the same horizontal line on the plane, which we call the \emph{variable-axis}.
For every clause $C = (x_i \lor x_j \lor x_k)$ (or $C = (\overline{x_i} \lor \overline{x_j} \lor \overline{x_k})$), the line segment of $C$ is connected via straight vertical line segments to the line segments of $x_i, x_j$ and of $x_k$, such that every two (horizontal or vertical) line segments are pairwise non-intersecting. 
Furthermore, every line segment of a positive (resp.~negative) clause 
lies above (resp.~below) the variable-axis. 
Finally, by possibly slightly moving the clause line segments higher or lower, we can assume without loss of generality that every clause line segment lies on a different horizontal line on the plane. 
\iflong
For an example see \cref{fig:rectilinearMax2SAT}.

\begin{figure}[h]
	\centering
	\includegraphics[width=0.6\linewidth]{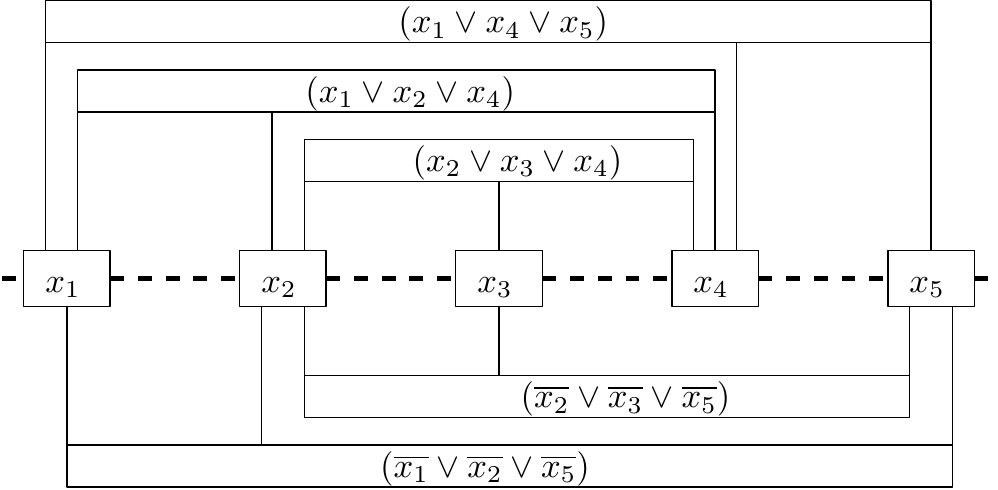}
	\caption{An example of an instance of a planar monotone rectilinear 3SAT 
		$\phi = (x_2 \lor x_3 \lor x_4) \land (x_1 \lor x_2 \lor x_4) \land (x_1 \lor x_4 \lor x_5) \land (\overline{x_2} \lor \overline{x_3} \lor \overline{x_5}) \land (\overline{x_1} \lor \overline{x_2} \lor \overline{x_5})$. For visual purposes, the line segments for variables and for clauses are illustrated here with boxes.}
	\label{fig:rectilinearMax2SAT}
\end{figure}
\fi

Let $\phi$ be an arbitrary instance of planar monotone rectilinear $3$SAT, 
where $X = \{x_1, \dotsc, x_{n}\}$ is its set of Boolean variables and $\phi(X) = \{C_1, \dotsc, C_{m}\}$ is its set of clauses.
We construct from $\phi$ a temporal path $\mathcal{G_\phi}$ 
and prove \iflong (see~\cref{lem:two-directions}) \fi that there exists a truth assignment of $X$ which satisfies $\phi(X)$ 
if and only if the optimum value of $2$-TVC on $\mathcal{G_\phi}$ is at most $f(\mathcal{G_\phi})$. 
The exact value of $f(\mathcal{G_\phi})$ will be defined later.

\subsubsection{High-Level Description}
Given a representation (i.e.~embedding) $R_{\phi}$ of an instance $\phi$ of planar monotone rectilinear $3$SAT, 
we construct a $2$-dimensional array of the temporal path $\mathcal{G}_{\phi}$, where:
\begin{itemize}
	\item every variable (horizontal) line segment in $R_{\phi}$ is associated with one or more \emph{segment blocks} (to be formally defined below) in $\mathcal{G}_{\phi}$, and 
	\item every clause (horizontal) line segment in $R_{\phi}$, corresponding to the clause $C=(x_i \lor x_j \lor x_k)$ (resp.~$C=(\overline{x_i} \lor \overline{x_j} \lor \overline{x_k})$), 
	is associated with a clause gadget in $\mathcal{G}_{\phi}$, which
	consists of three edges (one for each of $x_i, x_j, x_k$), each appearing in $4$ consecutive time-steps, 
	together with two paths connecting them in the 2-dimensional array for $\mathcal{G}$ (we call these paths the \emph{clause gadget connectors}, for an illustration see~\cref{fig:NP-rectilinearConstruction}),
	\item every vertical line segment in $R_{\phi}$, connecting variable line segments to clause line segments, 
	is associated with an edge of $\mathcal{G}_{\phi}$ that appears in consecutive time-steps.
\end{itemize}

The exact description of the variables' and clauses' gadgets is given below; first, we need to precisely define the segment blocks.

\paragraph*{Segment blocks} 
are used to represent variables.
Every segment block consists of a path of length $7$ on vertices $(u_0, u_1, \dots , u_7)$,
where the first and last edges (\ie $u_0u_1$ and $u_6 u_7$) appear at 
$9$ consecutive time-steps starting at time $t$ and ending at time $t+8$,
with all other edges appearing only two times, \ie at times $t+1$ and $t + 7$.
\iflong For an example see \cref{fig:NP-varibleGadgets}.

\begin{figure}[h]
	\centering
	\includegraphics[width=0.3\linewidth]{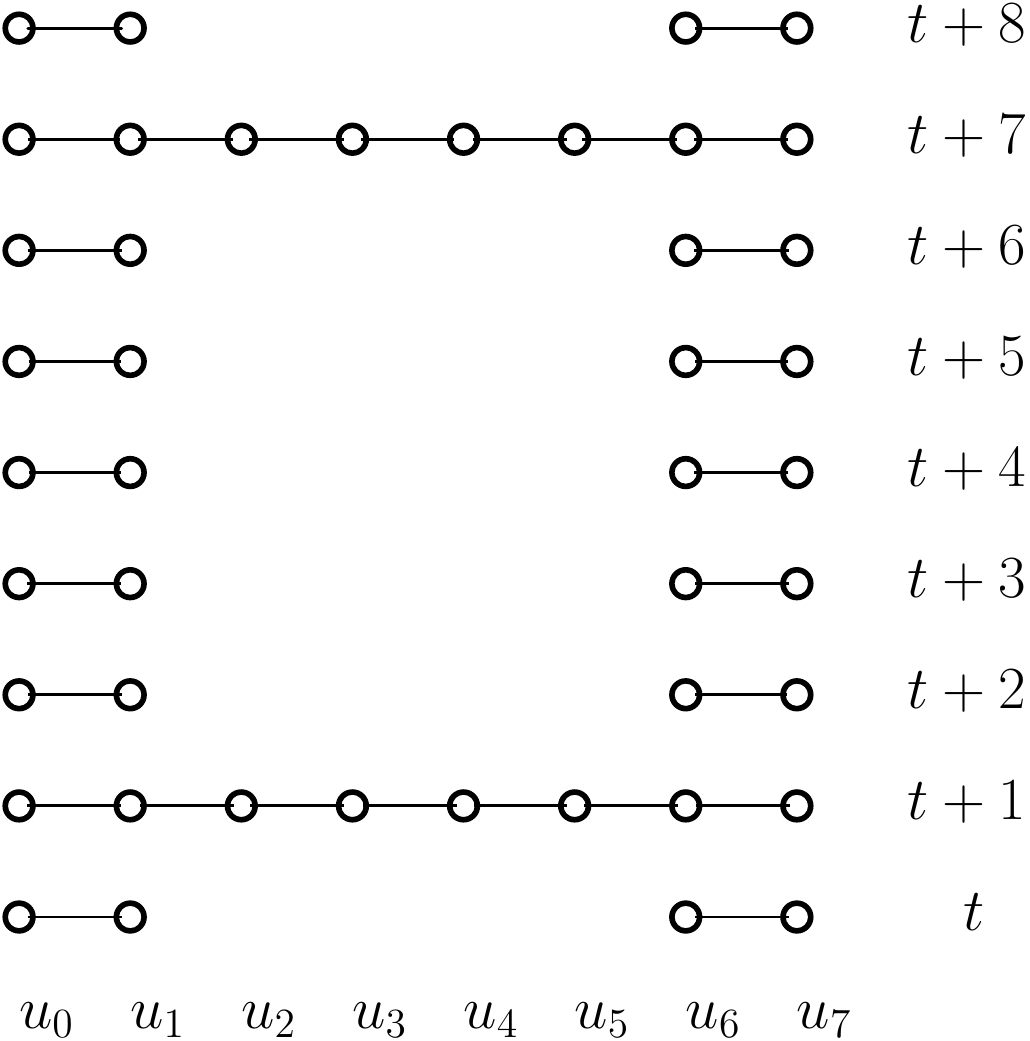}
	\caption{Example of a segment block construction.}
	\label{fig:NP-varibleGadgets}
\end{figure}\fi

Time-edges which correspond to the first and last appearances of $u_0u_1$ and $u_6u_7$ in a segment block are called \emph{dummy time-edges},
all remaining time-edges 
form two (bottom and top) horizontal paths, and 
two (left and right) vertical sequences of time-edges (which we call here \emph{vertical paths}), 
\iflong see \cref{fig:NP-variableGadgets-names}\fi.
\iflong
\begin{figure}[h]
	\centering
	\includegraphics[width=0.3\linewidth]{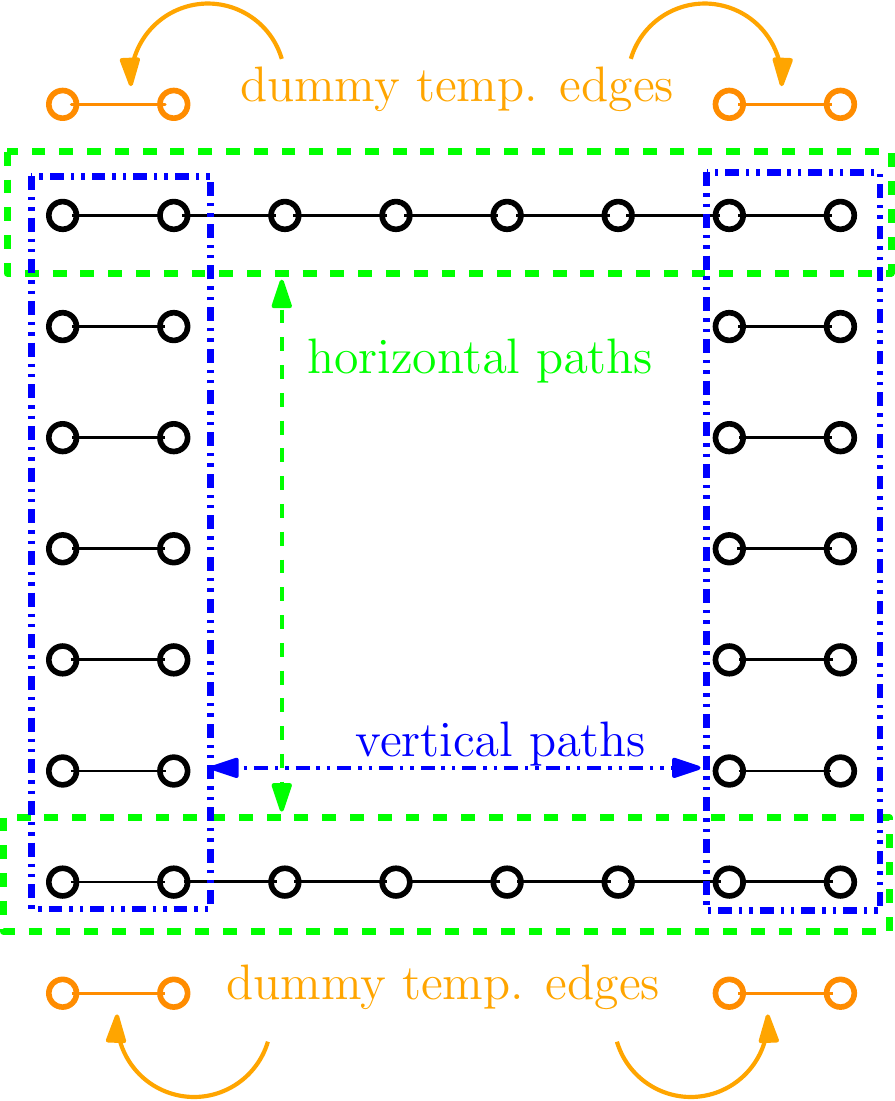}
	\caption{An example where dummy time-edges, and vertical and horizontal paths are depicted.}
	\label{fig:NP-variableGadgets-names}
\end{figure}\fi
Using the next technical lemma will allow us to model the two different truth values of each variable $x_i$ (True, resp.~False) via two different optimum solutions of 2-TVC on a segment block (namely the ``orange and green'', resp.~``orange and red'' temporal vertex covers of the segment block, see Figure~\ref{fig:NP-varibleGadgetsCover}).

\begin{lemma}\label{lem:NP-CoverVariableGadgets}
	There are exactly two different optimum solutions for $2$-TVC of a segment block,
	both of size $15$.
\end{lemma}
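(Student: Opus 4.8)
The plan is to reduce the statement to a transparent covering problem on a single segment block and then to establish, in turn, the size lower bound, the existence of two covers of that size, and the absence of any others. First I would make the $2$-TVC requirements fully explicit: with $\Delta=2$, for each window $W_i$ and each edge active inside it a cover must contain an appearance of one endpoint at a time in $W_i$ at which that edge is active. I would then split the obligations into those coming from the two \emph{long} edges $u_0u_1$ and $u_6u_7$, which are active in all nine slots $t,\dots,t+8$ and hence must be covered in every window (a ``window-hitting'' constraint along each of the two vertical paths), and those coming from the five \emph{short} edges $u_1u_2,\dots,u_5u_6$, which are active only at $t+1$ and $t+7$ and therefore impose a vertex-cover constraint on the internal path $u_1-u_2-\cdots-u_6$ once in the bottom snapshot (time $t+1$) and once in the top snapshot (time $t+7$). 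The bookkeeping observation to isolate early is exactly how the corner appearances, i.e.\ those at $u_1$ and $u_6$ in the two snapshots, can discharge a short-edge and a long-edge obligation at the same time, since this is the trade-off that controls the optimum.

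For the lower bound I would use a charging argument robust to this sharing. A single appearance lies in at most two consecutive windows, which forces a dedicated budget to cover each long edge across all windows, while covering the internal path in each snapshot requires a vertex cover of a path. The delicate point, which I would prove first, is that trying to reuse a corner appearance for both a long-edge and a short-edge obligation forces an additional appearance inside that snapshot, because covering the internal path while committing to both of its endpoints needs strictly more vertices than a free minimum vertex cover of that path. Summing the unavoidable contributions then yields the bound $15$, with equality attained only for a few tight sharing patterns, which I would list.

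For the upper bound I would exhibit the two covers indicated for the gadget, namely the common (``orange'') appearances that every optimum is forced to contain together with either the ``green'' or the ``red'' appearances, and verify window by window that each is a valid $2$-TVC of size exactly $15$. This step is routine once the two patterns are fixed, so its only role is to certify the value $15$ and to furnish two distinct optimal solutions.

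The hard part will be uniqueness, i.e.\ showing these two are the \emph{only} optima. My plan is to argue that optimality pins down the orange appearances outright: the dummy time-edges at the extreme slots, together with the window-hitting constraints on the long edges, leave essentially no slack in \emph{where} along the two vertical paths covers may be placed, fixing their positions up to a single global parity; similarly the two snapshot vertex covers are forced except for the choice of which endpoints of the internal path are used. I would then show that these residual choices are not independent: committing to one endpoint in the bottom snapshot propagates, through the shared corner appearances and the constraints linking the bottom snapshot to the vertical paths and on to the top snapshot, to a unique consistent completion, precisely as the two-phase structure of a minimum vertex cover of an even cycle admits only two global solutions. Proving that every local deviation either exceeds size $15$ or clashes with a neighbouring forced choice is where the bulk of the case analysis sits; combined with the two explicit covers above, it gives that the number of optimal solutions is exactly two.
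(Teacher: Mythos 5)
Your proposal is correct and completable, but it takes a genuinely different route from the paper's proof. The paper argues by sequential forced propagation: after noting that the four dummy time-edges pin down appearances at times $t$ and $t+8$, it branches on whether $u_0u_1$ is covered at time $t+1$ or at $t+2$, and shows that each choice cascades deterministically --- cover times $t+1,t+3,t+5,t+7$ on one vertical path force $t+2,t+4,t+6$ on the other, and force the snapshot covers $\{u_1,u_3,u_5\}$ (resp.\ $\{u_2,u_4,u_6\}$) at $t+1$ and $t+7$ --- yielding the two $15$-appearance solutions; uniqueness is then dispatched in a single informal sentence about mixing the two solutions. You instead set up a static accounting: window-hitting constraints for the two long edges (a set of cover times in $[t+1,t+7]$ meeting every length-$2$ window, with minimum $3$ realized uniquely by $\{t+2,t+4,t+6\}$, and minimum $4$ as soon as $t+1$ or $t+7$ is used), snapshot vertex covers of the internal path at $t+1$ and $t+7$ (minimum $3$ each), and a sharing discount available only at the corner appearances of $u_1$ and $u_6$. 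Your key sub-lemma --- that committing \emph{both} endpoints $u_1$ and $u_6$ to the internal path's cover raises its minimum size from $3$ to $4$ --- is exactly what rules out the ``share on both sides'' option ($16$ rather than $15$) and shows that $15$ is attained only when exactly one side shares at both $t+1$ and $t+7$; this makes the lower bound and the uniqueness structurally transparent where the paper's mixing sentence is hand-wavy, at the cost of a heavier explicit case enumeration. Your even-cycle analogy for the uniqueness phase is apt, since the non-dummy time-edges of the block form a closed loop whose tight covers alternate with a single global parity, like the two minimum vertex covers of an even cycle. One caveat you should state explicitly (the paper does so in a remark inside its proof): ``exactly two'' holds only after identifying solutions that differ in which endpoint covers a vertical-path or dummy time-edge at times other than $t+1$ and $t+7$ --- e.g.\ $(u_0,t+2)$ and $(u_1,t+2)$ discharge exactly the same obligations --- so without this convention the number of literal optimum sets exceeds two, and your uniqueness argument must either adopt the same identification or phrase uniqueness in terms of the cover times and the two snapshot covers rather than the vertex appearances themselves.
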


\begin{proof}
	Let $\mathcal{C}$ be a $2$-TVC of a segment block on vertices $u_0, \dots, u_7$ that starts at time $t$ and finishes at time $t + 8$.
	
	In order to cover the dummy time-edges in time windows $W_{t-1}$ and $W_{t+8}$ one of their endpoints has to be in $\mathcal{C}$.
	Now let us start with the covering of the first edge ($u_0u_1$) at time $t+1$.
	Since the dummy time-edges are covered, the edge $u_0u_1$ is covered in the time window $W_t$ 
	but it is not yet covered in the time window $W_{t+1}$. 
	We have two options: cover it at time $t+1$ or $t+2$.
	\begin{itemize}
		\item 
		Suppose that we cover the edge $u_0u_1$ at $t+1$, then the next time-step it has to be covered is $t+3$, the next one $t+5$, and the last one $t+7$.
		Now that the left vertical path is covered we proceed to cover the bottom and top horizontal paths.
		The middle $5$ edges, from $u_1$ to $u_6$, appear only at time-steps $t+1$ and $t + 7$.
		Since we covered the edge $u_0u_1$ at time $t+1$, we can argue that the optimum solution includes the vertex appearance $(u_1, t+1)$ and therefore the edge $u_1u_2$ is also covered.
		Extending this covering optimally to the whole path we need to add every second vertex to $\mathcal{C}$, \ie vertex appearances $(u_3,t+1)$ and $(u_5, t+1)$.
		Similarly, the same holds for the appearances of vertices $u_1, \dots , u_6$ at time $t + 7$.
		The last thing we need to cover is the right vertical path.
		Since the edge $u_6u_7$ is covered at time $t$, the next time-step we have to cover it is $t+2$, which forces the next cover to be at $t+4$ and the last one at $t+6$.
		
		In total $\mathcal{C}$ consists of $4$ endpoints of the dummy time-edges, $4$ vertices of the left and $3$ of the right vertical paths, $2$ vertices of the bottom and $2$ of the top horizontal paths.
		Altogether we used $11$ vertices to cover vertical and horizontal paths
		and $4$ for dummy time-edges.
		The above described $2$-TVC corresponds to the ``orange and red'' vertex appearances of the segment block depicted in the \cref{fig:NP-varibleGadgetsCover}.
		
		Let us also emphasize that, with the exception of times $t+1$ and $t + 7$, we do not distinguish between the solutions that use a different endpoint to cover the first and last edge.
		For example, if a solution covers the edge $u_0 u_1$ at time $t+2$ then we do not care which of $(u_0,t+2)$ or $(u_1,t+2)$ is in the TVC.
		
		\item
		Covering the edge $u_0u_1$ at time $t+2$ produces the $2$-TVC that is a mirror version of the previous one on the vertical and horizontal paths.
		More precisely, in this case 
		the covering consists of 
		$3$ vertices of the left and $4$ of the right vertical paths and again $2$ vertices of the bottom and $2$ of the top horizontal paths,
		together with $4$ vertices covering the dummy time-edges.
		
		This $2$-TVC corresponds to the ``orange and green'' vertex appearances of the segment block depicted in the \cref{fig:NP-varibleGadgetsCover}.
	\end{itemize}
	If we start with vertex appearances from one solution and add vertex appearances from the other solution
	then we
	either create a $2$-TVC of bigger size or leave some edges uncovered.
	Therefore, the optimum temporal vertex cover of any segment block consists of only ``orange and red'' or ``orange and green'' vertex appearances.
\end{proof}
	
	\begin{figure}[h]
		\centering
		\includegraphics[width=0.3\linewidth]{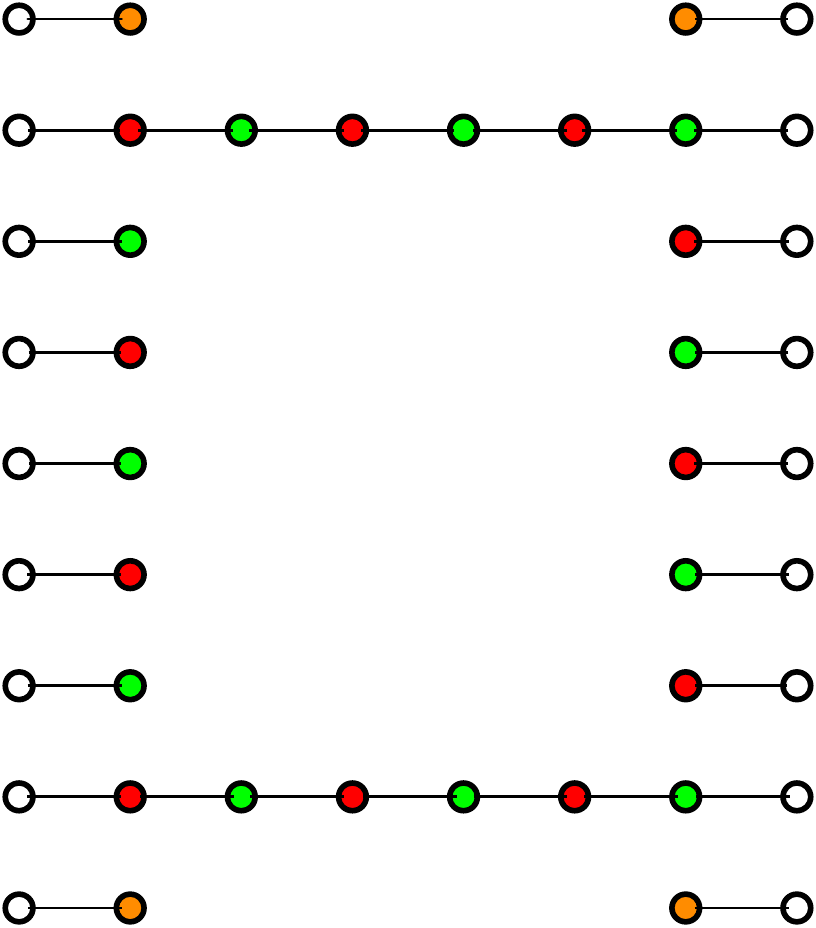}
		\caption{An example of two optimum covers of a segment block: (i)~with the ``orange and green''-colored, or 
		(ii)~with the ``orange and red''-colored vertex appearances.}
		\label{fig:NP-varibleGadgetsCover}
	\end{figure}

For each variable $x_i$ we create multiple copies of segment blocks, 
and some specific pairs of these segment blocks are connected to each other 
via the so-called ``horizontal bridges''. 
Two segment blocks, which are connected via a horizontal bridge, start at the same time $t$ but are built on different sets of vertices (\ie one is to the left of the other in the 2-dimensional array).
All the copies have to be created in such a way, that their optimum $2$-TVCs depend on each other.
In the following, we describe how to connect two different segment blocks (both for the same variable $x_i$). 
As we prove \iflong below (see~\cref{lem:SizeOfTVCinHorizontalDirection})\fi, 
there are two ways to optimally cover this construction: 
one using the ``orange and green'', and one using the ``orange and red'' vertex appearances (thus modelling the truth values True and False of variable $x_i$ in our reduction)\iflong, see~\cref{fig:NP-variableSegments}\fi. 

\iflong
\paragraph{Consistency in the horizontal direction.}
Let $\mathcal{H}$ be a temporal path on vertices 
$(u^1_0, u^1_1, \dots ,$ $u^1_7,w_1,w_2,w_3,w_4 , u^2_0, u^2_1, , \dots , u^2_7)$,
with
two segment blocks on vertices $(u^1_0, u^1_1, \dots , u^1_7)$ 
and $(u^2_0, u^2_1, \dots , u^2_7)$, respectively, 
starting at time $t$ and finishing at time $t+8$,
and let $P = (u^1_6, u^1_7, w_1,w_2,w_3,w_4 , u^2_0, u^2_1)$ be a path of length $7$, that appears exactly at times $t + 2$ and $t + 5$. 
For an example see \cref{fig:NP-variableSegments}. 
We refer to this temporal path $P$ as a \emph{horizontal bridge} between the two segment blocks.

\begin{figure}[h]
		\centering
		\includegraphics[width=0.6\linewidth]{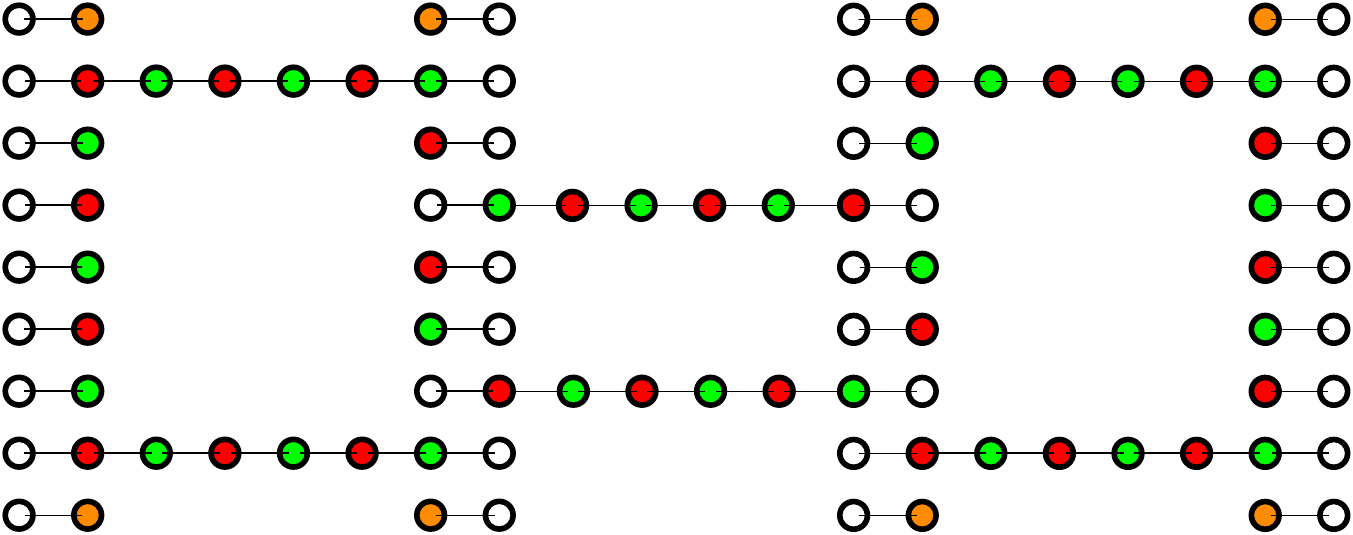}
\caption{An example of connecting two segment blocks in the horizontal direction. 
The two different ways to optimally cover this construction are 
(i)~with the green-colored and the orange-colored, or 
(ii)~with the red-colored and the orange-colored vertex appearances.}
	\label{fig:NP-variableSegments} 
\end{figure}

\begin{lemma}\label{lem:SizeOfTVCinHorizontalDirection}
	The temporal graph $\mathcal{H}$ has exactly two optimum $2$-TVCs, both of size $34$.
\end{lemma}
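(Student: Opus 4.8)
The plan is to bound the size of any $2$-TVC of $\mathcal{H}$ from below by decomposing along the vertex set, then to exhibit exactly two covers meeting that bound. First I would partition $V(\mathcal{H})$ into the block-$1$ vertices $U^1=\{u^1_0,\dots,u^1_7\}$, the block-$2$ vertices $U^2=\{u^2_0,\dots,u^2_7\}$, and the four bridge-internal vertices $W=\{w_1,w_2,w_3,w_4\}$. Since only vertices of $U^1$ are incident to the time-edges of the first segment block (the bridge reuses $u^1_6u^1_7$ only at times $t+2,t+5$, where it is already active), the restriction of any $2$-TVC $\mathcal{C}$ to $U^1$ is itself a $2$-TVC of a segment block, so \cref{lem:NP-CoverVariableGadgets} gives at least $15$ appearances on $U^1$, and symmetrically at least $15$ on $U^2$. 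For the bridge, the three internal edges $w_1w_2,w_2w_3,w_3w_4$ are active only at $t+2$ and $t+5$; as $\Delta=2$, the windows $W_{t+1},W_{t+2}$ force each to be covered at exactly $t+2$ and $W_{t+4},W_{t+5}$ at exactly $t+5$, and only $W$-vertices are incident to them. Hence there are at least two $W$-appearances at $t+2$ and at least two at $t+5$. As $U^1,U^2,W$ are pairwise disjoint, $|\mathcal{C}|\geq 15+15+4=34$.

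Next I would record, from the two optimum block patterns of \cref{lem:NP-CoverVariableGadgets} (call them state R and state G, covering the leftmost edge at $t+1$ resp.\ $t+2$), exactly when a size-$15$ block cover can include the boundary vertices touching the bridge. A short reading of the covering times of the vertical paths shows that $(u^1_7,t+2)$ can lie in an optimum cover of block $1$ precisely in state R and $(u^1_7,t+5)$ precisely in state G, while dually $(u^2_0,t+2)$ is available precisely in state G of block $2$ and $(u^2_0,t+5)$ precisely in state R. For the upper bound I would take both blocks in state R, choose the free boundary endpoints so that $(u^1_7,t+2)$ and $(u^2_0,t+5)$ are used, and add the $W$-appearances $(w_2,t+2),(w_4,t+2),(w_1,t+5),(w_3,t+5)$; a direct check shows every bridge edge is covered in each of its windows, giving a valid $2$-TVC of size $34$. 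The mirror choice with both blocks in state G (using $(u^2_0,t+2),(u^1_7,t+5)$ and $W$-appearances $(w_1,t+2),(w_3,t+2),(w_2,t+5),(w_4,t+5)$) yields the second cover.

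Finally, for uniqueness I would argue that tightness of the lower bound forces any size-$34$ cover to have exactly $15$ appearances on each block, hence an optimum pattern R or G on each by \cref{lem:NP-CoverVariableGadgets}, and exactly two $W$-appearances at each of $t+2,t+5$. At a fixed time these two $W$-appearances must form a size-$2$ vertex cover of the path $w_1w_2w_3w_4$, and the only such covers are $\{w_1,w_3\}$, $\{w_2,w_3\}$, $\{w_2,w_4\}$; none of them simultaneously covers both end-edges $u^1_7w_1$ and $w_4u^2_0$, so at least one end-edge must be covered by an available boundary vertex. Reading off the availability table, covering both end-edges at $t+2$ requires block $1$ in R \emph{or} block $2$ in G, while at $t+5$ it requires block $1$ in G \emph{or} block $2$ in R. A four-case check then shows both conditions hold only for the consistent combinations (R,R) and (G,G); the mixed combinations (R,G) and (G,R) each violate one condition, forcing a third covering appearance at the offending time and thus total size at least $35$. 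The main obstacle is exactly this consistency step: making precise that in a mixed state neither boundary vertex is paid for by its block cover at the bad time, so that the tight budget of four $W$-appearances genuinely cannot be met; this is where the careful bookkeeping of which time-steps each block's boundary vertical edge is covered in each state does all the work.
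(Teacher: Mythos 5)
Your proposal is correct, but it is organized quite differently from the paper's proof and is in fact the more rigorous of the two. The paper argues by propagation: it assumes the left block carries one of the two optimum covers from \cref{lem:NP-CoverVariableGadgets}, observes that the remaining six edges of the bridge path then have a unique optimum cover at time $t+2$ (every second vertex), notes that this forces the edge $u^2_0u^2_1$ to be covered at $t+2$ and hence forces the matching state on the right block, and counts $15+15+2+2=34$; the lower bound of $34$ and the exclusion of covers that do not restrict to optimum block covers (e.g.\ covers spending $16$ appearances on one block to save on the bridge) are left implicit. You instead prove an explicit additive lower bound via the disjoint partition $U^1 \cup U^2 \cup W$: the restriction to each $U^i$ is a $2$-TVC of a stand-alone segment block (the bridge only re-activates $u^1_6u^1_7$ and $u^2_0u^2_1$ at times where they are active anyway), hence $\geq 15$ by \cref{lem:NP-CoverVariableGadgets}, while the three inner bridge edges are incident only to $W$ and, with $\Delta = 2$ and appearances only at $t+2$ and $t+5$, force at least two $W$-appearances at each of these times; then tightness makes any $34$-cover split as $15/15/4$, and your availability table for $(u^1_7,t+2)$, $(u^1_7,t+5)$, $(u^2_0,t+2)$, $(u^2_0,t+5)$ (which I verified against the two block patterns) together with the enumeration $\{w_1,w_3\},\{w_2,w_3\},\{w_2,w_4\}$ of size-$2$ vertex covers of $w_1w_2w_3w_4$ correctly eliminates the mixed combinations (R,G) and (G,R) in your four-case check. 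What your route buys is that it actually establishes the ``exactly two'' claim, where the paper's propagation takes the decomposition of an optimum for granted; what the paper's route buys is brevity. One small step you should still make explicit: after restricting to (R,R) and (G,G), note the within-state forcing --- e.g.\ in (R,R) at time $t+2$ the appearance $(u^2_0,t+2)$ is unavailable, so the $W$-cover must contain $w_4$, hence equals $\{w_2,w_4\}$, which in turn forces $(u^1_7,t+2)$, and symmetrically at $t+5$ --- so that each consistent state admits a \emph{unique} $34$-cover (up to the endpoint conventions already adopted in \cref{lem:NP-CoverVariableGadgets}); this follows in one line from your own availability table.
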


\begin{proof}
	Let $\mathcal{H}$ consist of two segment blocks.
	Suppose that the first segment block, \ie left one, 
	is covered with the ``orange and green'' $2$-TVC. 
	Then the first edge $u^1_6 u^1_7$ of the path $P$
	does not have to be covered at time $t+2$
	as it is covered at times $t+1$ and $t+3$. 
	For the other $6$ edges of $P$, there exists a unique optimum $2$-TVC, that uses every second vertex at time $t+2$ and is of size $3$.
	Therefore we cover the edge $u^2_0u^2_1$ at time $t+2$ which enforces the ``orange and green'' $2$-TVC
	also on the right segment block.
	Now, the only remaining uncovered edges are four consecutive edges of $P$, from $w_1$ to $u_0^2$ at time $t+5$, which are optimally covered with two vertex appearances.
	Altogether we used $15 + 15$ vertex appearances, to cover both segment blocks and $2 + 2$ to cover $P$ at $t+2, t+5$.
\end{proof}

\cref{lem:SizeOfTVCinHorizontalDirection} ensures that, in an optimum $2$-TVC of the construction of~\cref{fig:NP-variableSegments}, 
either both the left and the right segment blocks contain the ``orange and green'' vertex appearances, or they both contain the ``orange and red'' vertex appearances. 
We can extend the above result to $d$ consecutive copies of the same segment block and get the following result.

\begin{corollary}\label{cor:NP-SizeOfTVCinHorizontalDirection}
	The temporal graph corresponding to $d$ copies of a segment block, where two consecutive are connected 
	via horizontal bridges,
	has exactly two optimum $2$-TVCs of size $19d -4$.
\end{corollary}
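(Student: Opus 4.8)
The plan is to prove \Cref{cor:NP-SizeOfTVCinHorizontalDirection} by induction on the number $d$ of segment blocks, using \Cref{lem:SizeOfTVCinHorizontalDirection} as the base case and as the inductive building block. First I would set up the construction precisely: we have $d$ segment blocks $B_1, B_2, \dots, B_d$, all starting at the same time $t$ and ending at $t+8$, where for each $1\leq \ell \leq d-1$ the blocks $B_\ell$ and $B_{\ell+1}$ are joined by a horizontal bridge (a path of length $7$ appearing exactly at times $t+2$ and $t+5$, as in \Cref{lem:SizeOfTVCinHorizontalDirection}). The two optimum $2$-TVCs we are aiming for are the global ``orange and green'' cover and the global ``orange and red'' cover, so the statement is really two claims bundled together: that an optimum cover has size exactly $19d-4$, and that there are exactly two such covers.

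For the size count, I would first verify the formula against \Cref{lem:SizeOfTVCinHorizontalDirection}: for $d=2$ the formula gives $19\cdot 2 - 4 = 34$, matching the lemma, and for $d=1$ it gives $15$, matching \Cref{lem:NP-CoverVariableGadgets}. The inductive step is the natural accounting: each new segment block contributes $15$ vertex appearances for its own cover, and each new horizontal bridge contributes $4$ (namely $2$ at time $t+2$ and $2$ at time $t+5$, as established in the proof of \Cref{lem:SizeOfTVCinHorizontalDirection}). Hence adding one block-plus-bridge to a chain of $\ell$ blocks adds $15 + 4 = 19$ appearances, giving the recurrence that yields $15 + 19(d-1) = 19d - 4$. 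The arithmetic here is routine; the content is in justifying that the appearances of neighbouring blocks and bridges do not overlap or interfere, which follows because the blocks occupy disjoint vertex sets and the bridge cover at $t+2$ and $t+5$ is counted separately.

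The main obstacle — and the part that deserves care — is establishing that there are \emph{exactly} two optimum covers, i.e.\ that the truth value chosen in one block propagates to force the same truth value in every block. This is precisely the ``consistency'' content of \Cref{lem:SizeOfTVCinHorizontalDirection}, which shows that a single horizontal bridge forces adjacent blocks to agree (both ``orange and green'' or both ``orange and red''). I would argue that consistency chains transitively: if $B_\ell$ is covered with, say, the ``orange and green'' cover, then the bridge between $B_\ell$ and $B_{\ell+1}$ forces $B_{\ell+1}$ to also use ``orange and green'' by the argument in \Cref{lem:SizeOfTVCinHorizontalDirection}. Applying this to each consecutive pair in turn shows that the choice in $B_1$ propagates to all of $B_2, \dots, B_d$, so the only two globally optimum covers are the all-green and all-red ones. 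The delicate point to check is that \Cref{lem:SizeOfTVCinHorizontalDirection} is genuinely \emph{local}: its forcing argument about a single bridge only used the covers of the two blocks it connects and the $6$ non-dummy edges of that bridge, so it applies verbatim to each bridge in the longer chain regardless of what happens elsewhere. I would note that any deviation from a uniform choice at some bridge would, by \Cref{lem:SizeOfTVCinHorizontalDirection}, either leave an edge uncovered or require strictly more than the budgeted appearances on that bridge, contradicting optimality and thereby ruling out any mixed solution.
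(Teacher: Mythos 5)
Your proposal is correct and follows essentially the same route as the paper: the paper's proof is exactly the accounting $15d + 4(d-1) = 19d-4$ with $15$ per segment block and $4$ per horizontal bridge, relying on \Cref{lem:SizeOfTVCinHorizontalDirection} for the per-bridge forcing. Your inductive framing and your explicit chaining argument for the ``exactly two'' claim merely spell out what the paper leaves implicit in the sentence preceding the corollary, so there is nothing substantively different to flag.
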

\begin{proof}
	The optimum $2$-TVC of each segment block consists of $15$ vertex appearances.
	A horizontal bridge is optimally covered using $4$ extra vertex appearances.
	Therefore all together we have $15 d + 4 (d-1) = 19d -4$ vertex appearances in the optimum $2$-TVC.
\end{proof}

\subsubsection*{Variable Gadget \label{sec:NP-paths-VariableSegments}}
From the planar, rectilinear embedding $R_\phi$ of $\phi$, we can easily fix the order of variables.
We fix the variables in the order they appear in the variable-axis,
starting from the left to the right. 

Let $d_i$ be the number of appearances of variable $x_i$ as a literal (\ie as $x_i$ and $\overline{x_i}$) in $\phi$.
For every variable $x_i$ we create $d_i$ copies of the segment block, 
which follow each other on a horizontal line and are connected via \emph{horizontal bridges}.
Between variable gadgets of two consecutive variables $x_i$ and $x_{i+1}$
we add $4$ vertices (without any additional time-edges).
All variable gadgets in $\mathcal{G_\phi}$ start and finish at the same time
\ie they lie on the same vertical line.

\begin{lemma}\label{lem:NPoddDistance-RightmostEdges}
	The distance between two rightmost (or two leftmost) edges of any pair of segment blocks in $\mathcal{G_\phi}$ is odd.
\end{lemma}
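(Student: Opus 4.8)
The plan is to prove this by a direct position-counting argument along the underlying path: once all segment blocks and the gaps between them are placed on a single left-to-right coordinate, the parity claim reduces to a one-line arithmetic check. First I would fix, for each vertex of the underlying path of $\mathcal{G}_\phi$, its index in the left-to-right ordering, and write $p$ for the position of a given block's first vertex. By definition every segment block occupies eight consecutive vertices $u_0, \dots, u_7$, with its leftmost edge $u_0u_1$ spanning offsets $0,1$ and its rightmost edge $u_6u_7$ spanning offsets $6,7$ from $p$. Thus the two distinguished edges of every block always sit at exactly the same offsets, regardless of which block we examine.

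The key step is to show that consecutive segment blocks (in the left-to-right order) are always separated by exactly four vertices, uniformly over the two ways in which blocks can be adjacent. If the two blocks belong to the same variable gadget they are joined by a horizontal bridge, whose underlying path is $(u^1_6, u^1_7, w_1, w_2, w_3, w_4, u^2_0, u^2_1)$; the only vertices strictly between the first block's last vertex $u^1_7$ and the second block's first vertex $u^2_0$ are $w_1, w_2, w_3, w_4$, i.e.\ exactly four. If instead the two blocks belong to consecutive variable gadgets $x_i$ and $x_{i+1}$, then by construction exactly four vertices are inserted between the gadgets, so the last block of $x_i$ and the first block of $x_{i+1}$ are again separated by four vertices. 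Hence in all cases consecutive blocks are separated by four vertices.

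Combining the block width $8$ with the gap width $4$, the first vertices of two consecutive blocks differ in position by $8+4 = 12$; iterating, the first vertices of any two segment blocks that are $r \geq 1$ apart in the ordering differ by exactly $12r$. I then compute the distance between their two rightmost edges as the number of edges of the underlying path strictly between them, equivalently the graph distance between their two nearest endpoints. For rightmost edges these nearest endpoints are the left block's $u_7$ (position $p+7$) and the right block's $u_6$ (position $p + 12r + 6$), so the distance equals $(p+12r+6) - (p+7) = 12r - 1$, which is odd. For two leftmost edges the computation is identical up to the offsets: the nearest endpoints are the left block's $u_1$ (position $p+1$) and the right block's $u_0$ (position $p+12r$), giving distance $(p+12r) - (p+1) = 12r-1$, again odd.

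The whole argument is bookkeeping, so the only genuine obstacle is getting the offsets and the single off-by-one correct, together with verifying that the gap of four is truly uniform across both the horizontal-bridge case and the between-variable case. The underlying reason for the odd parity is that both the block width ($8$) and the gap width ($4$) are even, so the period is the even number $12$, while passing from the right endpoint $u_7$ of one edge to the left endpoint $u_6$ of the paired edge in the next period (and likewise $u_1$ to $u_0$ for leftmost edges) drops the offset by exactly $1$; this single net $-1$ against an even multiple $12r$ is precisely what forces every such distance to be odd.
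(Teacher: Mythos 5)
Your proof is correct and follows essentially the same approach as the paper: a direct count of vertices between blocks using the block width $8$ and the uniform gap of $4$ vertices (both for horizontal bridges and between variable gadgets), yielding an odd distance. Your closed form $12r-1$ for blocks $r$ apart in the ordering agrees exactly with the paper's case analysis ($11$ for consecutive blocks and $2(6k+5)+1 = 12k+11$ with $k = r-1$ blocks in between), so the two arguments coincide up to presentation.
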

\begin{proof}
	Let $(u_6 u_7, t)$ and $(v_6 v_7,t)$ be the rightmost time-edges of two segment blocks $\mathcal{X}$ and $\mathcal{Y}$ in $\mathcal{G_\phi}$.
	Without loss of generality, we may assume that $\mathcal{X}$ appears before/left of the $\mathcal{Y}$ in $\mathcal{G_\phi}$.
	
	If $\mathcal{X}$ and $\mathcal{Y}$ are consecutive (\ie right next to each other) 
	then there are $4$ vertices between them and two blocks are on the distance $5$.
	Since $(v_6 v_7,t)$ is the last (rightmost) time-edge in $\mathcal{Y}$
	it is on the distance $6$ from the beginning of the block
	and therefore on the distance $11$ from $(u_6 u_7, t)$ .
	
	If there are $k$ segment blocks between $\mathcal{X}$ and $\mathcal{Y}$, there are $8 k + 4 (k+1)$ vertices between the end of $\mathcal{X}$
	and the beginning of the $\mathcal{Y}$.
	Since $(v_6 v_7,t)$ is the last (rightmost) time-edge in $\mathcal{Y}$
	there are $6$ other vertices before it in the block.
	Therefore, there are $12k + 10$ vertices
	between
	the two edges $(u_6 u_7, t)$ and $(v_6 v_7,t)$, 
	so they are on the distance $2(6k +5) +1$.
	
	Similarly holds for the leftmost time-edges $(u_0 u_1, t)$ and $(v_0 v_1,t)$ of two segment blocks $\mathcal{X}$ and $\mathcal{Y}$ in $\mathcal{G_\phi}$.
\end{proof}

\subsubsection*{Vertical Line Gadget}
A vertical line gadget is used to connect a variable gadget to the appropriate clause gadget.
It consists of one edge appearing in $2k$ consecutive time-steps, where $k$ is a positive integer. The 
value of $k$ for each clause gadget will be specified later; it depends on the embedding $R_{\phi}$ of the input formula $\phi$. 
More precisely, let $\mathcal{X}$ be a segment block for the literal $x_i$ (resp.~$\overline{x_i}$) on vertices $(u_0, u_1, \dots , u_7)$, which starts at time $t$ and finishes at time $t' = t+8$. 
Then the \emph{vertical line gadget} $\mathcal{V}$ of this segment block consists of just the $2k$ appearances 
of the edge $u_6u_7$ from time $t'$ to time $t' + 2k -1$ (resp.~of the edge $u_0u_1$ from time $t-2k +1$ to time $t$).

\begin{lemma}\label{lem:NP-verticalLineGadet}
	Let $\mathcal{V}$ be any vertical line gadget, whose edge appearances are between time $t_0$ and time $t_0 + 2k -1$.
	Then a minimum $2$-TVC of $\mathcal{V}$ in the time windows 
	from $W_{t_0}$ to $W_{t_0 + 2k - 2}$ 
	is of size $k$.
\end{lemma}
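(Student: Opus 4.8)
The plan is to observe that $\mathcal{V}$ consists of a single edge $e$ (either $u_6 u_7$ or $u_0 u_1$) that is active precisely at the $2k$ consecutive times $t_0, t_0+1, \ldots, t_0+2k-1$. Covering $e$ within one window then reduces to placing a single vertex appearance at some active time inside that window, and using both endpoints of $e$ at the same time never helps. Thus a $2$-TVC restricted to the windows $W_{t_0}, \ldots, W_{t_0+2k-2}$ is, without loss, described by the set $S \subseteq \{t_0, \ldots, t_0+2k-1\}$ of times at which we cover $e$, and its size equals $|S|$. Since $\Delta=2$, each window $W_i = [i,i+1]$ with $t_0 \le i \le t_0+2k-2$ has $e$ active at both of its slots, so the covering requirement is exactly that every consecutive pair $\{i,i+1\}$ meets $S$. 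First I would make this reduction precise, reducing the lemma to bounding the minimum size of such a set $S$.

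For the lower bound I would exhibit $k$ pairwise disjoint windows among $W_{t_0},\ldots,W_{t_0+2k-2}$, namely $W_{t_0+2j}$ for $j=0,1,\ldots,k-1$. Consecutive indices here differ by $2$, so the intervals $[t_0+2j, t_0+2j+1]$ are pairwise disjoint (and in fact partition $\{t_0,\ldots,t_0+2k-1\}$). A vertex appearance occupies a single time slot and hence can lie in at most one of these disjoint windows, while each of the $k$ windows needs at least one covering appearance since $e$ is active inside it. Therefore any $2$-TVC over these windows uses at least $k$ appearances.

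For the upper bound I would give an explicit cover of size $k$ by placing an appearance of one fixed endpoint of $e$ at every second active time, \eg at $t_0+1, t_0+3, \ldots, t_0+2k-1$. A parity check on $i - t_0$ shows each window $W_i$ with $t_0 \le i \le t_0+2k-2$ contains a chosen time: if $i - t_0$ is odd then $i$ itself is chosen, and if it is even then $i+1$ is chosen (and indeed $i+1 \le t_0+2k-1$). Combining the two bounds yields that the minimum is exactly $k$.

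I do not anticipate a real obstacle here. The only points needing care are (i) justifying that a single appearance per covered slot suffices for one edge, so that minimizing the number of appearances coincides with minimizing $|S|$, and (ii) verifying that the $k$ witness windows are genuinely disjoint and that no single appearance can be charged to two of them. Both follow immediately from $\Delta=2$ and the consecutive-interval structure of the appearances of $e$.
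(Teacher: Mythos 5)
Your proposal is correct and follows essentially the same route as the paper: the upper bound via an explicit cover at every second time slot ($t_0+1, t_0+3, \dots, t_0+2k-1$) is exactly one of the paper's two constructions (the paper exhibits both parity classes since they later serve as the ``green'' and ``red'' covers), and both lower bounds are one-line arguments of the same elementary kind. The only cosmetic difference is that the paper counts globally --- any appearance covers at most two of the $2k-1$ windows, so $k-1$ appearances cover at most $2k-2$ of them --- whereas you pack $k$ pairwise disjoint witness windows $W_{t_0+2j}$, $j=0,\dots,k-1$, each forcing its own appearance; the two arguments are interchangeable here and equally short.
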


\begin{proof}
	To cover an edge in a specific time-step we use just one vertex appearance.
	Since $\mathcal{V}$ consists of $2k$ consecutive appearances of the same edge $u_xu_{x+1}$, 
	containing in the 2-TVC an appearance of $u_x$ or $u_{x+1}$ at a time $t$ (where $t_0+1 \leq t\leq t_0+2k-2$) 
	temporally covers the edge $u_xu_{x+1}$ in two time windows, namely $W_{t-1}$ and $W_{t}$. 
	Note that there are $2k -1$ time windows between $W_{t_0}$ to $W_{t_{0} + 2k - 2}$. 

	 
	Suppose that there is an optimum $2$-TVC of $\mathcal{V}$ of size at most $k-1$.
	This $2$-TVC then can cover the edge $u_xu_{x+1}$ in at most $2(k-1)$ time windows,
	which leaves at least one time window uncovered.
	Therefore the size of an optimum $2$-TVC is at least $k$.
	
	Let us now build two minimum $2$-TVCs of $\mathcal{V}$, both of size $k$.
	We observe the following two options.
	\begin{itemize}
		\item Suppose we cover $u_xu_{x+1}$ at time $t_0$. Then we can cover it also at times $t_0+2, t_0+4, \dots, t_0+2k-2$, and thus $u_xu_{x+1}$ is covered in all time windows between $W_{t_0}$ to $W_{t_{0} + 2k - 2}$ by using exactly $k$ vertex appearances.
		This $2$-TVC corresponds to the red-colored vertex appearances of the vertical line gadget depicted in the \cref{fig:NP-verticalLineGadgetsCover}.
		\item Suppose we cover $u_xu_{x+1}$ at time $t_0+1$ (instead of time $t_0$).
		Then we can cover it also at times $t_0+3, t_0+5, \dots, t_0+2k-1$, and thus $u_xu_{x+1}$ is covered again in all time windows between $W_{t_0}$ to $W_{t_{0} + 2k - 2}$ by using exactly $k$ vertex appearances.
		This $2$-TVC corresponds to the green-colored vertex appearances of the vertical line gadget depicted in the \cref{fig:NP-verticalLineGadgetsCover}.
	\end{itemize} 
\end{proof}

	\begin{figure}[h]
	\centering
	\includegraphics[width=0.22\linewidth]{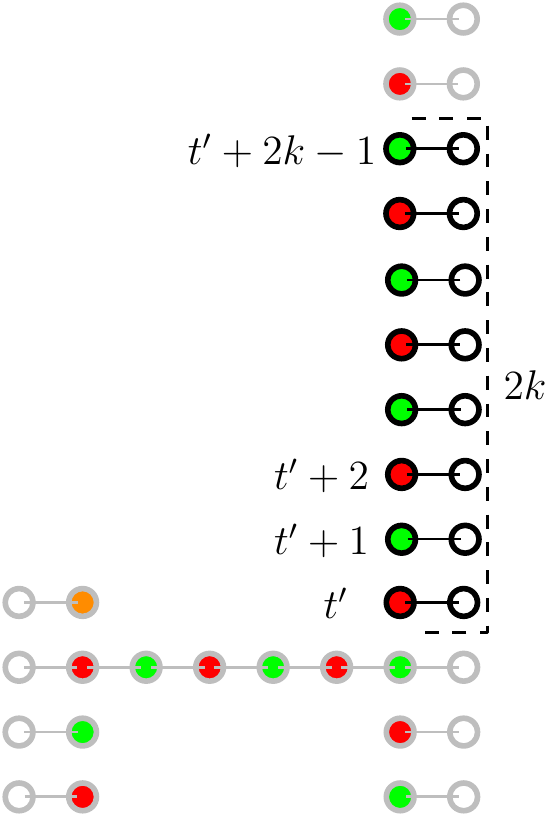}
	\caption{An example of two optimum covers of a vertical line gadget: (i)~with the green-colored, or 
		(ii)~with the red-colored vertex appearances.}
	\label{fig:NP-verticalLineGadgetsCover}
\end{figure}

\subsubsection*{Clause Gadget}
Without loss of the generality, we can assume that in every 
positive clause $(x_i \lor x_j \lor x_k)$ (resp.~negative clause $(\overline{x_i} \lor \overline{x_j} \lor \overline{x_k})$) 
the literals are ordered such that 
$i \leq j \leq k$.
Every clause gadget corresponding to a positive 
(resp.~negative
) clause consists of:
\begin{itemize}
	\item three edges appearing in $4$ consecutive time-steps,
these edges correspond to the rightmost (leftmost, in the case of a negative clause) edge of a segment block of each variable,
	\item two paths $P_1$ and $P_2$, each of odd length, such that 
$P_1$ (resp.~$P_2$) connects the second-to-top (or~second-to-bottom, in the case of a negative clause) newly added edges above (below, in the case of a negative clause) the segment blocks of $x_i$ and $x_j$ (resp.~of $x_j$ and $x_k$). 
We call paths $P_1$ and $P_2$ \emph{clause gadget connectors}.
\end{itemize}
For an example see \cref{fig:NP-clauseCovering}.

\begin{figure}[h]
	\centering
	\includegraphics[width=0.5\linewidth]{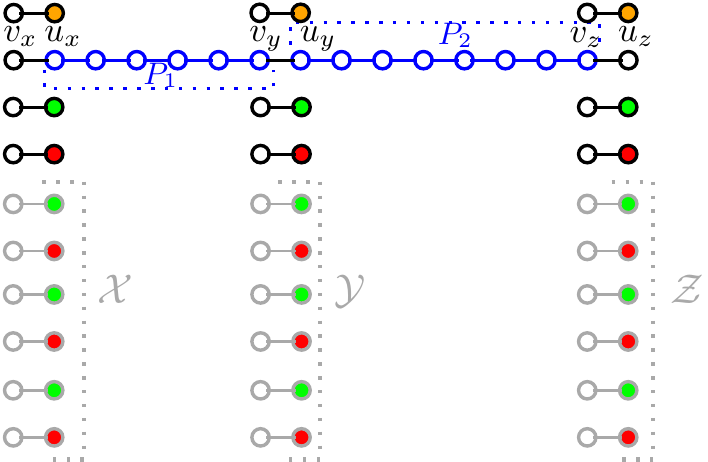}
	\caption{An example of a positive clause gadget, depicted with vertical line gadgets $\mathcal{X, Y, Z}$ connected to it.}
	\label{fig:NP-clauseCovering}
\end{figure}

\begin{lemma}\label{lem:NP-reductionTVCclauseEdge}
Let $C$ be a positive clause $(x_i \lor x_j \lor x_k)$ (resp.~negative clause $(\overline{x_i} \lor \overline{x_j} \lor \overline{x_k})$)
and let $\mathcal{C}$ be its corresponding clause gadget in $\mathcal{G_\phi}$. 
Let the clause gadget connectors $P_1$ and $P_2$ have lengths $p_1$ and $p_2$, respectively. 
If we temporally cover at least one of the three vertical line gadgets connecting to $\mathcal{C}$ with the green (resp.~red) vertex appearances, 
then $\mathcal{C}$ can be covered with exactly $7+\frac{p_1+p_2}{2}$ vertex appearances; otherwise we need at least $8+\frac{p_1+p_2}{2}$ vertex appearances to temporally cover $\mathcal{C}$.
\end{lemma}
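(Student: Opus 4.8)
The plan is to prove the two halves of the statement separately: an explicit construction establishing the upper bound $7+\frac{p_1+p_2}{2}$ when at least one incident vertical line gadget is coloured green (resp.\ red, for a negative clause), and a matching lower bound $8+\frac{p_1+p_2}{2}$ when none of the three is. Throughout I would fix $\Delta=2$, write each of the three clause edges as being active on four consecutive slots $t,t+1,t+2,t+3$, and recall from \cref{lem:NP-verticalLineGadet} that a green cover of a vertical line gadget places its topmost vertex appearance on the last (odd-offset) slot, whereas a red cover places it one slot earlier. This single-slot shift at the junction of each vertical line gadget with its clause edge is the whole source of the ``$7$ versus $8$''.

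For the upper bound I would first analyse one clause edge in isolation. An edge active on $t,t+1,t+2,t+3$ must be covered in the five windows $W_{t-1},\dots,W_{t+3}$; since each appearance $(v,s)$ covers exactly $W_{s-1}$ and $W_s$, covering $W_{t-1}$ forces an appearance at slot $t$ and covering $W_{t+3}$ forces one at slot $t+3$, so three appearances are needed in isolation, but only two suffice once the bottom window $W_{t-1}$ is already covered from below by a green vertical line gadget. I would then give the explicit cover: assuming the gadget of $x_i$ is green (and treating the analogous cases where instead $x_j$ or $x_k$ is green), cover that clause edge with two appearances, cover the other two clause edges, and cover each connector $P_\ell$ at a single common time slot. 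Because $P_1,P_2$ have odd lengths and their endpoints coincide with vertices already chosen on the clause edges — with the middle variable $x_j$ shared between the two connectors — each connector of length $p_\ell$ is paid for with $\frac{p_\ell-1}{2}$ fresh internal appearances together with shared endpoints, and the bookkeeping collapses to exactly $7+\frac{p_1+p_2}{2}$. I would verify this count against the window table above and check that every edge of $\mathcal{C}$ is covered in every relevant window.

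For the lower bound I would assume all three incident vertical line gadgets are red (resp.\ green), so that no clause edge receives free coverage of its bottom window $W_{t-1}$ from below, and argue that every $2$-TVC restricted to $\mathcal{C}$ uses at least $8+\frac{p_1+p_2}{2}$ appearances. This is a counting/parity argument: I would lower-bound the cost on the connectors by $\frac{p_1+p_2}{2}$ using their odd length (a path on $p_\ell+1$ vertices at a fixed slot needs $\lceil p_\ell/2\rceil$ vertices, and the only way to reach the optimistic value is to borrow an endpoint that is simultaneously the ``top'' appearance forced on the adjacent clause edge), and then show that forcing all three clause edges to cover their own bottom windows costs three extra forced appearances which, because the junction slots now have the wrong (red) parity, cannot all be amortised into the connector endpoints; one of them survives, giving the extra $+1$.

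The main obstacle I anticipate is the tightness of this lower bound. Unlike the clean gadgets of \cref{lem:NP-CoverVariableGadgets} and \cref{lem:SizeOfTVCinHorizontalDirection}, here the three clause edges and the two odd-length connectors share endpoints, so a naive per-object count overcounts shared appearances, and one must rule out global covers that exploit the freedom of which endpoint covers each clause edge to save more than the single appearance the statement allows. Making the parity of the odd-length connectors $P_1,P_2$ interact correctly with the red/green junction parity — so that exactly one extra appearance, and no more, is forced in the all-wrong-colour case — is the delicate step; I expect to handle it by pinning down, window by window at each junction, which vertex appearances are forced, and then invoking the odd lengths of $P_1$ and $P_2$ to block any further savings.
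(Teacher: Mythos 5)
Your overall strategy coincides with the paper's proof: three appearances are forced at time $t+3$ by the window $W_{t+3}$; each clause edge needs one appearance at $t$ (red junction) or $t+1$ (green junction); and the remaining $W_{t+1}$-coverage of each red edge is placed at $t+2$ so that it can double as an endpoint of a connector cover, with the odd lengths of $P_1,P_2$ (guaranteed by \cref{lem:NPoddDistance-RightmostEdges}) giving $\frac{p_\ell+1}{2}$ per connector and a connector whose cover is forced to contain \emph{both} endpoints costing one more. However, your proposal contains a concrete structural error sitting exactly at the delicate point you yourself flag: you assert that the middle variable $x_j$ is ``shared between the two connectors.'' It is not. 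In the construction, $P_1$ runs from $u_x$ to $v_y$ and $P_2$ runs from $u_y$ to $v_z$, where $v_yu_y$ is the middle clause edge; hence $P_1$ and $P_2$ are vertex-disjoint, separated by the middle clause edge. The entire ``exactly $+1$'' accounting in the all-red case rests on this disjointness: the forced $t+2$ appearances $(u_x,t+2)$ and $(v_z,t+2)$ can each be absorbed for free into a one-endpoint minimum cover of $P_1$, resp.\ $P_2$, but the middle edge's $t+2$ appearance (at $v_y$ or $u_y$) can only land on a connector which then contains both of its endpoints, costing exactly one extra vertex appearance. Under your shared-vertex picture, the middle appearance would either serve both connectors at once (suggesting no surplus at all) or force both connectors into the both-endpoints regime (suggesting $+2$); either way the target value $8+\frac{p_1+p_2}{2}$ would not come out, and the matching upper bound $7+\frac{p_1+p_2}{2}$ would also be mis-bookkept (note moreover that your blanket claim of $\frac{p_\ell-1}{2}$ ``fresh'' appearances per connector only holds when the adjacent junction is red; a green clause edge has no appearance at $t+2$ to donate, as in the all-green case where each connector costs the full $\frac{p_\ell+1}{2}$).

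Two further arithmetic slips compound this in your lower bound. First, the connectors appear at the single time-step $t+2$ and are disjoint, so any feasible cover must contain a static vertex cover of each, i.e.\ at least $\frac{p_1+1}{2}+\frac{p_2+1}{2}=\frac{p_1+p_2}{2}+1$ vertex appearances --- not the baseline $\frac{p_1+p_2}{2}$ you state. Second, the ``three extra forced appearances'' arising from the bottom windows $W_{t-1}$ live at slot $t$ (or $t-1$) and can never be amortised into the connectors, which exist only at $t+2$; the amortisable appearances are the ones needed for the middle windows $W_{t+1}$, and each of those may be placed at $t+1$ (never absorbable) or at $t+2$ (absorbable only subject to the disjointness constraint above). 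Your attribution of the surplus to the ``wrong (red) parity'' of the junction slots conflates these two families of forced appearances. With these corrections your plan becomes essentially the paper's proof; as written, however, the step ``one of them survives, giving the extra $+1$'' is derived from the wrong incidence structure between the connectors and the middle clause edge and would fail.
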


\begin{proof}
	Let $\mathcal{X, Y, Z}$ be the vertical line gadgets connecting the segment blocks of variables $x_i, x_j, x_k$ to the clause gadget, respectively 
	and let $C$ be a positive clause.
	Suppose that $\mathcal{C}$ starts at time $t$.
	We denote with $v_x u_x, v_y u_y, v_z u_z$ the underlying edges of
	$\mathcal{X},\mathcal{Y} ,\mathcal{Z}$ respectively.
	The clause gadget consists of edges $v_x u_x, v_y u_y, v_z u_z$ appearing in consecutive time-steps from $t$ to $t+3$,
	together with
	the clause gadget connectors, \ie
	a path $P_1$ of length $p_1$ from $u_x$ to $v_y$ at time $t+2$ and
	a path $P_2$ of length $p_2$ from $u_y$ to $v_z$ at time $t+2$.
	Note $p_1$ and $p_2$ are odd (see \cref{lem:NPoddDistance-RightmostEdges}).
	
	A $2$-TVC of $\mathcal{C}$ 
	always covers all three time-edges at time $t+3$, as the time window $W_{t+3}$ needs to be satisfied.	
	If $\mathcal{X}$ is covered with the red vertex appearances, then 
	we need to cover its underlying edge at time $t$, to satisfy the time window $W_{t-1}$.
	Besides that, we also need one extra vertex appearance for the time window $W_{t+1}$. 
	Without loss of generality, we can cover the edge at time $t+2$.
	If $\mathcal{X}$ is covered with the green vertex appearances,
	then the edge $v_xu_x$ is already covered in the time window $W_{t-1}$, so we cover it at time $t+1$, 
	which satisfies the remaining time windows $W_t$ and $W_{t+1}$.
	Similarly it holds for $\mathcal{Y} ,\mathcal{Z}$.
	
	We still need to cover the clause gadget connectors $P_1$ and $P_2$.
	Since $P_1$ and $P_2$ are paths of odd length appearing at only one time-step,
	their optimum $2$-TVC is of size $(p_1 + 1) /2$ and $(p_2 + 1) /2$, respectively.
	Therefore, if one endpoint of $P_1$ or $P_2$ is in the $2$-TVC, then to cover them optimally the other endpoint cannot be in the cover.
	
	If $\mathcal{Y}$ is covered with the red $2$-TVC, then one of the endpoints of the time-edge $(v_y u_y, t + 2)$
	has to be in the $2$-TVC,
	in the other case, the time-edge is already covered by a vertex appearance in the previous time-step.
	If $\mathcal{X}$ (resp.~  $\mathcal{Z}$) is covered with the red $2$-TVC, the first (resp.~ last) time-edge of $P_1$ (resp.~$P_2$) has to be covered,
	thus vertex appearance $(u_x, t + 2)$ (resp.~$(v_z,t+2)$) is in the $2$-TVC.
	Therefore if all $\mathcal{X,Y,Z}$ are covered with the red $2$-TVC 
	we use $(p_1 + p_2) /2 +2$ vertex appearances to cover $P_1$ and $P_2$.

	Altogether we need
	\begin{itemize}
		\item one vertex appearance for each edge at time $t+3$,
		\item one vertex appearance for each edge at time-steps $t$ and $t+1$ and
		\item either $(p_1 + p_2) /2 +1$ or $(p_1 + p_2) /2 +2$ vertex appearances to cover the clause gadget connectors.
	\end{itemize}
	We have $8 + \frac{p_1 + p_2}{2}$ vertex appearances to cover $\mathcal{C}$ if all three 
	$\mathcal{X,Y,Z}$ are covered with the ``orange and red'' $2$-TVC 
	and $7 + \frac{p_1 + p_2}{2}$ in any other case.
	
	Similarly, it holds when $C$ is a negative clause.
\end{proof}
\fi

\begin{figure}[h!]
	\centering
	\includegraphics[width=0.8\linewidth]{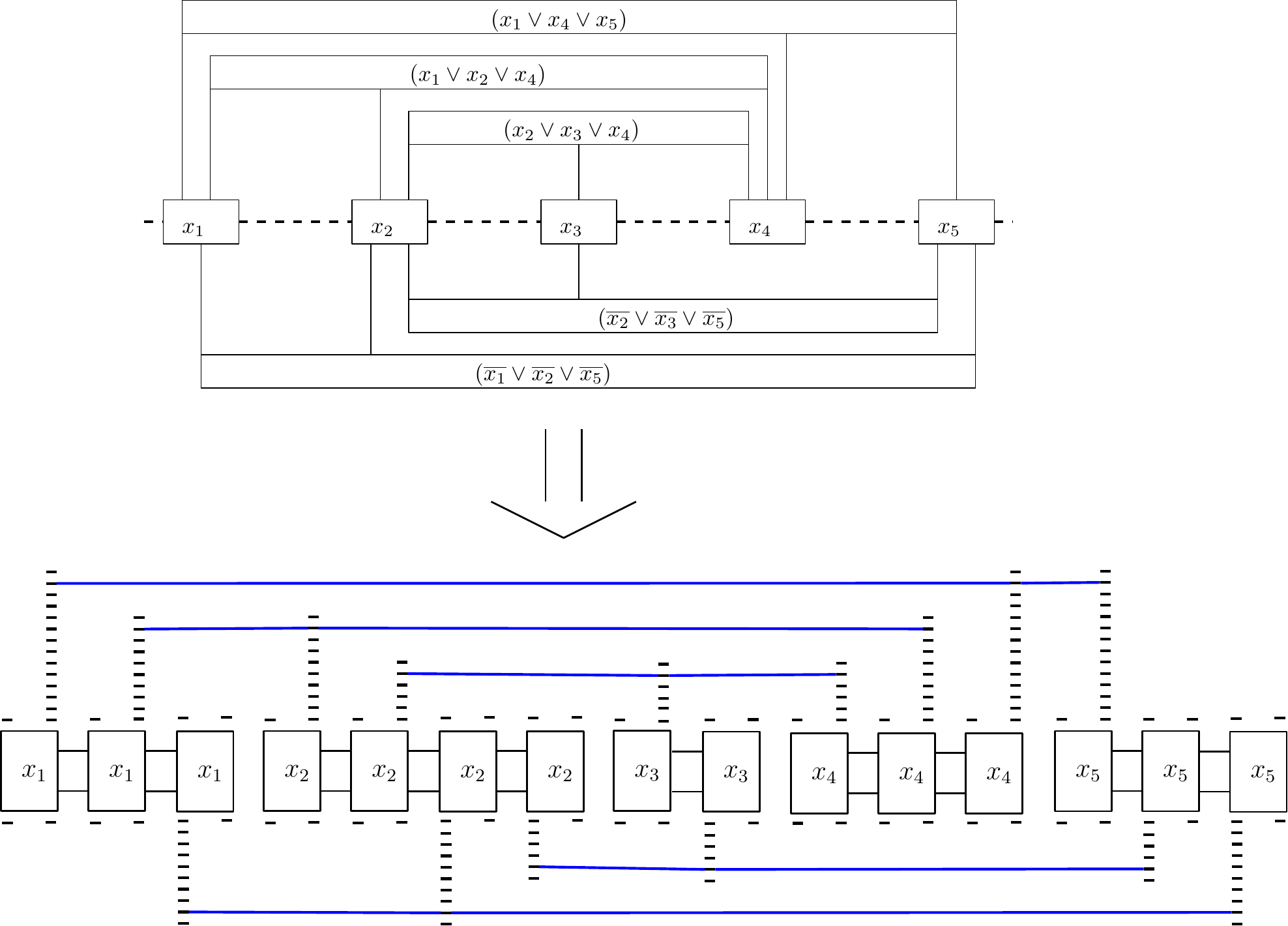}
	\caption{An example of the construction of a temporal graph from a planar rectilinear embedding of monotone 3SAT.}
	\label{fig:NP-rectilinearConstruction}
\end{figure}

\iflong
\subsubsection*{Creating a Temporal Path $\mathcal{G_\phi}$}
Before we start with the detailed construction we need to fix the following notation.
\paragraph{Notation.}
From the planar, rectilinear embedding $R_\phi$ of $\phi$, we can easily fix the order of clauses.
We fix the clauses by first fixing all the positive clauses and then all negative ones using the order they appear in the $R_\phi$.
If $C_i, C_j$ are two positive (resp.~negative) clauses and $i < j$, then the clause $C_i$ is above (resp.~below) the clause $C_j$ in the $R_\phi$.
For example in \cref{fig:rectilinearMax2SAT} we have
$C_1 = (x_1 \lor x_4 \lor x_5), C_2 = (x_1 \lor x_2 \lor x_4), C_3 = (x_2 \lor x_3 \lor x_4), C_4 = (\overline{x_1} \lor \overline{x_2} \lor \overline{x_5}), C_5 = (\overline{x_2} \lor \overline{x_3} \lor \overline{x_5})$.

Recall that $d_i$ denotes the number of appearances of the variable $x_i$ as a literal (\ie either $x_i$ or $\overline{x_i}$) in $\phi$. 
For a clause $C_a=(x_i \lor x_j \lor x_k)$ or $C_a = (\overline{x_i} \lor \overline{x_j} \lor \overline{x_k})$ 
we denote with $g^a_i, g^a_j$ and $g^a_k$
the appearances of the literals $x_i,x_j$ and $x_k$ (resp.~$\overline{x_i},\overline{x_j}$ and $\overline{x_k}$) in $C_a$,
\ie $x_i$ (resp.~$\overline{x_i}$) appears in $C_a$ for the $g^a_j$-th time, 
$x_j$ (resp.~$\overline{x_j}$) appears $g^a_j$-th time and
$x_k$ (resp.~$\overline{x_k}$) appears $g^a_k$-th time.
With $m_1$ we denote the number of positive and with $m_2$ the number of negative clauses in $\phi$.
For every clause $C_a$, we define also the level $\ell_a$ in the following way.
If $C_a$ is a positive clause then $\ell_a$ equals the number of clauses between it and the variable-axis increased by $1$, 
\ie the positive clause $C_1$, which is furthest away from the variable-axis, is on the level $m_1$, the clause $C_{m_1}$, which is the closest to the variable axis, is on level $1$.
Similarly, for negative clauses, the level is negative. For instance, the clause $C_{m_1 + 1}$ is on the level $- m_2$ and the clause $C_{m}$ is on the level $-1$.

Now we are ready to combine all the above gadgets and constructions to describe the construction of the temporal path $\mathcal{G_\phi}$.

\paragraph{Detailed construction.}
For each variable $x_i$ in $\phi$
we construct its corresponding variable gadget in $\mathcal{G_\phi}$. 
All variable gadgets lie on the same horizontal line.
We create the clause gadget of a positive $C_a=(x_i \lor x_j \lor x_k)$ or negative clause $C_a=(\overline{x_i} \lor \overline{x_j} \lor \overline{x_k})$ 
such that we connect the
$g^a_i$-th copy of the segment block of $x_i$ to it, together with the $g^a_j$-th copy of the segment block of $x_j$ and
$g^a_k$-th copy of the segment block of $x_k$.
We do this using vertical line gadgets,
that connect the corresponding segment blocks to the level
$\ell_a$ where the clause gadget is starting.

Let us now define our temporal graph $\mathcal{G_\phi}=(G,\lambda)$.
\begin{itemize}
	\item The underlying graph $G$ is a path on $12 \sum_{i=1}^{n}d_i - 4$ vertices.
	\begin{itemize}
		\item
		A variable gadget corresponding to the variable $x_i$ consists of $d_i$ copies of the segment block, together with $d_i -1$ horizontal bridges.
		We need $d_i  8 + (d_i-1)  4 = 12 d_i - 4$ vertices for one variable and $\sum_{i=1}^{n} (12d_i - 4)$ for all of them. 
		\item Between variable gadgets of two consecutive variables $x_i, x_{i+1}$ there are $4$ vertices.
		Therefore we use extra $4(n-1)$ vertices, for this construction.
	\end{itemize}
	\item The lifetime $T$ of $\mathcal{G_\phi}$ is $4(m+4)$.
	\begin{itemize}
		\item There are $m$ clauses and each of them lies on its own level.
		Therefore we need $4m$ time-steps for all clause gadgets.
		\item 
		There is a level $0$ with all the variable gadgets,
		of height $9$.
		We add one extra time-step before we define the start of levels $1$ and $-1$.
		This ensures that the vertical line gadget, connecting the variable gadget to the corresponding clause gadget, is of even height.
		\item The first time-edge appears at time $5$ instead of $1$,
		which adds $4$ extra time-steps to the lifetime of $\mathcal{G_\phi}$ and does not 
		interfere with our construction.
		Similarly, after the appearance of the last time-edge, there is one extra time-step, where no edge appears.
		We add these time-steps to ensure that the dummy time-edges must be covered in the time-step they appear
		and that the lifetime of the graph $\mathcal{G_\phi}$ is divisible by $4$.
		\item All together we have $4 m + 11 + 4 + 1 $ time-steps.
	\end{itemize}
\end{itemize}

Therefore $\mathcal{G}_\Phi$ is a temporal graph of lifetime $T = 4(m+4)$ with the underlying path on $12 \sum_{i=1}^{n}d_i -4$ vertices.
The appearances of each edge arise from the structure of the planar monotone rectilinear 3SAT.

\subsubsection*{\boldmath Size of the Optimum $2$-TVC of $\mathcal{G_\phi}$}

Using the notation introduced above we determine the size of the optimum $2$-TVC of $\mathcal{G_\phi}$.
We do this in two steps, first determining the size of the optimum $2$-TVC for each variable gadget
and then determining the optimum $2$-TVC of all vertical line gadgets together with the clause gadgets.
\begin{enumerate}
	\item Optimum $2$-TVC covering all variable gadgets is of size $19 \sum _{i=1}^n d_i -4n$.
	\begin{itemize}
		\item Since a variable $x_i$ appears in $d_i$ clauses as a positive or negative literal,
		we construct $d_i$ connected copies of the segment block.
		By \cref{cor:NP-SizeOfTVCinHorizontalDirection} this construction has exactly two optimum $2$-TVCs,
		each of size $19 d_i - 4$.
		\item Using one of the four dummy time-edges of a segment block, denote it $(e,t)$, and we connect the corresponding variable gadget to the clause gadget.
		Therefore we need to cover just $3$ dummy time-edges.
		Since there is one extra time-step before the first clause gadget level, the edge $e$ appears also at time $t+1$ in the case of a positive clause and $t-1$ in the case of a negative one.
		To optimally cover it we then use one vertex appearance.		
		\item There are $n$ variables, so all together we need $\sum _{i=1}^n (19d_i -4)$ vertex appearances in $2$-TVC to optimally cover all of them.
	\end{itemize}
	\item Optimum $2$-TVC covering all vertical line gadgets and clause gadgets is of size	  
\begin{equation}\label{eq:NP-TVCsizeOfAllClauses}
\begin{aligned} 
&
\sum_{\substack{C_a = (x_i \lor x_j \lor x_k) \text{ or}\\ 
		C_a=(\overline{x_i} \lor \overline{x_j} \lor \overline{x_k})}} 
\Bigg( 6|\ell_a| + 6 \sum_{b=1}^{k-i-1} d_{i + b} +  \\
& 6(d_i - g^a_i + g^a_k) + k - i \Bigg) - 2m.
\end{aligned}
\end{equation}

	Let $C_a = (x_i \lor x_j \lor x_k)$ (resp~$C_a = (\overline{x_i} \lor \overline{x_j}\lor \overline{x_k})$).
	\begin{itemize}
		\item Vertical line gadgets connect variable gadgets of $x_i, x_j$ and $x_k$ with the clause gadget of $C_a$.
		Since the clause gadget is on level $\ell_a$
		the corresponding vertical line gadgets are of height $4(|\ell_a| -1)$.
		By \cref{lem:NP-verticalLineGadet} and the fact that each clause requires exactly three vertical line gadgets, we need $6(|\ell_a| -1)$ vertex appearances to cover them optimally.	
		\item 
		Using \cref{lem:NP-reductionTVCclauseEdge} we know that an optimum $2$-TVC covering the clause gadget corresponding to $C_a$ requires $7 + \frac{p_1 + p_2}{2}$ vertex appearances where
		$p_1$ and $p_2$ are the lengths of the clause gadget connectors.
		Let us determine the exact value of $p_1 + p_2$.
		\begin{itemize}
			\item There are $k-i-1$ variable gadgets between $x_i$ and $x_k$, namely $x_{i+1}, x_{i+2}, \dots , x_{k-1}$.
			Variable gadget corresponding to any variable $x_b$ is of length $12 d_b -4$.
			There are $4$ vertices between two consecutive variable gadgets.
			Therefore there are $\sum_{d=1}^{k-i-1} (d_{i + d} 12 - 4) + (j-i)4 = 12 \sum_{d=1}^{k-i-1} d_{i + d} -4$
			vertices between the last vertex of $x_i$ variable gadget and the first vertex of $x_k$ variable gadget.
			\item From the end of $g^a_i$-th segment block of $x_i$ to the 
			end of the variable gadget
			there are $d_i - g^a_i $ copies of the segment block together with the $d_i - g^a_i$ horizontal bridges connecting them.
			Similarly, the same holds for the case of $x_k$, where there are
			$g^a_k - 1$ segment blocks and horizontal bridges.
			Altogether we have 
			$(d_i - g^a_i + g^a_k - 1)8 + (d_i - g^a_i + g^a_k - 1)4  = 12 (d_i - g^a_i + g^a_k - 1)$ vertices.
			\item Between the end of $g^a_i$-th segment block of $x_i$ and the beginning of the $g^a_k$-th segment block of $x_k$ we have
			\begin{align*}
			p_1 + p_2 = & 12 \sum_{d=1}^{k-i-1} d_{i + d} +4(k-i+1)+ \\
			& 12 (d_i - g^a_i + g^a_k - 1)
			\end{align*}
			 
			vertices.
		\end{itemize}	
		\end{itemize}
		All together the optimum $2$-TVC of the clause $C_a$ with the corresponding vertical line gadgets 
		is of size
		\begin{equation*}
		6|\ell_a| + 6 \sum_{b=1}^{k-i-1} d_{i + b} + 
		6(d_i - g^a_i + g^a_k) + k - i - 1.
		\end{equation*}
		
	Extending the above equation to all clauses and variables we get that the optimum $2$-TVC of $\mathcal{G_\phi}$ is of size
	\begin{equation} \label{eq:NP-TVCsizeSatisfiable}
	\begin{aligned} 
s =
& 19 \sum _{i=1}^n d_i +
\sum_{\substack{C_a = (x_i \lor x_j \lor x_k) \text{ or}\\ 
		C_a=(\overline{x_i} \lor \overline{x_j} \lor \overline{x_k})}} 
\Bigg( 6|\ell_a| + 6 \sum_{b=1}^{k-i-1} d_{i + b} +  \\
& 6(d_i - g^a_i + g^a_k) + k - i \Bigg) - m - 4n.
	\end{aligned}
	\end{equation}
\end{enumerate}

We are now ready to prove our main technical result of this section.
\begin{lemma}\label{lem:two-directions}
There exists a truth assignment $\tau$ of the variables in $X$ which satisfies $\phi(X)$ if and only if there exists 
a feasible solution to $2$-TVC on $\mathcal{G_\phi}$ which is of the size $s$ as in the \cref{eq:NP-TVCsizeSatisfiable}.
\end{lemma}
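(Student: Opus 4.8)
The plan is to prove the two directions of the equivalence by exploiting the rigidity of the optimum covers established for each gadget, combined with a counting argument showing that $s$ is a global lower bound on any feasible $2$-TVC which can be attained only when every clause gadget is covered in its cheaper mode. First I would observe that the time-edges of $\mathcal{G_\phi}$ partition into those belonging to the variable gadgets (segment blocks together with their horizontal bridges), those belonging to the vertical line gadgets, and those belonging to the clause gadgets (the three consecutive edges together with the two connectors $P_1,P_2$). Since any feasible $2$-TVC must cover the time-edges of each gadget in the relevant windows, its restriction to each gadget is itself a feasible $2$-TVC of that gadget; hence by \cref{cor:NP-SizeOfTVCinHorizontalDirection}, \cref{lem:NP-verticalLineGadet} and \cref{lem:NP-reductionTVCclauseEdge} the total size is at least $19\sum_i d_i - 4n$ for the variable gadgets, at least $6(|\ell_a|-1)$ for the three vertical line gadgets of each clause $C_a$, and at least $7+\tfrac{p_1+p_2}{2}$ for each clause gadget. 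Summing these lower bounds reproduces exactly the value $s$ of \cref{eq:NP-TVCsizeSatisfiable}. The only care needed is at the junctions where one dummy time-edge of a segment block coincides with the start of a vertical line gadget; I would charge the single shared vertex appearance exactly as in the size computation preceding the lemma, so that the per-gadget bounds add without double counting.

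For the forward direction, given a satisfying assignment $\tau$, I would cover each variable gadget of $x_i$ with its unique ``orange and green'' optimum cover if $\tau(x_i)$ is True and with the ``orange and red'' one if $\tau(x_i)$ is False (\cref{lem:NP-CoverVariableGadgets}, \cref{cor:NP-SizeOfTVCinHorizontalDirection}), and extend this choice along each attached vertical line gadget using the matching green, respectively red, optimum cover of \cref{lem:NP-verticalLineGadet}. The parity guaranteed by \cref{lem:NPoddDistance-RightmostEdges}, together with the extra time-step inserted before the first clause level, ensures that these two covers meet consistently at the junction window. Because $\tau$ satisfies each clause, every positive (resp.\ negative) clause contains a true literal whose vertical line gadget is then green (resp.\ red), so by \cref{lem:NP-reductionTVCclauseEdge} the clause gadget can be closed using only $7+\tfrac{p_1+p_2}{2}$ appearances. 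Summing over all gadgets yields a feasible solution of size exactly $s$.

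Conversely, given a feasible solution $\mathcal{C}$ of size $s$, the lower bounds above must hold with equality simultaneously. Tightness on each variable gadget forces $\mathcal{C}$ to restrict to one of its two optimum covers, which are all-green or all-red by \cref{cor:NP-SizeOfTVCinHorizontalDirection}; I define $\tau(x_i)$ to be True or False accordingly. Tightness on each clause gadget means it is covered in the cheaper mode, so by \cref{lem:NP-reductionTVCclauseEdge} at least one of its three vertical line gadgets is green for a positive clause, resp.\ red for a negative one. Propagating this back across the junction then shows that the corresponding segment block, and hence the variable, has the truth value satisfying that literal, so every clause is satisfied and $\tau$ satisfies $\phi$.

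The hard part will be making the junction/propagation argument fully rigorous, since the lemma formally relating a segment block's cover to its vertical line gadget's cover is not available as a separate statement. Concretely, I must show that the parity of the covering times of the shared edge ($u_6u_7$ for a positive literal, $u_0u_1$ for a negative one) is forced to agree across the boundary between the segment block and the vertical line gadget, so that ``green'' genuinely propagates from the variable up to the clause and cannot be switched locally without cost. This rests on the odd-distance property of \cref{lem:NPoddDistance-RightmostEdges} and on the single inserted time-step that makes every vertical line gadget of even height; I would verify that these parities line up so that any attempt to decouple a variable's truth value from the satisfaction of its clauses strictly increases the total cost above $s$, which is precisely what pins the equivalence.
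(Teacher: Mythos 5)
Your proposal is correct and follows essentially the same route as the paper's own proof: the same per-gadget decomposition with \cref{cor:NP-SizeOfTVCinHorizontalDirection}, \cref{lem:NP-verticalLineGadet} and \cref{lem:NP-reductionTVCclauseEdge}, the same forward construction propagating ``orange and green''/``orange and red'' covers from variable gadgets through vertical line gadgets to clauses, and the same backward tightness argument that a mismatch at a vertical line gadget would force a non-minimum cover and push the total above $s$. You even correctly identified that the segment-block-to-vertical-gadget consistency step is not isolated as a separate lemma --- the paper likewise handles it inline via \cref{lem:NP-verticalLineGadet} and the parity of the construction, exactly as you sketch.
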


\begin{proof}
Each segment block corresponding to a variable $x$ has exactly two optimum $2$-TVCs (the ``orange and green'' and ``orange and red'' one, depicted in \cref{fig:NP-varibleGadgetsCover}).
We set the ``orange and green'' solution to the \textsc{True} value of $x$
and the ``orange and red'' one to \textsc{False}.

($\Rightarrow$):
From $\tau$ we start building a $2$-TVC of $\mathcal{G_\phi}$ by 
first covering all variable gadgets 
with a $2$-TVC 
of ``orange and green'' (resp.~``orange and red'') vertex appearances 
if the corresponding variable is \textsc{True} (resp.~\textsc{False}).
Using \cref{lem:SizeOfTVCinHorizontalDirection}
we know that the optimum $2$-TVC of every variable gadget
uses either all ``orange and green'' or all ``orange and red'' vertex appearances 
for every segment block and all horizontal bridges connecting them.
Next, we extend the $2$-TVC from variable gadgets to vertical line gadgets,
\ie if the variable gadget is covered with the
``orange and green'' (resp.~``orange and red'')
vertex appearances then the 
vertical line gadget connecting it to the clause gadget uses the green (resp.~red)
vertex appearances.
In the end, we have to cover also clause gadgets.
Using \cref{lem:NP-reductionTVCclauseEdge} we know that we can cover
each clause gadget optimally 
if and only if at least one vertical line gadget connecting to it is covered with the
green vertex appearances, in the case of a positive clause and
red vertex appearances in the case of the negative one.
More precisely, a clause gadget $\mathcal{C}$ corresponding to the clause $C$
is covered with the minimum number of vertex appearances whenever $C$ is satisfied.
Since $\tau$ is a truth assignment satisfying $\phi$, all clauses are satisfied and hence covered optimally.
Therefore the size of the $2$-TVC of $\mathcal{G_\phi}$ 
is the same as in the \cref{eq:NP-TVCsizeSatisfiable}.

($\Leftarrow$):
Suppose now that $\mathcal{C}$ is a minimum $2$-TVC of the temporal graph $\mathcal{G_\phi}$ of size $s$. 
Using \cref{lem:SizeOfTVCinHorizontalDirection,cor:NP-SizeOfTVCinHorizontalDirection}
we know that the optimum $2$-TVC of each variable gadget consists of either all ``orange and green'' or all ``orange and red'' vertex appearances.
If this does not hold then $\mathcal{C}$ is not of minimum size on variable gadgets. 
Similarly, the \cref{lem:NP-reductionTVCclauseEdge} 
assures that the number of vertex appearances used in the $2$-TVC of any clause gadget
is minimum if and only if there is at least one
vertical line gadget covered with
green vertex appearances in the case of a positive clause
and red vertex appearances in the case of a negative clause.
If this does not hold then $\mathcal{C}$ is not of minimum size on clause gadgets.
Lastly, if a vertical line gadget $\mathcal{V}$ 
connects a variable gadget covered with ``orange and green'' (resp.~``orange and red'')  vertex appearances 
to a clause gadget using ``orange and red'' (resp.~``orange and green'') vertex appearances,
then the $2$-TVC of $\mathcal{V}$ is not minimum (see \cref{lem:NP-verticalLineGadet}) and
$\mathcal{C}$ cannot be of size $s$.

To obtain a satisfying truth assignment $\tau$ of $\phi$ we
check for each variable $x_i$ which vertex appearances are used in the $2$-TVC of its corresponding variable gadget.
If $\mathcal{C}$ covers the variable gadget of $x_i$ with all ``orange and green'' vertex appearances we set $x_i$ to  \textsc{True} else we set it to \textsc{False}.
\end{proof}

Our main result of this section follows now directly from Lemma~\ref{lem:two-directions} and the fact that the 
planar monotone rectilinear 3SAT is NP-hard \cite{Berg2010Optimal}.
\fi
As, for every $\Delta\geq 2$, there is a known polynomial-time reduction from $\Delta$-TVC to $(\Delta+1)$-TVC~\cite{AkridaMSZ20}, we obtain the following.

\begin{theorem}\label{thm:2tvc-paths-NP}
	For every $\Delta\geq 2$, $\Delta$-TVC on instances on an underlying path is NP-hard.
\end{theorem}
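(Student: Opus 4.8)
The plan is to package the gadget analysis of this section into a polynomial-time many-one reduction that settles the case $\Delta = 2$, and then to bootstrap to arbitrary $\Delta$. For the base case, I would first check that $\mathcal{G_\phi}$ is constructible in polynomial time: its underlying path has $12\sum_{i=1}^{n} d_i - 4$ vertices and lifetime $4(m+4)$, and the label set of each edge is fixed locally by the segment blocks, horizontal bridges, vertical line gadgets, and clause gadgets. Combined with \cref{lem:two-directions}, which certifies that $\phi$ is satisfiable if and only if $\mathcal{G_\phi}$ admits a $2$-TVC of size $s$ as in \cref{eq:NP-TVCsizeSatisfiable}, and with the NP-hardness of planar monotone rectilinear 3SAT~\cite{Berg2010Optimal}, this establishes that $2$-TVC is NP-hard already when the underlying graph is a path.

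To extend this to every $\Delta \geq 2$, I would argue by induction on $\Delta$, using at each step the known polynomial-time reduction from $\Delta$-TVC to $(\Delta+1)$-TVC~\cite{AkridaMSZ20}. The base case $\Delta = 2$ is handled above, and since each reduction maps an NP-hard $\Delta$-TVC instance to an equivalent $(\Delta+1)$-TVC instance, NP-hardness propagates up the chain.

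The only delicate point is that the hardness must stay within the class of underlying \emph{paths} at every step. This is guaranteed because the reduction from $\Delta$-TVC to $(\Delta+1)$-TVC merely rescales the timeline and duplicates time-edges, leaving the vertex set and edge set of the underlying graph untouched; hence a path instance is sent to a path instance. The whole chain therefore remains inside the path case, and the statement follows for all $\Delta \geq 2$.
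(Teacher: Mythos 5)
Your proposal is correct and follows essentially the same route as the paper: the authors likewise obtain the base case $\Delta=2$ directly from \cref{lem:two-directions} together with the NP-hardness of planar monotone rectilinear 3SAT~\cite{Berg2010Optimal}, and then lift to all $\Delta\geq 2$ via the known polynomial-time reduction from $\Delta$-TVC to $(\Delta+1)$-TVC of~\cite{AkridaMSZ20}. Your explicit verification that this reduction alters only the time-labeling and hence keeps the underlying graph a path is exactly the (implicit) fact the paper relies on, so nothing is missing.
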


With a slight modification to the $\mathcal{G_\phi}$ we can create the temporal cycle from $R_\phi$ and therefore the following holds.

\begin{corollary}\label{cor:2tvc-cycles-NP}
	For every $\Delta\geq 2$, $\Delta$-TVC on instances on an underlying cycle is NP-hard.
\end{corollary}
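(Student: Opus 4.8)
The plan is to recycle the temporal path $\mathcal{G_\phi}$ constructed in \cref{subsec:2tvc-paths-NP} and turn it into an instance whose underlying graph is a cycle, by adding a single \emph{closing edge} $e^{\ast}$ joining the two endpoints of the underlying path $P$. Since the correctness of the path reduction (encapsulated in \cref{lem:two-directions}) rests on the tight and mutually independent analyses of the segment-block, horizontal-bridge, vertical-line and clause gadgets, the only thing I need to guarantee is that $e^{\ast}$ interacts with none of those gadgets. I would therefore choose the time-labels of $e^{\ast}$ so that its covering requirement is completely decoupled from the rest of the instance.

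The cleanest choice is to set $\lambda(e^{\ast})=\emptyset$, i.e.\ to close the path into a cycle by an edge that is never active. This is a legitimate temporal graph in the sense of \cref{def:temporal-graph} (the empty set is a finite subset of $\mathbb{N}$), its lifetime $T=\max\{t\in\lambda(e):e\in E\}$ is unchanged, and for every $2$-time window $W_i$ the edge set $E[W_i]$ is exactly the one of $\mathcal{G_\phi}$. Consequently a temporal vertex subset is a $2$-TVC of the cycle if and only if it is a $2$-TVC of the path, the two optima coincide, and by \cref{lem:two-directions} the formula $\phi$ is satisfiable if and only if the cycle admits a $2$-TVC of size $s$ as in \cref{eq:NP-TVCsizeSatisfiable}. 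NP-hardness of $2$-TVC on underlying cycles then follows from the NP-hardness of planar monotone rectilinear $3$SAT~\cite{Berg2010Optimal}, exactly as for \cref{thm:2tvc-paths-NP}. Should one prefer a cycle in which every edge is genuinely active, I would instead give $e^{\ast}$ a block of two fresh time-slots appended after time $T$ in which only $e^{\ast}$ is active; covering $e^{\ast}$ then costs exactly one extra vertex appearance regardless of how the rest is covered, so the threshold simply shifts to $s+1$ and the same equivalence goes through.

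The point that must be handled with care — and the reason the modification has to keep $e^{\ast}$ temporally isolated — is interference: the endpoints of $P$ belong to existing variable gadgets, so if $e^{\ast}$ were active at any time-step overlapping those gadgets it could either cover gadget edges ``for free'' or force vertex appearances that break the exact counts underlying \cref{lem:NP-CoverVariableGadgets,lem:SizeOfTVCinHorizontalDirection,lem:NP-verticalLineGadet,lem:NP-reductionTVCclauseEdge}. Making $e^{\ast}$ never active (or active only in dedicated slots past $T$) removes this danger entirely, which is precisely why a genuinely \emph{slight} modification suffices; I expect this decoupling to be the only real obstacle, and it is avoided by construction rather than by a new argument.

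Finally, to extend the hardness from $2$-TVC to every $\Delta\geq 2$ on cycles, I would invoke the same polynomial-time reduction from $\Delta$-TVC to $(\Delta+1)$-TVC of~\cite{AkridaMSZ20} that is used for \cref{thm:2tvc-paths-NP}. As that reduction leaves the underlying graph unchanged (this is exactly what lets \cref{thm:2tvc-paths-NP} bootstrap path-hardness from the $2$-TVC case), a cycle instance is mapped to a cycle instance, and the hardness for all $\Delta\geq 2$ on underlying cycles follows, establishing \cref{cor:2tvc-cycles-NP}.
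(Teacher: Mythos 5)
Your proposal is correct and takes essentially the same route as the paper, which likewise closes $\mathcal{G_\phi}$ into a cycle using temporally decoupled new edges: it adds one extra vertex $w$ together with two time-edges joining $w$ to the first and last vertices of the path at time $1$ (before any gadget edge, since the first gadget time-edge appears at time $5$), raising the optimum by exactly $1$ via the appearance $(w,1)$, and then the claim for all $\Delta \geq 2$ follows by the same reduction from~\cite{AkridaMSZ20}. One caveat: your primary variant with $\lambda(e^{\ast})=\emptyset$ hinges on the definition formally permitting never-active edges (and yields a cycle only degenerately), so your fallback --- giving $e^{\ast}$ fresh time slots past $T$ at the cost of exactly one extra vertex appearance, shifting the threshold to $s+1$ --- is the safer formulation and is the one that mirrors the paper's proof.
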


\begin{proof}
We follow the same procedure as above where we add one extra vertex $w$, to the underlying graph $P$ of $\mathcal{G_\phi}$.
We add also two time-edges connecting the first and the last vertex of the temporal path graph $\mathcal{G_\phi}$ at time $1$.
This increases the size of the $2$-TVC by $1$ (as we need to include the vertex appearance $(w,1)$) and it transforms the underlying path $P$ into a cycle.
\end{proof}

\subsection{Algorithmic Results}
\label{subsec:algres}
To complement the hardness presented in \Cref{subsec:2tvc-paths-NP}, we present two polynomial-time algorithms.
Firstly, a dynamic program for solving \textsc{TVC} on instances with underlying paths and cycles, which shows that the hardness is inherently linked to the sliding time windows.
Secondly, we give a PTAS for \dTVC\ on instances with underlying paths.
This approximation scheme can be obtained using a powerful general purpose result commonly used for approximating geometric problems~\cite{mustafa2010hittingset}.

\begin{theorem}
TVC can be solved in polynomial time on instances with a path/cycle as their underlying graph.
\end{theorem}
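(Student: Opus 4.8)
The plan is to design a dynamic program that sweeps through the lifetime $[T]$ of the temporal graph and, for each time slot, records just enough information about the partial cover to extend it optimally. Since \textsc{TVC} (unlike $\Delta$-TVC) only requires each active edge to be covered \emph{once} over the whole lifetime, the key observation is that on a path $P = (v_0, \dots, v_k)$ the covering decisions decouple locally: each edge $e_i = v_{i-1}v_i$ must be covered at some time $t \in \lambda(e_i)$ by placing either $(v_{i-1}, t)$ or $(v_i, t)$ into the cover, and the only reason to prefer one appearance over another is that a single vertex appearance $(v_i, t)$ can simultaneously cover the two incident edges $e_i$ and $e_{i+1}$, but only if both are active at that same time $t$.

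First I would reformulate \textsc{TVC} on a path as the problem of choosing, for each edge $e_i$, a covering time-edge, so as to minimise the total number of \emph{distinct} vertex appearances used, where savings come exactly from merging the covers of two consecutive edges into one appearance $(v_i,t)$ with $t \in \lambda(e_i) \cap \lambda(e_{i+1})$. I would then set up a left-to-right dynamic program indexed by the vertices $v_0, v_1, \dots, v_k$. The state after processing edge $e_i$ records which endpoint of $e_i$ was used for its cover and at which time, or more economically, whether the appearance used to cover $e_i$ sits on vertex $v_i$ (so it can potentially be reused for $e_{i+1}$) together with the relevant time; the table entry stores the minimum number of appearances used so far. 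The transition from $e_i$ to $e_{i+1}$ checks whether the two covers can share a common appearance $(v_i, t)$ with $t \in \lambda(e_i) \cap \lambda(e_{i+1})$, in which case the count increases by $0$ for the shared portion, or otherwise pays for a fresh appearance to cover $e_{i+1}$. Because each $\lambda(e_i)$ is a finite set of size at most $T$, the number of candidate times per edge is polynomial, so the table has $O(kT)$ entries and each transition costs $O(T)$, giving an overall polynomial running time.

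To handle the cycle I would reduce to the path case by standard ``fixing'' of the wrap-around edge: guess how the closing edge $e_{k+1} = v_k v_0$ is covered (one of polynomially many choices of endpoint and time $t \in \lambda(e_{k+1})$), delete that edge, and run the path dynamic program on the remaining path with the boundary state at $v_0$ and $v_k$ constrained to be consistent with the guessed cover. Taking the best over all polynomially many guesses yields the optimum for the cycle, with only a polynomial-factor overhead.

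The main obstacle I anticipate is getting the state space of the dynamic program exactly right so that the potential sharing of an appearance across the boundary between consecutive edges is neither double-counted nor missed: the subtlety is that covering $e_i$ via $(v_i, t)$ and covering $e_{i+1}$ via $(v_i, t)$ is the \emph{same} appearance only when the times coincide, so the state must faithfully carry the time of the ``shared-eligible'' appearance on $v_i$, not merely which vertex was used. Once the state correctly encodes (vertex, time) of the rightmost appearance and whether it is available for reuse, correctness follows by a routine induction on $i$ showing the table stores the optimum cost of covering $e_1, \dots, e_i$ subject to the recorded boundary condition, and optimality of the full cover is read off from the best terminal entry.
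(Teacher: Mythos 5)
Your proposal is correct, and its core insight is the same one the paper exploits: on a path, an appearance $(v_i,t)$ can cover at most the two consecutive edges $e_i$ and $e_{i+1}$, and only when $t \in \lambda(e_i) \cap \lambda(e_{i+1})$, so the problem reduces to deciding which consecutive pairs share a cover. Where you diverge is in the machinery. The paper runs a direct greedy sweep: cover $e_1$ by an appearance of $v_1$, choosing a time in $\lambda(e_1)\cap\lambda(e_2)$ whenever that intersection is nonempty (thereby covering $e_2$ for free), then recurse on the uncovered remainder; optimality rests on a short exchange-style argument (``no appearance of $v_0$ helps any other edge''), and the whole thing runs in $\mathcal{O}(Tn)$. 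Your dynamic program with states $(i, \text{time of the reusable appearance on } v_i)$ replaces that exchange argument with a routine induction, at the cost of a somewhat larger (but still polynomial) table; in fact you carry more state than necessary, since once you decide at a transition whether $e_i$ and $e_{i+1}$ share an appearance, the specific shared time is irrelevant to all later edges (that appearance can never help $e_{i+2}$), so a Boolean ``reusable or not'' flag plus an existence check of $\lambda(e_i)\cap\lambda(e_{i+1})$ would suffice. For the cycle the two reductions also differ: you guess the covering appearance of the wrap-around edge ($\mathcal{O}(T)$ guesses, each followed by a constrained path DP), whereas the paper observes that WLOG the optimum contains an appearance of $v_0$ or of $v_1$ and solves just two path instances obtained by cutting the cycle at the corresponding places (with a duplicated dummy vertex), taking the better of the two. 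Your route is more mechanical and generalizes more readily; the paper's is leaner and gives the sharper $\mathcal{O}(Tn)$ bound.
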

\iflong
\begin{proof}
	Let us start the proof with the case when the underlying graph $G = (v_0, v_1, v_2, \dots, v_n)$ is a path on $n+1$ vertices. 
	Our algorithm mimics the greedy algorithm for computing a minimum vertex cover in a (static) path graph, 
	which just sweeps the path from left to right and takes every second vertex. 
	For every $i=1,2,\ldots, n$ we denote by $e_i$ the edge $v_{i-1}v_i$.
	
	To compute the optimum temporal vertex cover $\mathcal{C}$ of $\mathcal{G}$ we need to determine for each edge which vertex appearance covers it. 
	We first consider the edge $v_0 v_1$. Without loss of generality, we can cover this edge in $\mathcal{C}$ by a vertex appearance 
	in the set $\{(v_1,t) : v_0 v_1 \in E_t\}$, as no appearance of vertex $v_0$ can cover any other edge apart from $v_0 v_1$. 
	We check whether the set $\{t : v_0 v_1 \in E_t\} \cap \{t' : v_1 v_2 \in E_{t'}\}$ is empty or not; 
	that is, we check whether $v_0 v_1$ appears together with $v_1 v_2$ in any time-step. 
	If this set is empty, then we add to $\mathcal{C}$ an arbitrary vertex appearance $(v_1,t)$, where $(e_1,t) \in E_t$. 
	This choice is clearly optimum, as we need to cover $v_0 v_1$ with at least one vertex appearance of $v_1$, 
	while any of the vertex appearances of $v_1$ can not cover any other edge of the graph. 
	Otherwise, let $t\in \{t : v_0 v_1 \in E_t\} \cap \{t' : v_1 v_2 \in E_{t'}\}$ be an arbitrary time-step in which both $v_0 v_1$ and $v_1 v_2$ appear. 
	Then we add $(v_1,t)$ to $\mathcal{C}$. 
	This choice is again optimum, as in this case the vertex appearance $(v_1,t)$ covers both edges $v_0 v_1$ and $v_1 v_2$. 
	After adding one vertex appearance to $\mathcal{C}$, we ignore the edges that have been covered so far by $\mathcal{C}$, and we proceed recursively with 
	the remaining part of $\mathcal{G}$, until all edges are covered.
	Each iteration of the above algorithm can be performed in $\mathcal{O}(T)$ time, and we have at most $\mathcal{O}(n)$ iterations. 
	Therefore the running time of the whole algorithm is $\mathcal{O}(Tn)$.
	
	We can similarly deal with the case where the underlying graph $G$ is a cycle on the vertices $v_0, v_1, v_2, \dots, v_n$. 
	Here we just need to observe that, without loss of generality, the optimum solution contains either a vertex appearance of $v_0$ 
	or a vertex appearance of $v_1$ (but noth both). Therefore, to solve TVC on $\mathcal{G}$, we proceed as follows. 
	First we create an auxiliary cycle $G'$ on the vertices $v_0', v_0, v_1, v_2, \ldots, v_n$ (in this order), where $v_0'$ is a new dummy vertex. 
	Then we create the temporal graph $\mathcal{G}'=(G',\lambda')$, where $\lambda'(v_i v_{i+1}) = \lambda(v_i v_{i+1})$ for every $i=1,2,\ldots,n-1$, 
	$\lambda'(v_n v_0') = \lambda(v_n v_0)$, and $\lambda'(v_0' v_0) = \lambda'(v_0 v_1) = \lambda(v_0 v_1)$. 
	Next we compute the temporal subgraphs $\mathcal{G}_1'$ and $\mathcal{G}_2'$ of $\mathcal{G}'$, where the underlying graph of 
	$\mathcal{G}_1'$ (resp.~of $\mathcal{G}_2'$) is the path $(v_0, v_1, \dots, v_n, v_0')$ (resp.~the path $(v_1, v_2, \dots, v_n, v_0', v_0)$). 
	Every edge of $\mathcal{G}_1'$ and of $\mathcal{G}_2'$ has the same time labels as the corresponding edge in $\mathcal{G}$. 
	Now we optimally solve TVC on $\mathcal{G}_1'$ and on $\mathcal{G}_2'$, and we return the smaller of the two solutions, 
	as an optimum solution of TVC on $\mathcal{G}$ (by replacing any vertex appearance $(v_0',t)$ by $(v_0,t)$). 
	The running time is again~$\mathcal{O}(Tn)$.
\end{proof}
\fi

Next, we turn to approximating \dTVC\ on underlying paths.
The \textsc{Geometric Hitting Set} problem takes as an instance a set of geometric objects, e.g.\ shapes in the plane, and a set of points.
The task is to find the smallest set of points, that hit all of the objects.
In the paper by Mustafa and Ray \cite{mustafa2010hittingset} the authors present a $\PTAS$
for the problem, when the geometrical objects are $r$-admissible set regions.
We transform an arbitrary temporal path to the setting of the geometric hitting set.
As a result, we obtain a $\PTAS$ for the \dTVC\ problem.

\iflong
\begin{definition}
	Let $R = (P,D)$ be a range space that consists of points $P$ in $\mathbb{R}^2$ and a set $D$ of regions containing points of $P$.
	The \emph{minimum geometric hitting set} problem asks for the smallest set $S \subseteq P$, such that each region in $D$ contains at least one point from $S$.
\end{definition}

\begin{definition}
	Let $D$ be the set of regions in the plane, where any region $s \in D$ is bounded by a closed Jordan curve \footnote{A Jordan curve is a non-self-intersecting continuous loop.}.
	We say, that $D$ is an \emph{$r$-admissible} set of regions
	if Jordan curves bounding any two regions $s_1, s_2 \in D$ cross $l \leq r$ times
	and both $s_1 \setminus s_2$ and $s_2 \setminus s_1$ are connected regions\footnote{The space $X$ is said to be path-connected if there is a path joining any two points in $X$. Every path-connected space is also connected.}.
\end{definition} 

\begin{theorem}[\cite{mustafa2010hittingset}] \label{thm:PTAS}
	Let $R = (P,D)$ be a range space consisting of a set $P$ of $n$ points in $\mathbb{R}^2$ and a set $D$ of $m$ $r$-admissible regions.
	There exists a $(1 + \varepsilon)$-approximation algorithm for the minimum geometric hitting set problem that is performed in  $\mathcal{O}((mn^{\mathcal{O}(\varepsilon^{-2})})$ time.
\end{theorem}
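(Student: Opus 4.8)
The plan is to establish this via a \emph{local search} algorithm together with a planar-separator--based analysis, which is the standard route for obtaining approximation schemes for hitting set and set cover instances with geometrically well-behaved regions. Concretely, I would fix an integer parameter $k = \Theta(\varepsilon^{-2})$ and run the following procedure: start from any feasible hitting set $S \subseteq P$ (for instance $S = P$ itself, or any constant-factor approximation), and then repeatedly search for a \emph{local improvement}, i.e.\ a subset of at most $k$ points of $S$ that can be replaced by at most $k-1$ points of $P$ so that the resulting set is still a hitting set of strictly smaller cardinality. When no such improvement exists, output $S$.

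For the running time, I would argue that each successful local step decreases $|S|$ by at least one, so there are at most $n$ steps. To find an improving step I would enumerate all removal sets of size at most $k$ together with all insertion sets of size at most $k-1$ drawn from $P$, which is $n^{\mathcal{O}(k)} = n^{\mathcal{O}(\varepsilon^{-2})}$ candidates, and for each candidate verify feasibility against the $m$ regions of $D$ in time $\mathcal{O}(m \cdot \mathrm{poly}(k))$. Multiplying these together yields the claimed $\mathcal{O}(m\, n^{\mathcal{O}(\varepsilon^{-2})})$ bound.

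The heart of the argument is the approximation guarantee. Let $L$ denote the local optimum returned by the algorithm and let $O$ be a global optimum hitting set; after discarding common points we may assume $L \cap O = \emptyset$. I would build a bipartite \emph{exchange graph} $H$ on vertex set $L \cup O$ by drawing, for each region $s \in D$, an edge between a point of $L \cap s$ and a point of $O \cap s$ (both nonempty since $L$ and $O$ are feasible), chosen so that the resulting graph is planar and satisfies the \emph{locality property}: every region is witnessed by some $L$--$O$ edge whose endpoints both hit it. Given such an $H$, a planar-separator recursion (equivalently, the combinatorial lemma underlying the Mustafa--Ray framework) shows that local optimality forces $|L| \le \left(1 + \frac{c}{\sqrt{k}}\right)|O|$ for an absolute constant $c$: if $|L|$ were too large, a balanced separator would expose a small portion of the arrangement in which some $L$-points could be swapped out in favour of fewer $O$-points while preserving feasibility, contradicting that $S = L$ is locally optimal. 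Choosing $k = \Theta(\varepsilon^{-2})$ then converts this into the desired $(1+\varepsilon)$ factor.

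The main obstacle is constructing the exchange graph $H$ so that it is \emph{simultaneously planar and satisfies the locality property}, and this is exactly where the $r$-admissibility hypothesis enters. Because the Jordan curves bounding any two regions cross at most $r$ times and each pairwise difference $s_1 \setminus s_2$ is connected, the incidence structure between hitting points and regions has bounded local complexity, so one can route the witnessing edges through the arrangement of region boundaries without introducing crossings, absorbing the dependence on $r$ into the constant $c$. Carefully verifying this planarity---rather than the generic local-search termination and the separator recursion, which are routine once $H$ is in hand---is the technically delicate step that I expect to dominate the proof.
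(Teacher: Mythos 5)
The paper does not actually prove this statement: it is imported verbatim from Mustafa and Ray \cite{mustafa2010hittingset} and used as a black box for the PTAS on temporal paths, so there is no in-paper proof to deviate from. Your sketch is a faithful outline of the cited source's actual argument --- local search with swap size $k = \Theta(\varepsilon^{-2})$ giving the $\mathcal{O}(m\,n^{\mathcal{O}(\varepsilon^{-2})})$ running time, a planar bipartite exchange graph on $L \cup O$ satisfying the locality property (whose existence for $r$-admissible regions is indeed the crux, and exactly where admissibility is used), and a planar separator/partition argument yielding $|L| \le \bigl(1 + c/\sqrt{k}\bigr)|O|$ --- so it matches the original proof in approach, and you correctly identify the technically delicate step.
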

\fi

\begin{theorem}
	For every $\varepsilon>0$, there exists an $(1 + \varepsilon)$-approximation algorithm for \dTVC\ on instances with a path/cycle as their underlying graph, which runs in time
	\begin{align*}
	& O\left(n (T-\Delta + 1) \cdot (T(n+1))^{\mathcal{O}(\varepsilon^{-2})}\right) = \\
	& O\left((T(n+1))^{\mathcal{O}(\varepsilon^{-2})}\right), 
	\end{align*}
	\ie the problem admits a PTAS.
\end{theorem}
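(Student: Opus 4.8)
The plan is to reduce \dTVC\ on a temporal path $(P,\lambda)$ with $P=(v_0,\dots,v_n)$ to an instance of minimum geometric hitting set to which \Cref{thm:PTAS} applies, so that the two optima coincide in value and an approximate hitting set translates back to an approximate cover. First I would place one \emph{point} at integer coordinate $(k,t)$ for every vertex appearance $(v_k,t)$ with $k\in\{0,\dots,n\}$ and $t\in[T]$, giving the $T(n+1)$ points of the range space. For every $\Delta$-time window $W_j$ and every edge $e_i=v_{i-1}v_i\in E[W_j]$ I introduce one \emph{region} $R_{i,j}$ whose enclosed points are \emph{exactly} the vertex appearances covering the demand $(e_i,W_j)$, namely $(i-1,t)$ and $(i,t)$ for $t\in\lambda(e_i)\cap W_j$. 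Once the regions have this property, a set of points is a hitting set if and only if the corresponding set of vertex appearances is a \dTVC, the two sets have equal cardinality, and a $(1+\varepsilon)$-approximate hitting set is a $(1+\varepsilon)$-approximate cover. Since there are at most $n(T-\Delta+1)$ regions, \Cref{thm:PTAS} then yields exactly the claimed running time, \emph{provided} the regions form an $r$-admissible family for a constant $r$.

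The crux is therefore the geometric realisation of the regions together with the bound on their pairwise boundary crossings. I would realise $R_{i,j}$ as a simply connected \emph{capsule} living in the vertical strip $i-1\le x\le i$: its vertical extent is $[\tau_a-\gamma,\tau_b+\gamma]$, where $\tau_a<\dots<\tau_b$ are the elements of $\lambda(e_i)\cap W_j$ and $\gamma<1$, and its horizontal profile \emph{bulges} out to both boundary lines $x=i-1$ and $x=i$ at each \emph{active} height $t\in\lambda(e_i)$, while it is \emph{pinched} into the interior of the strip (away from both lines, where there are no grid points) at every other height. This guarantees that $R_{i,j}$ encloses precisely the intended valid covers and no other point, and that its boundary is a single closed Jordan curve.

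To verify $r$-admissibility I would distinguish three cases by the two regions' edges. If the edges are non-adjacent, their strips are disjoint and the regions do not meet, contributing no crossings. If the two regions share the same edge $e_i$ but use different windows $W_j,W_{j'}$, their capsules have the \emph{same} bump/pinch profile, since that profile is dictated by the common active set $\lambda(e_i)$; giving the capsule of the larger-index window a uniformly slightly larger width makes the two nested throughout the common part of their vertical ranges, so their boundaries cross only $O(1)$ times, near the endpoints of that overlap, and the set differences stay connected. If the edges are adjacent, say $e_i$ and $e_{i+1}$, the capsules lie in the neighbouring strips $[i-1,i]$ and $[i,i+1]$ and can meet only along the shared line $x=i$, and there only at heights that are active for \emph{both} edges and lie in \emph{both} windows; as the windows have length $\Delta$ there are at most $\Delta$ such heights, each giving $O(1)$ crossings. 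Hence any two boundaries cross at most $r=O(\Delta)$ times, a constant for fixed $\Delta$, and the family is $r$-admissible.

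Finally, applying \Cref{thm:PTAS} with $N=T(n+1)$ points and $m\le n(T-\Delta+1)$ regions gives a $(1+\varepsilon)$-approximation in time $O\big(m\,N^{O(\varepsilon^{-2})}\big)=O\big((T(n+1))^{O(\varepsilon^{-2})}\big)$. The underlying-cycle case is handled by the same construction, routing the single extra region for the wrap-around edge $v_nv_0$ through the empty area outside the bounding box of the grid, so that its body avoids all points and all other region bodies; this adds only $O(\Delta)$ crossings with the regions of $e_1$ and $e_n$ and keeps $r$ constant. I expect the main obstacle to be the careful bookkeeping in the adjacent-edge case — certifying that the out-of-phase bumps of the two capsules meet only at the at most $\Delta$ commonly active shared points and that the set differences remain connected — precisely because this is where a naive comb-shaped region would incur an unbounded number of crossings.
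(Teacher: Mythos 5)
Your proposal is correct and follows the same overall strategy as the paper: encode each pair (edge, $\Delta$-time window) as a geometric region containing exactly the vertex appearances that can cover that demand, observe the optima coincide, and invoke the Mustafa--Ray local-search PTAS (\cref{thm:PTAS}) with $|P|\leq T(n+1)$ points and at most $n(T-\Delta+1)$ regions, which yields exactly the stated running time. Where you genuinely diverge is in the geometric realisation of the regions, which is the technical crux. The paper builds axis-parallel rectangles of two interleaved shapes $A$ and $B$ (with $a_1 > b_1 > b_2 > a_2$), alternating the shape by the parity of $i+t$ in both the edge and the window direction, and additionally shifting same-shape rectangles horizontally within each period of $\Delta$ windows to avoid boundaries overlapping in infinitely many points; this is engineered to keep the family $2$-admissible (pseudo-disk-like). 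Your capsules with bump/pinch profiles instead settle for $r$-admissibility with $r = O(\Delta)$: nesting via strictly increasing widths handles the same-edge stacks (replacing the paper's shift trick), and the at-most-$\Delta$ commonly active heights bound the adjacent-edge crossings. This is legitimate because \cref{thm:PTAS}, as stated and as proved by Mustafa and Ray, applies to $r$-admissible families for any fixed $r$ with running time independent of $r$, and $\Delta$ is a fixed parameter of \dTVC; your construction arguably makes the two key invariants (region contents are exactly the valid covers; pairwise crossings are bounded with connected differences) easier to certify locally, at the price of a weaker admissibility constant. Two further small differences: you place all $T(n+1)$ grid points rather than only the appearances incident to an active edge as the paper does (harmless, since your regions contain only useful points and the size bound is unchanged), and you spell out the cycle case by routing the wrap-around edge's regions outside the bounding box of the grid, a case the paper's written proof does not treat explicitly even though the theorem claims it.
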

\begin{proof}

Let $\mathcal{G}=(G,\lambda)$ be a temporal path, on vertices $\{v_1, v_2, \dotsc, v_n\}$, with lifetime $T$.
We first have to create the range space $R = (P, D)$, where $P \subseteq V \times [T]$ is a set of vertex appearances and $D$ is a set of $2$-admissible regions.

Set $P$ of time vertices consist of vertex appearances $(v_i, t)$ for which $t \in \lambda(e_i) \cup \lambda(e_{i+1})$.
Intuitively, if edges incident to $v$ do not appear at time $t$, then $(v,t)$ is not in $P$.
Set $D$ of $2$-admissible regions consists of rectangles of $2$ different sizes.
For every edge $e_i$ that appears in the time window $W_t$ we create one rectangle $R_i^t$, that includes all vertex appearances incident to $e_i$ in $W_t$.
For example, if edge $e_i$ appears in the time window $W_t$ at times $t_1$ and $t_2$, then the corresponding rectangle $R_i^t$ contains vertex appearances $(v_{i-1},t_1), (v_{i}, t_1), (v_{i-1},t_2)$ and $(v_{i}, t_2)$.

It is not hard to see that $|P| \leq |V| \cdot T = (n+1)T$ and $|D| \leq |E| (T - \Delta + 1) = n (T-\Delta + 1)$.

\subsubsection*{Formal Construction.}
Since $D$ will be defined to be a set of $2$-admissible regions, the boundary of any two rectangles we construct should intersect at most $2$ times.
For this purpose, we use rectangles, of two different sizes.
Let us denote with $A$ the rectangle of size $a_1 \times a_2$ and with $B$ be the rectangle of size $b_1 \times b_2$, where $a_1 > b_1 > b_2 > a_2$.

As we said, for every edge $e_i$ that appears in the time window $W_t$ we construct exactly one rectangle.
These are the rules we use to correctly construct them.
\begin{enumerate}
	\item \label{regions:fixedWindow}
	For a fixed time window $W_t$	
	we construct the rectangles in such a way, that they intersect only in the case when their corresponding edges $e_i, e_j$ share the same endpoint in the underlying graph $G$.
	Since $G$ is a path, the intersection happens only in the case when $j = i+1$.
	We can observe also, that there are no three (or more) edges sharing the same endpoint and therefore no three rectangles intersect.
	
	We require also, that the rectangles corresponding to a pair of adjacent edges are not the same, \ie they alternate between form $A$ and $B$.
	\iflong For an example see \cref{fig:RegionfixedWindow}.
	
	\begin{figure}[h]
			\includegraphics[width=\linewidth]{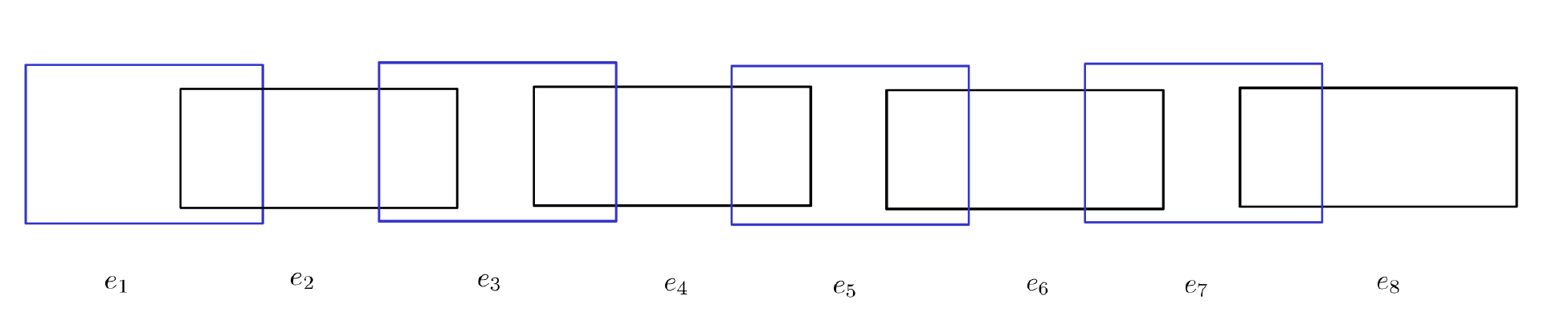}
			\caption{Regions in a fixed time window.}
			\label{fig:RegionfixedWindow}
	\end{figure}	
	\fi
	
	\item \label{regions:fixedEdge}
	For any edge $e_i$, rectangles corresponding to two consecutive time windows $W_t, W_{t+1}$ are not the same, \ie they alternate between the form $A$ and $B$.
	For an example see \cref{fig:RegionfixedEdge}.
	
	\begin{figure}[h]
		\centering
		\begin{subfigure}[t]{0.4\textwidth}
			\centering
			\includegraphics[width=0.45\linewidth]{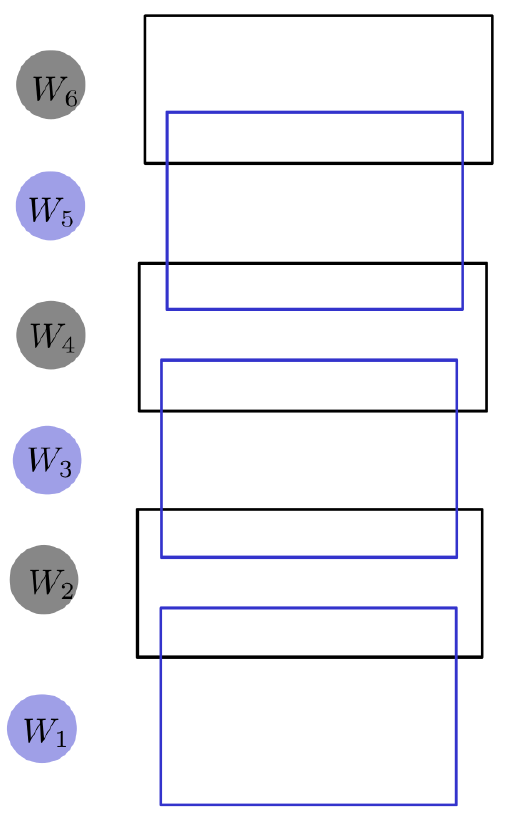}
			\caption{Alternating regions for an edge.}
			\label{fig:RegionfixedEdge} 
		\end{subfigure} \quad
		\begin{subfigure}[t]{0.4\textwidth}
			\centering
			\includegraphics[width=0.7\linewidth]{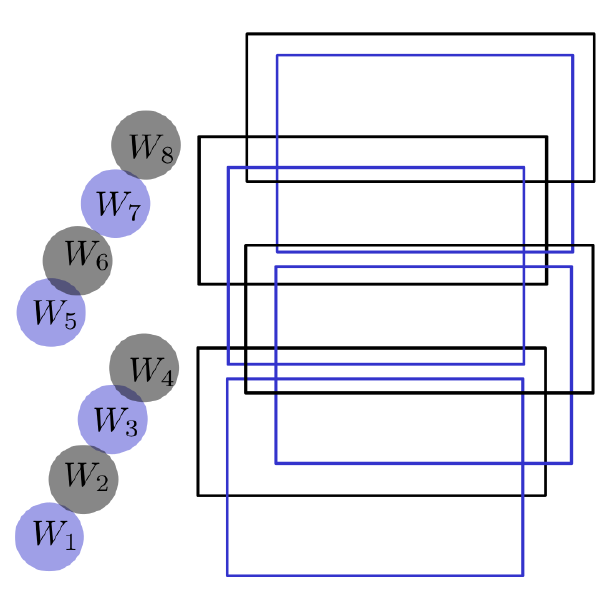}
			\caption{Regions for an edge with shift, when $\Delta = 4$. Each region $W_i$ starts at same horizontal position as $W_{i - 4}$.}
			\label{fig:RegionfixedEdge-delta} 
		\end{subfigure}
		\caption{Creating regions for one edge.}
	\end{figure}

	When the time windows are of size $\Delta$, 
	there are at most $\Delta$ rectangles intersecting at every time-step $t$.
	This holds, because if the edge $e_i$ appears at time $t$ it is a part of the time windows $W_{t_\Delta +1}, W_{t_\Delta +2}, \dots , W_{t}$.
	
	Since the constructed rectangles are of two sizes, if $\Delta \geq 3$ we create intersections with an infinite number of points between the boundary of some rectangles,
	if we just ``stack'' the rectangles upon each other.
	Therefore, we need to shift (in the horizontal direction) rectangles of the same form in one $\Delta$ time window.
	Since the time window $W_t$ never intersects with $W_{t + \Delta}$, we can shift the time windows $W_{t+1}, \dots , W_{t+ \Delta - 1}$ and fix the $W_{t+ \Delta}$ at the same horizontal position as $W_t$ .
	For an example see \cref{fig:RegionfixedEdge-delta}.
\end{enumerate}

Combining both of the above rules we get a grid of rectangles.
Moving along the $x$ axis corresponds to moving through the edges of the path and moving along the $y$ axis corresponds to moving through the time-steps.

\iflong
\begin{figure}[h]
	\centering
		\includegraphics[width=0.8\textwidth]{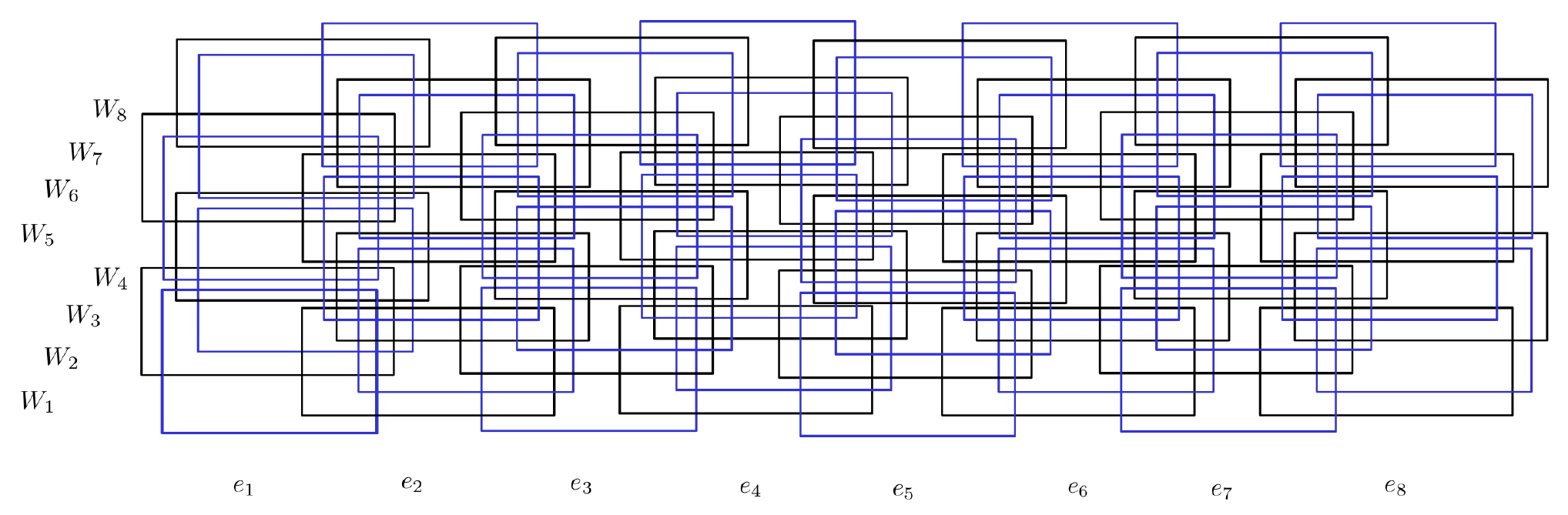}
		\caption{Regions for edges $e_1, \dots, e_8$ in the time windows $W_1, \dots , W_8$, in the case of $4$-TVC.}
		\label{fig:RegionAll}
\end{figure} 
\fi

\iflong 
By \cref{fig:RegionfixedWindow} and \cref{regions:fixedEdge} of the construction above, rectangles alternate between the form $A$ and $B$ in both dimensions. 
See figure \cref{fig:RegionAll} for example. \fi
\ifshort By construction, rectangles alternate between the form $A$ and $B$ in both dimensions. \fi
If an edge $e_i$ does not appear in the time window $W_t$, then we do not construct the corresponding rectangle $R_i^t$.
The absence of a rectangle from a grid does not change the pattern of other rectangles.
To determine what form a rectangle corresponding to edge $e_i$ at time-window $W_t$ is, we use the following condition:

\begin{equation*}
R_i^t = \begin{cases} A & \text{if $i +t  \equiv 0 \pmod 2$,} \\
B & \text{else.} \end{cases}
\end{equation*}

Now we define where the points are placed.
We use the following conditions.
\begin{enumerate}[a.]
	\item If an edge $e_i = v_{i-1} v_i$ appears at time $t$ we add vertex appearances $(v_{i-1},t), (v_i, t)$ to all of the rectangles $R_i^{t'}$, where $t-\Delta +1 \leq t' \leq t$, if they exist.
	
	Equivalently, we add the vertex appearances $(v_{i-1},t), (v_i, t)$ in the intersection of  the rectangles corresponding to the edge $e_i$ in the time windows $W_{t-\Delta +1}, \dots , W_{t}$.
	For an example see \cref{fig:Vertices-edge}.
	
	\item If an edge $e_i$ does not appear at time $t$, then the time vertices
	$(v_{i-1}, t),(v_i,t)$ are not included in the rectangles
	$R_i^{t'}$ ($t-\Delta +1 \leq t' \leq t$), if they exist.
	
	\item If two adjacent edges $e_i, e_{i+1}$ appear at the same time $t$, we add to the intersection of the rectangles $R_i^{t'}, R_{i+1}^{t'}$ the vertex $(v_i,t)$, where $t-\Delta +1 \leq t' \leq t$.
	For an example see \cref{fig:Vertices-twoedge}.
\end{enumerate}

\begin{figure}[htb]
		\centering
		\begin{subfigure}[t]{0.4\textwidth}
			\includegraphics[width=0.8\linewidth]{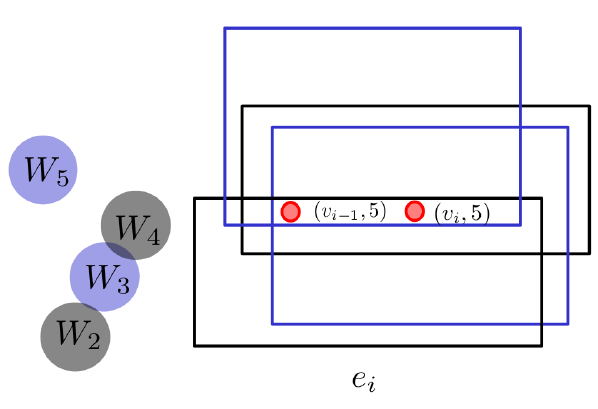}
			\caption{Edge $e_i$ appearing at time $5$ and the corresponding placement of vertex appearances to the rectangles, when $\Delta = 4$.}
			\label{fig:Vertices-edge}
		\end{subfigure} \quad
		\begin{subfigure}[t]{0.4\textwidth}
			\includegraphics[width=\linewidth]{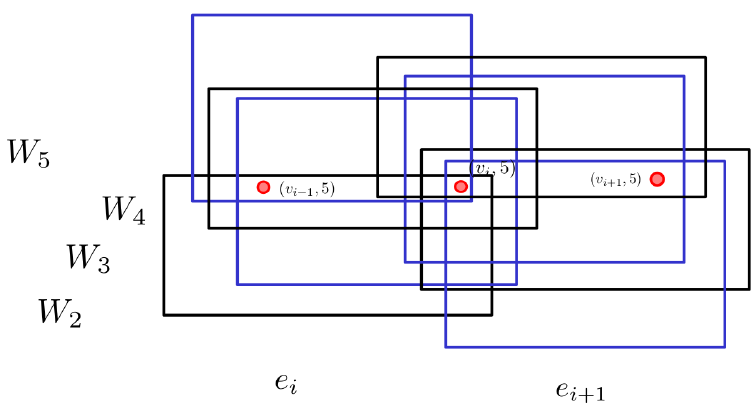}
			\caption{Edges $e_i$ and $e_{i+1}$ appearing at time $5$ and the corresponding vertex appearances placement in the rectangles, when $\Delta= 4$.}
			\label{fig:Vertices-twoedge}
		\end{subfigure}
	\caption{Example of placement of the vertices into the rectangles when $\Delta = 4$.}
\end{figure}
It is straightforward to verify that 
finding the minimum hitting set of the range space is equivalent to finding the minimum \dTVC\ for $\mathcal{G}$.
On the constructed range space we use the local search algorithm from \cite{mustafa2010hittingset}
which proves our result.
	Note, that we assume that if a region does not admit any points, then it is trivially satisfied.
\end{proof}

\section{Algorithms for Bounded Degree Temporal Graphs}\label{bounded-degree-sec}
In this section, we extend our focus from temporal graphs with underlying paths or cycles to instances of \(\Delta\)-TVC with more general degree restrictions.
In particular, we present an algorithm that solves \textsc{\(\Delta\)-TVC} exactly, in time that is 
exponential in the number of edges of the underlying graph.
We then use this algorithm to give a \((d-1)\)-factor approximation algorithm (where \(d\) is the maximum vertex degree in any time-step) and finally give an \FPT-algorithm parameterized by the size of a solution.


Before we present our algorithms let us define the generalized notion of (sub)instances of our $\Delta$-TVC problem. We use this notion in the formulation of the exact exponential time algorithm and in the approximation algorithm.

\begin{definition}[\textsc{Partial \(\Delta\)-TVC}]
	Let \((G, \lambda)\) be a temporal graph.
	An instance of \textsc{Partial \(\Delta\)-TVC} is given by \((G,\lambda,\ell,h)\), where \(\ell: E(G) \to [T]\) and \(h: E(G) \to [T]\) map each edge to the starting time of its \emph{lowest uncovered time window} and \emph{highest uncovered time window} respectively. 
	The task is to find a temporal vertex subset \(\mathcal{C}\) of minimum size, 
	such that for every edge \(e \) and every time window \(W_i\) with \(\ell(e) \leq i \leq h(e)\) if \(e \in E[W_i]\) then there is some \((v,t) \in \mathcal{C}[W_i]\) where \(v \in e\), \(t \in \lambda(e)\).
\end{definition}
Intuitively, \textsc{Partial \(\Delta\)-TVC} of \((G,\lambda,\ell,h)\) is a $\Delta$-TVC of $(G,\lambda)$, where each edge $e \in E(G)$ has to be covered only in time windows $W_{l(e)}, W_{l(e)+1}, \dots, W_{h(e)}$.
Obviously, if \(\ell(e) = 1\) and \(h(e) = T - \Delta + 1\) for all \(e \in E(G)\) then
a solution of
\textsc{Partial \(\Delta\)-TVC} on \((G,\lambda,\ell,h)\) is also a valid solution of $\Delta$-TVC on $(G, \lambda)$.

\subsection{Exact Algorithm}
\label{subsec:exact}
\newcommand{\unddeg}{{d_G}}
In the following we denote by \(\unddeg\) the degree of the underlying graph $G$ of the considered instances \((G,\lambda,\ell,h)\) of \textsc{Partial \(\Delta\)-TVC}.
We provide a dynamic programming algorithm with running time $\mathcal{O}(T \Delta^{\mathcal{O}(|E(G)|)})$, as the next theorem states.
\begin{theorem}\label{thm:optPartialTVC}
	For every \(\Delta\geq 2\), a solution to \textsc{Partial \(\Delta\)-TVC} is computed in \(\mathcal{O}(Tc^{\mathcal{O}(|E(G)|)})\)  time, 
	where \(c = \min \{2^\unddeg,\Delta\}\) and \(T\) is the lifetime of the temporal graph in the instance.
\end{theorem}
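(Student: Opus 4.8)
The plan is to design a dynamic program that sweeps through the lifetime of the temporal graph one time slot at a time, maintaining a compact state that records, for each edge, which of its recent appearances have already been covered. The key observation is that within any $\Delta$-time window $W_i$, an edge $e$ must be covered by some appearance $(v,t)$ with $t \in \lambda(e) \cap W_i$. Since each window has width $\Delta$, whether the constraints involving $e$ are satisfied depends only on covering decisions made within the last $\Delta$ time slots. I would therefore define the DP state at time $t$ to encode, for each edge $e \in E(G)$, enough information about covering events in the window $[t-\Delta+1, t]$ to decide all window constraints $W_i$ whose window closes at or before $t$. The two natural encodings give the two bounds in $c = \min\{2^{\unddeg}, \Delta\}$: either we track, per edge, the most recent time slot in which it was covered (there are $O(\Delta)$ relevant choices per edge, giving a $\Delta^{O(|E(G)|)}$ state space), or we track per vertex which of its incident edges are currently covered in the active window (a subset of the up-to-$\unddeg$ incident edges, giving $2^{\unddeg}$ possibilities per vertex and hence $2^{O(\unddeg \cdot n)} = 2^{O(|E(G)|)}$ overall).

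First I would formalize the state and the transition. Processing time slot $t$, a transition chooses a set of vertex appearances $\{(v,t) : v \in S_t\}$ to add to the partial cover at time $t$; this choice updates the per-edge ``last covered'' information. Before advancing from $t$ to $t+1$, I would verify the feasibility constraint for every window that terminates at time $t$: concretely, for each edge $e$ and each window index $i$ with $\ell(e) \le i \le h(e)$ and $i + \Delta - 1 = t$, I check that the stored state witnesses a covering appearance of $e$ inside $W_i$. If any such constraint is violated, the transition is infeasible and is discarded. The cost added by a transition is simply $|S_t|$, and the DP minimizes the total cost over all feasible sequences of transitions, reading off the answer from the states reachable at time $T$.

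The correctness argument has two directions. For soundness, any sequence of feasible transitions produces a temporal vertex subset whose every window constraint in the range $[\ell(e), h(e)]$ is explicitly checked when that window closes, so the output is a valid \textsc{Partial $\Delta$-TVC}. For optimality, I would argue that the state captures all information about the past that can influence future feasibility: two partial solutions reaching the same state at time $t$ are interchangeable, since window constraints closing after time $t$ depend only on covering events within the last $\Delta$ slots, which the state records. A standard exchange/prefix-optimality argument then shows the DP value equals the optimum. For the running time, there are $T$ time layers; each layer has at most $c^{O(|E(G)|)}$ states where $c = \min\{2^{\unddeg}, \Delta\}$ as argued above; and each transition and feasibility check can be carried out within the claimed bound, giving total time $\mathcal{O}(T c^{\mathcal{O}(|E(G)|)})$.

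The main obstacle I anticipate is pinning down the exact minimal state so that the transition is both \emph{Markovian} (future feasibility depends only on the current state, not the full history) and small enough to meet \emph{both} factors in $c = \min\{2^{\unddeg}, \Delta\}$. In particular, one must be careful that an appearance $(v,t)$ covers an edge $e = uv$ only when $t \in \lambda(e)$, so the per-edge timing information is genuinely constrained by $\lambda$; and the $2^{\unddeg}$ encoding requires tracking, per active window, which incident edges of a vertex are simultaneously coverable, without double counting appearances shared by adjacent edges. Getting these bookkeeping details right -- and verifying that every window constraint within $[\ell(e),h(e)]$ is checked exactly once as the sweep passes its closing time -- is where the real care lies; the asymptotic running time then follows immediately from the size of the state space.
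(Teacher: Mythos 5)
Your first encoding is essentially the paper's own dynamic program: the paper sweeps over time maintaining a table indexed by $(t,x_1,\dotsc,x_m)$, where $x_i\in[0,\Delta]$ records, per edge $e_i$, the start of its lowest still-uncovered time window relative to the current time --- information interchangeable with your ``most recent covering time within the last $\Delta$ slots''. Your Markovian/exchange argument for that encoding is sound and correctly yields the $T\Delta^{\mathcal{O}(|E(G)|)}$ half of the bound.

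The genuine gap is in the second bound. Your proposed state --- per vertex, the \emph{subset} of incident edges ``currently covered in the active window'' --- is not Markovian, and no bookkeeping fixes it at that alphabet size: at time $t$ there are $\Delta$ simultaneously open windows $W_{t-\Delta+1},\dotsc,W_t$, and whether a past appearance $(v,t_0)$ still satisfies the constraint of window $W_i$ depends on whether $i\le t_0$, i.e.\ on the \emph{time} $t_0$, not on a boolean. A boolean cannot tell the sweep when a covering event ``expires'' as the window slides, so two histories mapping to the same boolean state are not interchangeable, and the DP either accepts infeasible solutions or must re-derive timing information it discarded. (Your arithmetic $2^{\mathcal{O}(d_G\cdot n)}=2^{\mathcal{O}(|E(G)|)}$ is also off --- on a star $d_G n=\Theta(n^2)$ while $|E(G)|=n-1$; the correct per-vertex product is $\prod_v 2^{\deg v}=4^{|E(G)|}$ --- but even a valid $2^{\mathcal{O}(|E(G)|)}$ state space would be \emph{stronger} than the theorem's $\bigl(2^{d_G}\bigr)^{\mathcal{O}(|E(G)|)}$, which is itself a hint that the encoding is too coarse.) The paper obtains the $2^{d_G}$ factor by a different mechanism: it keeps the timing-based state but bounds the number of \emph{reachable} states via edge \emph{configurations}. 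The configuration of a time-edge $(e_i,t)$ at endpoint $v$ is the set of edges incident to $v$ that appear at time $t$; each edge has at most $2^{d_G}$ configurations, and an exchange lemma shows that an optimum solution may be assumed to cover $e_i$ at the \emph{last} appearance of each configuration inside the relevant window (shifting a covering appearance later within the window, while keeping its configuration, never uncovers anything). Hence only $\mathcal{O}(2^{d_G})$ covering times per edge need ever be branched on, and every filled table entry is determined by one configuration choice per edge, giving $T\min\{2^{d_G},\Delta\}^{\mathcal{O}(|E(G)|)}$ overall. This last-appearance-per-configuration lemma is the missing idea; without it your proposal establishes only the $\Delta^{\mathcal{O}(|E(G)|)}$ bound and thus fails to prove the theorem precisely when $\Delta>2^{d_G}$.
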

Intuitively, our algorithm scans the temporal graph $\Delta$ time-steps at the time and for each edge, $e_i$ calculates the optimal vertex appearance that covers it.
Since a time-edge $(e_i, t)$ can appear at each time with a different combination of time-edges incident to it (we call this a \emph{configuration} of edges), it is not always the best to simply cover the last appearance of $e_i$ in the observed time window.
To overcome this property we show that it is enough to consider the latest appearance of each configuration of $e_i$, which drastically reduces the search space of our algorithm.
Let us now proceed with the detailed proof of \cref{thm:optPartialTVC}.
\iflong
\begin{proof}

Let \((G,\lambda,\ell,h)\) be an instance of \textsc{Partial \(\Delta\)-TVC}. We denote with $m$ the number of edges in the underlying graph $G$
and with $T$ the lifetime of the input instance, more precisely $T \in W_t$, where $t = \max_{e \in E(G)}(h(e))$.
For \((G,\lambda,\ell,h)\)  we define a dynamic programming table $f$, that
is indexed by tuples \((t,x_1,\dotsc, x_{m})\), where \(t \in [T - \Delta + 1]\) and \(x_i \in [0,\Delta]\) for all \(i \in [m]\).
We store in \(f(t, x_1, \dotsc, x_{m})\) the size of a minimum \textsc{Partial \(\Delta\)-TVC}
of a sub-instance $(G,\lambda,\ell',h)$ with $\ell'(e_i) = t + x_i$,
\ie we modify the lowest uncovered time window of each edge.
We call such a minimum solution a \emph{witness} for \(f(t,x_1, \dotsc, x_{m})\).
Whenever no solution exists we keep \(f(t,x_1, \dotsc, x_{m})\) empty.

The following observation is an immediate consequence of the definition and the fact that \(t + x_i = (t + 1) + (x_i - 1)\).
\begin{observation}\label{obs-f:increaset}
		If for all $i$, $x_i \geq 1$ then:
	$$f(t, x_1, x_2, \dots , x_m) = f(t+1, x_1 - 1, x_2 - 1, \dots , x_m - 1).$$
\end{observation}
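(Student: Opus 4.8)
The plan is to observe that the two sides of the claimed identity are, by the very definition of the table $f$, the minimum size of a Partial $\Delta$-TVC of \emph{one and the same} sub-instance; once this is made explicit, the equality is immediate. So the work is entirely in unwinding definitions and checking that the right-hand index is legal.

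First I would write out what each side means. By definition, $f(t, x_1, \dots, x_m)$ is the size of a minimum Partial $\Delta$-TVC of the sub-instance $(G,\lambda,\ell',h)$ with $\ell'(e_i) = t + x_i$ for every $i \in [m]$, the function $h$ being inherited from the input instance. Applying the identical definition to the right-hand side, $f(t+1, x_1 - 1, \dots, x_m - 1)$ is the size of a minimum Partial $\Delta$-TVC of the sub-instance $(G,\lambda,\ell'',h)$ with $\ell''(e_i) = (t+1) + (x_i - 1)$. I would then invoke the arithmetic identity $(t+1) + (x_i - 1) = t + x_i$ to conclude $\ell''(e_i) = \ell'(e_i)$ for every edge $e_i$, while $h$ is the same on both sides. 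Hence both table entries are the minimum over feasible temporal vertex subsets of literally the same Partial $\Delta$-TVC instance, which forces the two values to coincide.

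The only point requiring care, and exactly the role played by the hypothesis $x_i \geq 1$, is that the right-hand coordinates must be a legal entry of the table, i.e.\ each $x_i - 1$ must lie in the admissible range $[0,\Delta]$. The assumption $x_i \geq 1$ is precisely what guarantees $x_i - 1 \geq 0$, so no offset becomes negative, and since $x_i \leq \Delta$ we trivially still have $x_i - 1 \leq \Delta$. I do not expect any genuine combinatorial obstacle here: the observation is purely definitional, reflecting the fact that advancing the reference time $t$ by one while decrementing every per-edge offset $x_i$ leaves each edge's lowest uncovered window $t + x_i$ (and therefore the whole instance) unchanged. The one boundary subtlety worth a remark is when $t+1$ exceeds the nominal range $[T-\Delta+1]$; in that case every $\ell'(e_i) > h(e_i)$, so no window needs covering and both sides equal $0$, so the identity persists even there.
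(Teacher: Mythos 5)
Your proof is correct and matches the paper's argument exactly: the paper also derives the observation immediately from the definition of $f$ together with the identity $t + x_i = (t+1) + (x_i - 1)$, so both entries refer to the same sub-instance $(G,\lambda,\ell',h)$. Your extra checks that $x_i - 1 \in [0,\Delta]$ and the boundary remark about $t+1$ are sensible bookkeeping the paper leaves implicit.
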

We introduce the notion of \emph{configurations} of edges, 
that we use to recursively fill the dynamic programming table.
Let $v,w \in V(G)$ be the endpoints of edge $e_i$.
An edge configuration of time-edge $(e_i,t)$ at endpoint $v$ (resp.~$w$) consist of all time-edges that are incident to $(v,t)$ (resp.~$(w,t)$).
Formally, $\gamma(e_i,t)_v = \{e \in E(G) \mid v \in e_i \cap e, t \in \lambda(e)\}$ is a configuration of edges incident to time-edge $(e_i,t) $ at endpoint $v$.
With $\gamma (e_i)_v = \{\gamma(e_i,t)_v \mid t \in [T], t \in \lambda(e_i)\}$ we denote the \emph{set of all configurations} of edge $e_i \in (G,\lambda)$ at endpoint $v$.
Note, since each vertex in $G$ is of degree at most $d_G$ it follows that $e_i$ is incident to at most $2(d_G -1)$ edges in $G$ ($d_G -1$ at each endpoint) and therefore appears in at most $2^{d_G}$ different edge configurations, i.~e. $|\gamma(e_i)_{v_i}| + |\gamma(e_i)_{w_i}| \leq 2^{d_G}$ for all $v_iw_i=e_i \in E(G)$.
The applicability of edge configurations is presented by the following claim.

\begin{claim}
	Suppose that the edge $vw = e_i$ is optimally (temporally) covered in every time window up to $W_t$.
	Then,
	to determine an optimum temporal vertex cover of $e_i$ in the time window $W_t$, it is enough to check only the last appearance of every edge configuration of $e_i$ in $W_t$.
\end{claim}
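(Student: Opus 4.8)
The plan is to prove the claim by showing that if an optimum partial cover uses a vertex appearance to cover $e_i$ inside the window $W_t$, then we may assume without loss of generality that this appearance is at the time of the \emph{last} occurrence of some edge configuration of $e_i$ within $W_t$. First I would fix an optimum solution $\mathcal{C}$ restricted to $W_t$ and consider the vertex appearance $(v,s)$ (with $v\in e_i$ and $s\in\lambda(e_i)\cap W_t$) that $\mathcal{C}$ uses to cover $e_i$ in this window. The key observation is that the only role $(v,s)$ plays, beyond covering $e_i$, is to simultaneously cover the other edges incident to $v$ that are active at time $s$ — and this set of edges is exactly the configuration $\gamma(e_i,s)_v$. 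Hence the ``usefulness'' of the appearance $(v,s)$ to the rest of the cover depends on $s$ only through its configuration $\gamma(e_i,s)_v$, not through the exact value of $s$.

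The main step is then an exchange argument. Suppose $(v,s)$ is the chosen appearance and let $s'$ be the \emph{latest} time in $W_t$ at which the same configuration $\gamma(e_i,s')_v = \gamma(e_i,s)_v$ occurs, so $s \le s'$. I would argue that replacing $(v,s)$ by $(v,s')$ in $\mathcal{C}$ keeps the solution feasible without increasing its size. Feasibility for $e_i$ in $W_t$ is immediate since $s'\in W_t\cap\lambda(e_i)$. For every other edge $e$ that $(v,s)$ was covering, we have $e\in\gamma(e_i,s)_v=\gamma(e_i,s')_v$, so $(v,s')$ covers the time-edge $(e,s')$; because both $s$ and $s'$ lie in $W_t$, the edge $e$ is still covered within the same window, so no window-constraint indexed by $W_t$ is violated. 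The only subtlety is that moving the appearance to a later time $s'$ could affect coverage obligations in \emph{later} windows $W_{t+1},W_{t+2},\dots$; here I would use that pushing a covering appearance later can only help (or be neutral for) subsequent overlapping windows, since a later appearance belongs to at least as many of the forward windows $W_{t},\dots,W_{t+\Delta-1}$ as the earlier one, matching the monotonicity already exploited in \cref{lem:NP-verticalLineGadet} and encoded by \cref{obs-f:increaset}.

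The hard part, and where I would spend the most care, is handling the case where $(v,s)$ was being used to cover \emph{two} edges $e_i$ and an adjacent edge $e_{i+1}$ simultaneously (the shared-endpoint situation), so that shifting to $s'$ must preserve the joint coverage of both. This is precisely why the statement is phrased in terms of configurations rather than single edges: the configuration $\gamma(e_i,s)_v$ records exactly which neighbouring edges are co-covered, so choosing the latest appearance of that \emph{same} configuration guarantees all co-covered edges remain active at $s'$. I would therefore carry out the exchange configuration-by-configuration, noting that there are at most $|\gamma(e_i)_{v}|+|\gamma(e_i)_{w}|\le 2^{d_G}$ distinct configurations, and that for each we need only inspect its last appearance in $W_t$.

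Combining these observations yields the claim: an optimum cover of $e_i$ in $W_t$ need only place a vertex appearance at the last occurrence of one of the finitely many configurations of $e_i$ inside $W_t$. This is exactly the reduction in search space the dynamic program of \cref{thm:optPartialTVC} relies on, since it bounds the number of relevant covering choices per edge per window by $2^{d_G}$ rather than by $\Delta$, giving the factor $c=\min\{2^{d_G},\Delta\}$ in the running time.
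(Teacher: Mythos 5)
Your core argument is the same exchange the paper uses: replace the covering appearance $(v,s)$ by $(v,s')$, where $s'$ is the last time in $W_t$ at which the same configuration $\gamma(e_i,s)_v$ occurs, and observe that $(v,s')$ covers exactly the same set of underlying edges. Your treatment of the window $W_t$ itself is correct, and your superset observation for forward windows is also correct and even more explicit than the paper's: every window $W_j$ with $j \geq t$ that contains $s$ also contains $s'$, since $s' \leq t+\Delta-1$ gives $s'-\Delta+1 \leq t \leq j$.

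However, you misidentify where the danger lies, and this leaves a genuine gap. Moving an appearance \emph{later} is harmless for later windows --- your own superset argument shows this --- but it is potentially fatal for \emph{earlier} windows: every window $W_j$ with $s-\Delta+1 \leq j \leq s'-\Delta$ contains $s$ but not $s'$, and all such $j$ satisfy $j < t$. If $(v,s)$ was the only appearance covering some edge $e$ of the configuration --- or $e_i$ itself --- in such a window $W_j$, then your modified set is no longer feasible. You assert that ``the only subtlety is \dots\ later windows'' and never address these backward windows at all. The paper closes exactly this hole by invoking the hypothesis of the claim together with the semantics of \textsc{Partial $\Delta$-TVC} maintained by the dynamic program: at the moment the exchange is performed, every covering obligation in windows strictly before $W_t$ --- for $e_i$ via the stated assumption, and for the configuration edges via the bookkeeping that records each edge's lowest uncovered time window --- has already been discharged by other, untouched appearances, so the swap cannot affect coverage in any window up to $W_t$ (this is what the paper's sentence ``all edges in configuration $\gamma(e_i,t_1)_v$ remain optimally covered in all time windows up to $W_t$'' is doing). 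Your proof is repaired by adding precisely this appeal; as written, the exchange is not sound for an arbitrary optimum cover, since earlier-window coverage may be destroyed.
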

\begin{proof}
	Suppose that 
	the configuration of $e_i$ at time $t_1$ is the same as the configuration of $e_i$ at time $t_2$ (i.e. $\gamma(e_i,t_1)_v = \gamma(e_i,t_2)_v$ and $\gamma(e_i,t_1)_w = \gamma(e_i,t_2)_w$), where $t \leq t_1 < t_2 \leq t + \Delta -1$.
	Let $\mathcal{C}$ be an optimum temporal vertex cover of $\mathcal{G}$ that covers edge $e_i$ at time $t_1$ with the vertex appearance $(v,t_1)$.
	Now let $\mathcal{C}' = \{ \mathcal{C} \setminus (v, t_1)\} \cup (v, t_2)$.
	Since $\mathcal{C}'$ and $\mathcal{C}$ differ only in the time window $W_t$
	all edges in configuration $\gamma(e_i,t_1)_v$ remain optimally (temporally) covered in all time windows up to $W_t$.
	Since $(v,t_2)$ temporally covers the same edges as $(v,t_1)$ in the time window $W_t$, it follows that $\mathcal{C}'$ is also a $\Delta$-TVC.
	Clearly $\mathcal{C}'$ is not bigger than $\mathcal{C}$.
	Therefore, $\mathcal{C}'$ is also an optimum solution.
\end{proof}

\noindent
Using the properties presented above we are ready to describe the procedure that computes the value of $f(t,x_1,x_2, \dots, x_m)$.
\begin{itemize}
	\item[Case 1.] If for all $i$ it holds $x_i > 0$,
	then $f(t, x_1, x_2, \dots, x_m) = f(t+1, x_1-1, x_2-1, \dots, x_m-1)$.
	\item[Case 2.] There is at least one $x_i = 0$.
	Let $i$ be the smallest index for which this holds and let $v_iw_i=e_i$ be its corresponding edge.
	Now, to temporally cover $e_i$ in the time window $W_t$ we check the last appearance of all edge configurations of $e_i$ at $v_i, w_i$ in the time window $W_t$, and choose the one that minimizes the
	solution. 
	Namely, if we cover $e_i$ at time $t_i$ using vertex $v_i$ we add $(v_i,t_i)$ to the solution, set the new value of $x_i$ to $t_i - t$
	and also the value $x_j$ of every edge $e_j$ in $\gamma(e_i,t)_{v_i}$ to $t_i - t + 1$.
	We then recursively continue the calculation of $f$.
	At the end, we select such $t_i$ (such edge configuration) that minimizes the value $f(t,x_1,x_2,\dots,x_m)$.
\end{itemize}

\paragraph{Running Time.}
It is straightforward to see that the number of table entries is bounded by \(T\Delta^{|E(G)|}\) which also bounds the complexity of computing them.\\
A second bound can be established by noticing that each entry of \(f\), that is filled during the dynamic program, is uniquely determined by configurations of each edge.
Together both bounds yield the desired complexity.
\end{proof}
\fi

The above algorithm solves \textsc{Partial \(\Delta\)-TVC} in \(\mathcal{O}(Tc^{\mathcal{O}(|E(G)|)})\), where \(c = \min \{2^{d_\Delta},\Delta\}\).
Note, that the following parameterized complexity result 
of \dTVC\ follows directly from our algorithm above.

\begin{corollary}
	For every $\Delta\geq 2$, \dTVC\ can be solved in time in \(\mathcal{O}(Tc^{|E(G)|})\) where \(c = \min \{2^{\mathcal{O}(|E(G)|)},\Delta\}\), 
	and thus \dTVC\ is in \FPT\ parameterized by \(|E(G)|\).
\end{corollary}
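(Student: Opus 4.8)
The plan is to obtain the corollary as an immediate specialisation of \Cref{thm:optPartialTVC}. The key observation is that \dTVC\ is exactly the restriction of \textsc{Partial \(\Delta\)-TVC} in which every edge must be covered in \emph{all} of its time windows. Concretely, I would set \(\ell(e) = 1\) and \(h(e) = T - \Delta + 1\) for every \(e \in E(G)\); as already remarked in the excerpt, with these choices any solution of \textsc{Partial \(\Delta\)-TVC} on \((G,\lambda,\ell,h)\) is a valid \dTVC\ of \((G,\lambda)\) and conversely, so the two optima coincide. Thus running the algorithm of \Cref{thm:optPartialTVC} on the instance \((G,\lambda,\ell,h)\) solves \dTVC\ on \((G,\lambda)\).

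It then remains only to rewrite the running-time bound. \Cref{thm:optPartialTVC} gives \(\mathcal{O}(Tc^{\mathcal{O}(|E(G)|)})\) with \(c = \min\{2^{d_G},\Delta\}\), where \(d_G\) is the maximum degree of the underlying graph \(G\). Since the maximum degree of any graph is at most its number of edges, we have \(d_G \leq |E(G)|\) and hence \(2^{d_G} \leq 2^{|E(G)|} = 2^{\mathcal{O}(|E(G)|)}\). Substituting this into the bound yields the claimed running time \(\mathcal{O}(Tc^{|E(G)|})\) with \(c = \min\{2^{\mathcal{O}(|E(G)|)},\Delta\}\).

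Finally, for membership in \FPT\ parameterised by \(|E(G)|\), I would use the first branch of the minimum, \(c \leq 2^{\mathcal{O}(|E(G)|)}\), so that \(c^{|E(G)|} \leq 2^{\mathcal{O}(|E(G)|^2)}\) and the total running time becomes \(\mathcal{O}\bigl(T \cdot 2^{\mathcal{O}(|E(G)|^2)}\bigr)\). The exponential factor depends only on the parameter \(|E(G)|\), while \(T\) is polynomial in the size of the input temporal graph, so the running time has the form \(f(|E(G)|)\cdot \mathrm{poly}(\text{input})\) required for fixed-parameter tractability.

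Since all the algorithmic content is carried by \Cref{thm:optPartialTVC}, there is essentially no technical obstacle here. The only two points that warrant an explicit sentence are (i) the reduction of \dTVC\ to \textsc{Partial \(\Delta\)-TVC} via the constant maps \(\ell\) and \(h\), which is what makes the exact algorithm applicable, and (ii) recording that \(T\) is a polynomial factor in the input size, so that the bound \(T \cdot 2^{\mathcal{O}(|E(G)|^2)}\) genuinely qualifies as \FPT\ rather than merely \(\mathsf{XP}\); this is the one place where the choice of the \(2^{\mathcal{O}(|E(G)|)}\) branch of \(c\) (rather than the \(\Delta\) branch) is essential.
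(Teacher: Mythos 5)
Your proof is correct and takes essentially the same route as the paper, which states the corollary as following directly from \Cref{thm:optPartialTVC} via exactly the instantiation you use, namely \(\ell(e) = 1\) and \(h(e) = T - \Delta + 1\) for all \(e \in E(G)\). The details you make explicit---bounding \(d_G \leq |E(G)|\) so that \(2^{d_G} = 2^{\mathcal{O}(|E(G)|)}\), and observing that the \FPT\ claim rests on the \(2^{\mathcal{O}(|E(G)|)}\) branch of the minimum rather than the \(\Delta\) branch---are precisely the steps the paper leaves implicit, so there is nothing to correct.
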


\subsection{\boldmath Approximation Ratio Better Than \(d\)}\label{better-d-approx-subsec}
In this section, we focus on approximation algorithms.
In \cite{AkridaMSZ20}, the authors present the \(d\)-factor approximation algorithm (where \(d\) is the maximum vertex degree in any time-step), which uses the following idea.
First optimally solve the $\Delta$-TVC for each edge separately, which can be done in polynomial time, and then combine all the solutions.
In the worst case, the solution may include all $d$ neighbours of a specific vertex appearance $(v,t)$, instead of just $(v,t)$, which is how the bound is obtained.
We try to avoid this situation by iteratively covering paths with two edges (\(P_3\)-s) instead of single edges. With this approach, we force the middle vertex (called \emph{central vertex}) to be in the temporal vertex cover,
which gives us an error of at most \(d-1\).
Based on this idea we can formulate our \((d-1)\)-approximation algorithm. Note in our case we require that $d \geq 3$.
\subsubsection{Description of the Algorithm}
Let $\mathcal{G} = (G,\lambda)$ be an instance of $\Delta$-TVC.
We iteratively extend an initially empty set \(\mathcal{X}\) to a $\Delta$-TVC of $\mathcal{G}$ in the following way.
While there is an edge \(e \in E(G)\) that is not covered in some time window of \(\mathcal{G}\) we have to extend \(\mathcal{X}\); otherwise \(\mathcal{X}\) is a \(\Delta\)-TVC which we can return.
We proceed in two phases:

\iflong
\begin{enumerate}
\item 
While there is a 3-vertex path \(P\) that appears at some time step \(t\) and whose underlying edges are uncovered by \(\mathcal{X}\) in some time window $W_i$ with \(t \in W_{i}\),
we build an instance of the \textsc{Partial \(\Delta\)-TVC} in the following way.
Let \(S\) be the set of all time-steps $t$ in which both edges of $P$ appear and for which there are time windows containing that time-step in which each edge of \(P\) is uncovered.
More specifically,
\[S = \{t \in [T] \mid \begin{aligned}[t]
	& P \text{ appears at } t \text{ and}\\
	& \text{for every } e \in E(P) \text{ there is a } W_i \text{ s.t.\ } t \in W_i \text{ and } e \text{ is uncovered in } W_i\}.
\end{aligned}\]

We subdivide \(S\) into subsets \(S_1, \dotsc S_k\) with \(k \leq T\) such that
there is a gap of at least $\Delta$ between the last element of $S_i$ and the first element of $S_{i+1}$ (for all $i \in [k-1]$).
Now we iteratively consider each \(S_i\) and construct the \textsc{Partial $\Delta$-TVC} instance $\mathcal{H}$ given by \((P, {\lambda'}^P_i, \ell_i(P), h_i(P))\)
with 
\({\lambda'}^P_i\) defined as a mapping of each edge of \(P\) to \(S^P_i\),
\(\ell_i(P)\) is the smallest value \(t\) such that \((P,\min S^P_i)\) is not covered in time window \(W_t\) by \(\mathcal{X}\),
and \(h_i(P)\) is the smallest value \(t\) such that \((P,\max S^P_i)\) is not covered in time window \(W_t\) by \(\mathcal{X}\).
We use the algorithm from \Cref{subsec:exact} to solve such an instance, extend \(\mathcal{X}\) by the union of this solution, and continue with \(S^P_{i+1}\).\\
Note that these are the instances that bring the crucial advantage compared to the known \(d\)-approximation.
To ensure our approximation guarantee, we also need to allow the inclusion of appearances of a single underlying edge in different \(P_3\) instances for an example see \cref{fig:approximation-multipleEdges}.
This is why we restrict the temporal extent of each \(P_3\) instance to time-steps which are impacted by the actual appearances of the respective \(P_3\).
\begin{figure}[t]
	\centering
	\includegraphics[width=0.25\linewidth]{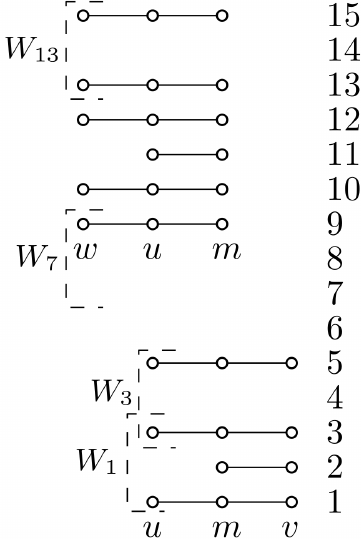}
	\caption{An example of a temporal graph with $\Delta = 3$ and two of the \textsc{Partial \(\Delta\)-TVC} instances $\mathcal{H}_1 = (P_1, \lambda_1, \ell_1, h_1)$ and $\mathcal{H}_2 = (P_2, \lambda_2, \ell_2, h_2)$, generated by our algorithm.
	Here, $\mathcal{H}_1$~is defined as $P_1 = (u, m, v)$, $S_1 = \{1, 3, 5\}$, $\ell_1 = 1$, $h_1 = 3$, $\lambda_1(um) = \lambda_1(mv) = S_1$.
	Also, $\mathcal{H}_2$ is defined as $P_2 = (w, u, m)$, $S_2 = \{9, 10, 12, 13, 15\}$, $\ell_2 = 7$, $h_2 = 13$, $\lambda_2(wu) = \lambda_2(um) = S_2$.
	Note that $\mathcal{H}_1 \cap \mathcal{H}_2 = \emptyset$, i.e.~$\mathcal{H}_1$ and $\mathcal{H}_2$ are two disjoint \textsc{Partial \(\Delta\)-TVC} instances, while the edge $um$ is present in both underlying paths $P_1$ and $P_2$.
	\label{fig:approximation-multipleEdges}}
\end{figure}

\item 
When there is no such $P_3$ left, we set \(F\) to be the set of edges \(e \in E(G)\) that are not yet covered in some time window of \(\mathcal{G}\).
For \(e \in F\), let \(S^e\) be the set of all time-steps in which \(e\) appears and is not covered.
We subdivide \(S^e\) into subsets \(S^e_1, \dotsc S^e_k\) with \(k \leq T\) such that
there is a gap of at least $2\Delta - 1$ between the last element of $S^e_i$ and the first element of $S^e_{i+1}$ (for all $i \in [k-1]$).
Now we consider the \textsc{Partial $\Delta$-TVC} instance given by \((\{e\}, {\lambda'}^e_i, \ell_i(e), h_i(e))\)
with \({\lambda'}^e_i\) defined a mapping of $e$ to \(S^e_i\),
\(\ell_i(e)\) is the smallest value \(t\) such that \((e,\min S^e_i)\) is not covered in time window \(W_t\),
and \(h_i(e)\) is the largest value \(t\) such that \((e,\max S^e_i)\) is not covered in time window \(W_t\).
We use the algorithm from \Cref{subsec:exact} to solve these instances and extend \(\mathcal{X}\) by the union of the solutions.\\
Intuitively, these instances just take care of whatever remains after the first phase.
We split them up temporally, just for compatibility with the input for the algorithm from \Cref{subsec:exact}.
\end{enumerate}
\fi
\ifshort
\smallskip
\noindent\textbf{Phase 1}: While we find two such edges \(e_1,e_2\) that are adjacent and appear at the same time-step and both appearances are not covered in some time window \(W_i\) then we consider the following instances of \textsc{Partial \(\Delta\)-TVC}:
Let \(S\) be the set of all time-steps in which \(e_1\) and \(e_2\) appear together and are both not covered by \(\mathcal{X}\) in some time window \(W_i\) with \(i \in [\min S - \Delta + 1, \max S - \Delta + 1]\).
We can subdivide \(S\) into subsets \(S_1, \dotsc S_k\) with \(k \leq T\) such that \(S_1\) contains the smallest elements of \(S\) such that there is no gap of at least \(2\Delta - 1\) between its elements, \(S_2\) contains the smallest elements of \(S\) between the first and second gap of at least \(2\Delta - 1\), and so on. 
Now we consider the subinstances given by \((G[e_1,e_2], \lambda', \ell, h, \min S_i, \max S_i)\)
with \(\lambda'\) defined as the restriction of \(\lambda\) to \(\{e_1,e_2\} \cap \lambda^{-1}([\min S_i - \Delta + 1, \max S_i - \Delta + 1])\),
\(\ell(e_1) = \ell(e_2)\) is the smallest time-step \(t\) such that \((e_1,\min S_i)\) and \((e_2, \min S_i)\) are not covered in time window \(W_t\) by \(\mathcal{X}\),
and \(h(e_1) = h(e_2)\) is the largest time-step \(t\) such that \((e_1,\max S_i)\) and \((e_2, \max S_i)\) are not covered in time window \(W_t\) by \(\mathcal{X}\).
We use the algorithm from \Cref{subsec:exact} to solve these subinstances and extend \(\mathcal{X}\) by the union of the solutions.

\smallskip
\noindent\textbf{Phase 2:} If no such edges are adjacent and appear at the same time-step and both appearances are not covered in some time window \(W_i\) let \(F\) be the set of edges \(e \in E(G)\) with occurrences which is not covered in some time window in the lifetime of \(\mathcal{G}\) by \(\mathcal{X}\).
We consider the following subinstances of \textsc{Partial \(\Delta\)-TVC}:
For \(e \in F\), let \(S^e\) be the set of all time-steps in which \(e\) appears and is not covered by \(\mathcal{X}\) in some time window \(W_i\) with \(i \in [\min S^e - \Delta + 1, \max S^e - \Delta + 1]\) and \(t \in S^e\).
We can subdivide \(S^e\) into subsets \(S^e_1, \dotsc S^e_k\) with \(k \leq T\) such that \(S^e_1\) contains the smallest elements of \(S^e\) such that there is no gap of at least \(2\Delta - 1\) between its elements, \(S^e_2\) contains the smallest elements of \(S^e\) between the first and second gap of at least \(2\Delta -1\), and so on. 
Now we consider the subinstances given by \((G[e], \lambda', \ell, h, \min S^e_i, \max S^e_i)\)
with \(\lambda'\) defined as the restriction of \(\lambda\) to \(\{e\} \cap \lambda^{-1}([\min S^e_i - \Delta + 1, \max S^e_i + \Delta])\),
\(\ell(e)\) is the smallest time-step \(t\) such that \((e,\min S^e_i)\) is not covered in time window \(W_t\) by \(\mathcal{X}\),
and \(h(e)\) is the largest time-step \(t\) such that \((e,\max S^e_i)\) is not covered in time window \(W_t\) by \(\mathcal{X}\).
We use the algorithm from \Cref{subsec:exact} to solve these subinstances and extend \(\mathcal{X}\) by the union of the solutions.
\fi

It follows from the above construction that the produced set \(\mathcal{X}\) of vertex appearances is a \(\Delta\)-TVC of $\mathcal{G}$.

\ifshort
Furthermore, using a double-counting argument, we can show that the approximation ratio is at most $d-1$.
\fi

\subsubsection{Running Time and the Approximation Ratio \boldmath \((d-1)\).}
{
	The number of instances considered in Phase~1 is bounded by the number of combinations of two edges in \(E(G)\) multiplied by \(T\).
	Similarly, the number of instances considered in Phase~2 is bounded by the number of edges in \(E(G)\) multiplied by \(T\).
	To find an optimal solution for each instance we use the algorithm from \cref{thm:optPartialTVC}. Since each instance has at most $2$ edges and the maximum degree is at most $2$, 
	the algorithm requires \(\mathcal{O}(T)\) time to solve it optimally.
	Therefore, the overall running time lies in \(\mathcal{O}(|E(G)|^2T^2)\).
	
	Now we argue the claimed approximation guarantee.
	For this we denote with \(\mathfrak{H}\) the set of all  \textsc{Partial \(\Delta\)-TVC} 
	instances created by our algorithm.
	Let \(\mathcal{C}\) be a minimum \(\Delta\)-TVC for \(\mathcal{G}\), and \(\mathcal{X}\) be the temporal vertex set computed by our algorithm.
	A  \textsc{Partial \(\Delta\)-TVC} instance $\mathcal{H} \in \mathfrak{H}$ is of form $\mathcal{H} = (H,\lambda'_H,\ell_H,h_H)$, where $H$ is either a $P_3$ (Phase~1 instance) or an edge (Phase~2 instance).
	
	Let us now define the set $A$ as the set of triples $(v,t,\mathcal{H})$, 
	where $(v,t) \in \mathcal{C}$ is a vertex appearance from the optimum solution $\mathcal{C}$, 
	and $\mathcal{H}= (H,\lambda'_H,\ell_H,h_H) \in \mathfrak{H}$ is such \textsc{Partial $\Delta$-TVC} instance, created by our algorithm,
	that the vertex appearance $(v,t)$ covers one appearance of an edge of $H$ 
	in a time window between $W_{\ell_H}$ and $W_{h_H}$. 
	More precisely,
	\begin{align*}
	A = \{(v,t,\mathcal{H}) \mid & (v,t) \in \mathcal{C}, \mathcal{H} \in \mathfrak{H} \, \land\\
	& (v,t) \text{ covers an appearance of } e \in E(H)\\
	& \text{at some time } t' \in \lambda(e)\\
	& \text{in } W_i \text{ with } \ell_H(e) \leq i \leq h_H(e)\}\end{align*}
	From the definition of $\mathcal{X}$ and $A$ it follows that
	\begin{equation}\label{eq:d-1approx-bound1}
		|\mathcal{X}| \leq |A|,
	\end{equation} 
	as our algorithm optimally covers every \textsc{Partial $\Delta$-TVC} instance $\mathcal{H} \in \mathfrak{H}$,
	and, in case vertex appearances in the optimal solution $(v,t) \in \mathcal{C}$ differ from the ones chosen by our algorithm, we may need more of them to cover each $\mathcal{H}$.
	
	To show the desired approximation guarantee we give a double-counting argument for the cardinality of $A$.
	Counting from one side we distinguish vertices in the optimal solution $\mathcal{C}$ based on what kind of \textsc{Partial $\Delta$-TVC} instances $\mathcal{H} \in \mathfrak{H}$
	they cover time-edges in.
	To count the number of elements in $A$ from the other side we distinguish the \textsc{Partial $\Delta$-TVC} instances $\mathcal{H} \in \mathfrak{H}$ 
	by the number of vertices in a fixed optimal solution $\mathcal{C}$, that cover the edges appearing in $\mathcal{H}$.
	We either use the optimal amount of vertices in $\mathcal{C}$ to cover $\mathcal{H}$ or we have to use more.
	
	More specifically, on one side we want to associate \textsc{Partial $\Delta$-TVC} instances $\mathcal{H}$ with vertex appearances $(v,t) \in \mathcal{C}$. For this, we define the following set
	\begin{align*}
	\mathfrak{H}(v,t) = \{\mathcal{H} \in \mathfrak{H} \mid & (v,t) \in \mathcal{C} \text{ covers an appearance}\\
	& \text{of } e \in E(H) \text{ at time } t' \in \lambda(e)\\
	&\text{in } W_i \text{ with } \ell_H(e) \leq i \leq h_H(e)\}.
	\end{align*} 
	On the other side, we want to associate vertex appearances $(v,t) \in \mathcal{C}$ with the  \textsc{Partial $\Delta$-TVC} instances $\mathcal{H}$. For this, we define the following set
	\begin{align*}
	\mathcal{C}(\mathcal{H}) = \{(v,t) \in \mathcal{C} \mid & (v,t) \text{ covers an appearance}\\
	& \text{of } e \in E(H) \text{ in } W_i \text{ with }\\
	& \ell_H(e) \leq i \leq h_H(e)\}.
	\end{align*}
	Intuitively, one can think of the sets of form $\mathfrak{H}(v,t)$, for some vertex appearance $(v,t)$, as the collection of \textsc{Partial $\Delta$-TVC} instances ``touching'' $(v,t)$,
	and $\mathcal{C}(\mathcal{H})$, for some \textsc{Partial $\Delta$-TVC} instance $\mathcal{H}$, as the set of vertex appearances ``touching'' $\mathcal{H}$. 
	For an example see \cref{fig:d-1ApproximationSets}.
	\begin{figure}[h]
		\centering
		\includegraphics[width=0.3\linewidth]{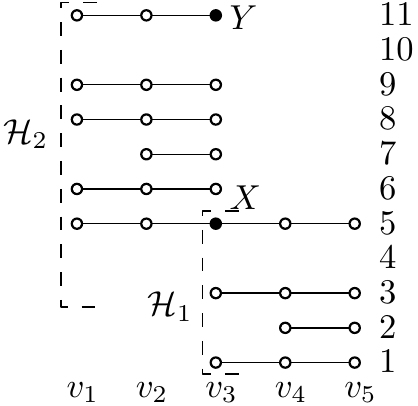}
		\caption{An example of a temporal graph with $\Delta = 3$ and two of the \textsc{Partial \(\Delta\)-TVC} instances $\mathcal{H}_1 = (P_1, \lambda_1, \ell_1, h_1)$ and $\mathcal{H}_2 = (P_2, \lambda_2, \ell_2, h_2)$, generated by our algorithm.
			Here, $\mathcal{H}_1$~is defined as $P_1 = (v_3, v_4, v_5)$, $S_1 = \{1, 3, 5\}$, $\ell_1 = 1$, $h_1 = 3$, $\lambda_1(v_3 v_4) = \lambda_1(v_4 v_5) = S_1$.
			Also, $\mathcal{H}_2$ is defined as $P_2 = (v_1, v_2, v_3)$, $S_2 = \{5, 6, 8, 9, 11\}$, $\ell_2 = 3$, $h_2 = 9$, $\lambda_2(v_1 v_2) = \lambda_2(v_2 v_3) = S_2$.
			We have two vertex appearances $X = (v_3, 5)$ and $Y=(v_3,11)$ from $\mathcal{C}$, and
		their corresponding sets 
		$\mathfrak{H}(X) = \{ \mathcal{H}_1, \mathcal{H}_2\}$,
		$\mathfrak{H}(Y) = \{ \mathcal{H}_2\}$,
		$\mathcal{C}(\mathcal{H}_1) = \{ X\}$,
		$\mathcal{C}(\mathcal{H}_2) = \{ X, Y\}$.}
		\label{fig:d-1ApproximationSets}
	\end{figure}

	For brevity, when speaking of a \textsc{Partial $\Delta$-TVC} instance \(\mathcal{H}\) constructed by our algorithm and a vertex appearance \((v,t)\), we say that \((v,t)\) \emph{subcovers} an edge \(e \in E(H)\) if \((v,t)\) covers an appearance of  \(e\) in \(W_i\) with \(\ell_H(e) \leq i \leq h_H(e)\).
 	
	We can now begin with counting the number of elements in $A$.
	From the definition, it follows that
	\begin{equation*}
	|A|  = \sum_{(v,t) \in \mathcal{C}} |\mathfrak{H}(v,t)|.
	\end{equation*}
	We can split the sum into two cases, 
	first where one vertex appearance subcovers one edge (or even a $P_3$) of at most $d-1$ different \textsc{Partial $\Delta$-TVC} instances,
	and the second when one vertex appearance subcovers one edge (or a $P_3$) of more than $d-1$ different \textsc{Partial $\Delta$-TVC} instances.
	\begin{equation} \label{eq:Counting-1}
	|A| = \sum_{\substack{(v, t) \in \mathcal{C} \\ |\mathfrak{H}(v,t)| \leq d - 1}} |\mathfrak{H}(v,t)| + \sum_{\substack{(v, t) \in \mathcal{C} \\ |\mathfrak{H}(v,t)| > d - 1}} |\mathfrak{H}(v,t)| 
	\end{equation}
	By the construction of the \textsc{Partial $\Delta$-TVC} it follows that if $(v,t)$ has a temporal degree $k$, it subcovers edges in at most $k$ different \textsc{Partial $\Delta$-TVC} instances.
	Moreover, if a vertex appearance $(v,t)$ subcovers two edges of a single \textsc{Partial $\Delta$-TVC} instance $\mathcal{H}$ 
	(\ie $\mathcal{H}$ is a Phase~1 instance with a $P_3$ as the underlying graph and $v$ is a central vertex of $P_3$),
	then $(v,t)$ can subcover time-edges in one less \textsc{Partial $\Delta$-TVC} instance.
	More specifically, if we assume that $(v,t)$ with temporal degree \(k\) is a central vertex of at least one \textsc{Partial $\Delta$-TVC} instance (\ie subcovers both edges of at least one instance),
	then $(v,t)$ can subcover edges in at most $k-1$ different instances.
	Note that, since the degree of each vertex is at most $d$, by our assumption,
	this means that for all vertex appearances $(v,t) \in \mathcal{C}$ that subcover both time-edges of some Phase~1 \textsc{Partial $\Delta$-TVC} instance
	it holds that \(|\mathfrak{H}(v,t)| \leq d - 1\).
	Therefore, for the second sum of \cref{eq:Counting-1} to happen vertex appearance $(v,t) \in \mathcal{C}$ must subcover exactly one underlying edge of $d$ different
	\textsc{Partial $\Delta$-TVC} instances.
	\newcommand{\criticalset}{D}
	For brevity we denote \(\criticalset := \{(v,t) \in \mathcal{C} \mid |\mathfrak{H}(v,t)| = d\}\)
	Hence the following holds
	\begin{align}
	\nonumber
	|A|& \leq (d - 1)(|\mathcal{C}|-|\criticalset|) + \sum_{(v, t) \in \criticalset} |\mathfrak{H}(v,t)|\\
	\nonumber
	& \leq (d - 1)|\mathcal{C}| + \sum_{(v, t) \in \criticalset} (d - (d - 1))\\
	& = (d - 1)|\mathcal{C}| + |D|. \label{eq:d-1approx-bound2}
	\end{align}
	Let us now count the number of elements in $A$ from the other side.
	Again by the definition, we get that
	\begin{equation*}
	|A| = \sum_{\mathcal{H} \in \mathfrak{H}} |\mathcal{C}(\mathcal{H})|.
	\end{equation*}
	Since $|A| \geq |\mathcal{X}|$ (see~\cref{eq:d-1approx-bound1}), we can divide $A$ into two parts. The first one, where the solution for the partial instances computed by our algorithm coincide (in terms of cardinality) with the ones implied by $\mathcal{C}$, and the second where \(\mathcal{C}\) implies suboptimal solutions. 
	For this we define
	\(\mathcal{X}(\mathcal{H})\) as the optimal solution for \textsc{Partial $\Delta$-TVC} instance \(\mathcal{H}\) calculated by our algorithm.
	From the construction it holds that \(|\mathcal{X}| = \sum_{\mathcal{H} \in \mathfrak{H}}|\mathcal{X}(\mathcal{H})|\).
	We get the following
	\begin{equation}\label{eq:d-1approx-bound3}
	|A| = |\mathcal{X}| + \sum_{\mathcal{H} \in \mathfrak{H}} 
	\left(
	|\mathcal{C}(\mathcal{H})| - |\mathcal{X}(\mathcal{H})|
	\right).
	\end{equation}
	
	Combining \cref{eq:d-1approx-bound2,eq:d-1approx-bound3} we obtain
	\begin{equation*}
		|\mathcal{X}| + \sum_{\mathcal{H} \in \mathfrak{H}} 
		\left(
		|\mathcal{C}(\mathcal{H})| - |\mathcal{X}(\mathcal{H})|
		\right)
		\leq
		(d - 1)|\mathcal{C}| + |D|.
	\end{equation*}
	Now, to prove our target, namely that
	$|\mathcal{X}| \leq (d-1)|\mathcal{C}|$, 
	it suffices to show that
	\begin{equation}\label{eq:d-1approx-main}
	|\criticalset|
	\leq \sum_{\mathcal{H} \in \mathfrak{H}} 
	\left(
	|\mathcal{C}(\mathcal{H})| - |\mathcal{X}(\mathcal{H})|
	\right).
	\end{equation}
	Recall that $|\mathfrak{H}(v,t)| > (d - 1)$ holds only for vertex appearances $(v,t) \in \mathcal{C}$, 
	with temporal degree $d$, that cover appearances of exactly one edge from every \textsc{Partial $\Delta$-TVC} $\mathcal{H} \in \mathfrak{H}(v,t)$.
	Furthermore, it turns out that at most one \textsc{Partial $\Delta$-TVC} $\mathcal{H} \in \mathfrak{H}(v,t)$ is a Phase~2 (a single edge) instance.
	\begin{lemma}
		\label{obs:covered-instances}
		Let \((v,t) \in \mathcal{C}\).
		There is at most one Phase~2 (single edge) \textsc{Partial $\Delta$-TVC} instance \(\mathcal{H}_1 \in \mathfrak{H}(v,t)\).
	\end{lemma}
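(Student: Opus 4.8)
The plan is to argue by contradiction: assume that $\mathfrak{H}(v,t)$ contains two distinct Phase~2 instances $\mathcal{H}_1$ and $\mathcal{H}_2$, on the single edges $e_1$ and $e_2$ respectively, and derive a contradiction with the construction of the algorithm. By the definition of $\mathfrak{H}(v,t)$, for each $j \in \{1,2\}$ the vertex $v$ is an endpoint of $e_j$, the edge $e_j$ is active at time $t$ (so $t \in \lambda(e_j)$), and there is a time window $W_{i_j}$ with $t \in W_{i_j}$ and $\ell_{H_j}(e_j) \leq i_j \leq h_{H_j}(e_j)$.

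First I would dispose of the case $e_1 = e_2$. Here $\mathcal{H}_1$ and $\mathcal{H}_2$ must stem from two distinct blocks $S^{e_1}_a$ and $S^{e_1}_b$ of uncovered appearances of $e_1$, which by construction are separated by a gap of at least $2\Delta - 1$. A short window-arithmetic computation shows that the window range of a block $S^{e_1}_a$ is contained in $[\min S^{e_1}_a - \Delta + 1, \max S^{e_1}_a]$, so any time lying in a window of that range is at most $\max S^{e_1}_a + \Delta - 1$, whereas any time lying in a window of the next block's range is at least $\min S^{e_1}_b - \Delta + 1 \geq \max S^{e_1}_a + \Delta$. Hence no single time $t$ can lie in a window belonging to both ranges, so $(v,t)$ cannot subcover both instances; this is precisely the purpose of the $2\Delta - 1$ gap. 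Consequently $e_1 \neq e_2$.

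The substantive case is $e_1 \neq e_2$. Since both $e_1$ and $e_2$ are incident to $v$ and both are active at $t$, together they form a $3$-vertex path $P$ with centre $v$ consisting of the edges $e_1$ and $e_2$, which appears at time $t$. The crux is to show that at the end of Phase~1 both $e_1$ and $e_2$ are still uncovered by $\mathcal{X}$ in one common window $W_i$ with $t \in W_i$: then $P$ is a $3$-vertex path appearing at $t \in W_i$ whose two underlying edges are both uncovered in $W_i$, so the loop guard of Phase~1 would still be satisfied, contradicting the assumption that the algorithm has already moved on to Phase~2. To obtain the common uncovered window I would translate the memberships $\mathcal{H}_j \in \mathfrak{H}(v,t)$ into statements about $\mathcal{X}$: a Phase~2 instance on $e_j$ is created exactly from the appearances of $e_j$ that $\mathcal{X}$ leaves uncovered after Phase~1, and its window range consists of the windows in which $e_j$ still needs covering; combined with $t \in \lambda(e_j) \cap W_{i_j}$ this pins the appearance $(e_j,t)$ down as an uncovered appearance lying in the relevant block, so that $e_j$ is uncovered by $\mathcal{X}$ in a window containing $t$. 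Finally I would line up the two witnessing windows into a single window $W_i \ni t$ in which both edges are uncovered, using the elementary relation $i \le t \le i + \Delta - 1$.

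I expect the main obstacle to be exactly this last translation step in the different-edge case: moving from the purely range-theoretic condition ``there is a window $W_{i_j}$ in $[\ell_{H_j},h_{H_j}]$ containing $t$'' to the operational statement ``$e_j$ is left uncovered by $\mathcal{X}$ at the end of Phase~1 in a window containing $t$'', and then guaranteeing that the windows obtained for $e_1$ and $e_2$ can be taken to be one and the same window $W_i \ni t$. This requires carefully unwinding the definitions of $S^{e_j}$, the blocks $S^{e_j}_a$, and the endpoints $\ell_{H_j}, h_{H_j}$; once these are in place, the $P_3$ contradiction with the Phase~1 termination condition closes the argument.
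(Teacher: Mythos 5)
Your proposal is correct and its core is the same as the paper's proof: assume two Phase~2 instances in \(\mathfrak{H}(v,t)\), observe that both underlying edges are incident to \(v\) and active at \(t\) with uncovered appearances in windows containing \(t\), and conclude that they would have formed an uncovered \(P_3\) centred at \(v\) at time \(t\), contradicting the fact that Phase~2 only starts once the Phase~1 loop guard fails. Two differences are worth noting. First, you explicitly dispose of the case \(e_1 = e_2\) (two blocks \(S^{e}_a, S^{e}_b\) of the same edge) via the \(2\Delta-1\) gap arithmetic; the paper's proof tacitly assumes the two instances live on \emph{different} time-edges \((e,t),(f,t)\) and never addresses this case, so your treatment is a genuine (and welcome) addition that makes the argument airtight. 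Second, your anticipated ``crux'' --- lining up the two witnessing windows into a single common window \(W_i \ni t\) in which \emph{both} edges are uncovered --- is stronger than what is actually needed, and in general it could fail: if \(e_1\) is uncovered only in \(W_{i_1}\) and \(e_2\) only in \(W_{i_2}\) with \(i_1 \neq i_2\), no common uncovered window need exist. Fortunately the Phase~1 trigger, as formalised by the set \(S\) (``for every \(e \in E(P)\) there is a \(W_i\) s.t.\ \(t \in W_i\) and \(e\) is uncovered in \(W_i\)''), only requires a per-edge window containing \(t\); the paper's own proof accordingly works with two separate windows \(W_e, W_f\) with \(t \in W_e \cap W_f\). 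So that final merging step should simply be dropped rather than carried out. Your remaining translation step --- from membership in \(\mathfrak{H}(v,t)\) to ``\((e_j,t)\) is left uncovered at the end of Phase~1'' --- goes through exactly as you suspect, because the instance labelling \({\lambda'}^{e}_i\) is the restriction to \(S^{e}_i\), so covering an appearance of the instance forces \(t \in S^{e_j}\), which by definition of \(S^{e_j}\) yields the required uncovered window containing \(t\).
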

	\begin{proof}
		Suppose this is not true. 
		Then there are two different time-edges $(e,t), (f,t)$ incident to $(v,t)$, that belong to Phase~2 \textsc{Partial $\Delta$-TVC} instances $\mathcal{H}_1$ and $\mathcal{H}_2$, respectively.
		W.l.o.g.~suppose that our algorithm created instance $\mathcal{H}_1$ before $\mathcal{H}_2$.
		By the construction algorithm creates a Phase~2  \textsc{Partial $\Delta$-TVC} instance whenever there is no uncovered $P_3$ left in the graph. 
		But, when creating $\mathcal{H}_1$ both time-edges $(e,t), (f,t)$ are uncovered in some time windows \(W_e\) and \(W_f\), respectively, where \(t \in W_e \cap W_f\).
		Therefore $(e,t), (f,t)$ form an uncovered $P_3$ path at time-step $t$,
		which would result in algorithm combining them into one single Phase~1  \textsc{Partial $\Delta$-TVC} instance.
	\end{proof}
		We now proceed with the proof of \cref{eq:d-1approx-main}.
		\begin{lemma}
			\label{lem:critical-subinstances}
			For $\criticalset$ and $\sum_{\mathcal{H} \in \mathfrak{H}} \left(
			|\mathcal{C}(\mathcal{H})| - |\mathcal{X}(\mathcal{H})|
			\right)$ 
			defined as above the following holds
			\begin{equation}
				|\criticalset|
				\leq \sum_{\mathcal{H} \in \mathfrak{H}} 
				\left(
				|\mathcal{C}(\mathcal{H})| - |\mathcal{X}(\mathcal{H})|
				\right).
		\end{equation}\end{lemma}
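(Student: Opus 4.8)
The plan is to reduce the inequality to a per-instance \emph{slack bound} on the Phase~1 (that is, $P_3$) instances and then to combine these bounds by a double-counting argument. First I would recall the structure of a critical appearance $(v,t) \in D$. Since $|\mathfrak{H}(v,t)| = d$ is exactly the temporal degree of $(v,t)$, the appearance $(v,t)$ must subcover exactly one edge in each of $d$ distinct instances; in particular $(v,t)$ can never be a central vertex of a Phase~1 instance, since being central would subcover two edges of one instance and force $|\mathfrak{H}(v,t)| \le d-1$, contradicting $(v,t) \in D$. By \cref{obs:covered-instances} at most one of these $d$ instances is a Phase~2 (single-edge) instance, so $(v,t)$ is a non-central endpoint of at least $d-1 \ge 2$ Phase~1 instances.

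Next I would fix a single Phase~1 instance $\mathcal{H} \in \mathfrak{H}$ with underlying path $P_3 = (a,m,b)$ and central vertex $m$, and set $p = |\mathcal{X}(\mathcal{H})|$. I would partition $\mathcal{C}(\mathcal{H})$ according to which vertex does the covering into $\mathcal{C}_m, \mathcal{C}_a, \mathcal{C}_b$ (appearances of $m$, $a$, $b$ respectively), so that $|\mathcal{C}(\mathcal{H})| = |\mathcal{C}_m| + |\mathcal{C}_a| + |\mathcal{C}_b|$. In the $P_3$ only $m$ is incident to both edges, so edge $am$ can be subcovered in the required windows only by appearances in $\mathcal{C}_m \cup \mathcal{C}_a$, and edge $mb$ only by $\mathcal{C}_m \cup \mathcal{C}_b$. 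Since a single central appearance covers both edges simultaneously, the optimum $p$ computed by the algorithm of \cref{thm:optPartialTVC} equals the minimum number of appearances needed to stab all the relevant windows of $\mathcal{H}$, and covering a single edge cannot cost less; this yields $|\mathcal{C}_m| + |\mathcal{C}_a| \ge p$ and, symmetrically, $|\mathcal{C}_m| + |\mathcal{C}_b| \ge p$. Writing the slack as $|\mathcal{C}(\mathcal{H})| - p = |\mathcal{C}_a| + \big(|\mathcal{C}_m| + |\mathcal{C}_b| - p\big)$ and using the second inequality gives slack $\ge |\mathcal{C}_a|$, and symmetrically slack $\ge |\mathcal{C}_b|$. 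Letting $k_a$ (resp.\ $k_b$) be the number of appearances of $D$ that subcover $am$ (resp.\ $mb$) in $\mathcal{H}$, we have $k_a \le |\mathcal{C}_a|$ and $k_b \le |\mathcal{C}_b|$, hence $|\mathcal{C}(\mathcal{H})| - |\mathcal{X}(\mathcal{H})| \ge \max(k_a,k_b) \ge \tfrac{1}{2}(k_a + k_b)$.

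Finally I would sum over all Phase~1 instances. The sum $\sum_{\mathcal{H}\ \text{Phase 1}} \tfrac12 (k_a + k_b)$ counts each critical appearance once for every Phase~1 instance of which it is an endpoint, so by the first paragraph it is at least $\tfrac12 \cdot 2|D| = |D|$. Since every Phase~2 term $|\mathcal{C}(\mathcal{H})| - |\mathcal{X}(\mathcal{H})|$ is nonnegative (as $\mathcal{C}(\mathcal{H})$ is a feasible solution of the instance $\mathcal{H}$ while $\mathcal{X}(\mathcal{H})$ is optimal), restricting the total sum to Phase~1 instances only decreases it, giving $\sum_{\mathcal{H} \in \mathfrak{H}} (|\mathcal{C}(\mathcal{H})| - |\mathcal{X}(\mathcal{H})|) \ge |D|$, as required. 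I expect the main obstacle to be making the two covering inequalities rigorous under sliding windows: one must argue cleanly that $p$ equals the minimum number of appearances stabbing exactly the windows $[\ell_H, h_H]$ in which an edge of the $P_3$ is uncovered, and that every appearance of $\mathcal{C}$ subcovering $am$ genuinely lies in $\mathcal{C}_m \cup \mathcal{C}_a$ at a time inside those windows, while keeping track of the fact that a single underlying edge may be split across several disjoint instances.
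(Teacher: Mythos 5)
Your overall architecture matches the paper's: you correctly deduce from \cref{obs:covered-instances} and $d-1\geq 2$ that every $(v,t)\in D$ is a non-central endpoint of at least two Phase~1 instances, and your final aggregation (per-instance slack at least $\tfrac12(k_a+k_b)$, each critical appearance charged to its at least two Phase~1 instances, Phase~2 terms nonnegative) is exactly the same accounting the paper achieves with its ``each disposable appearance is shared by at most two elements of $D$'' observation plus Hall's theorem. However, the step on which everything rests --- the two covering inequalities $|\mathcal{C}_m|+|\mathcal{C}_a|\geq p$ and $|\mathcal{C}_m|+|\mathcal{C}_b|\geq p$ --- has a genuine gap, and it is precisely the obstacle you flag in your last sentence. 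The set $\mathcal{C}$ is an optimal solution of the \emph{original} temporal graph, so it may cover the windows $[\ell_H,h_H]$ of edge $am$ using appearances at time-steps in $\lambda(am)\setminus S_i$, i.e.\ at times when the other edge of the $P_3$ is absent (or which were excluded from $S_i$ because they were already covered by $\mathcal{X}$). Thus $\mathcal{C}_m\cup\mathcal{C}_a$ is a stabbing of the instance's windows with a \emph{richer} set of time points than the instance permits, and such a stabbing can be strictly smaller than $p$, which is computed with points restricted to $S_i$. Concretely, take $\Delta=3$ and $S_i=\{3,5,7,9,11\}$ (consecutive gaps $2=\Delta-1$, so this is a single Phase~1 instance) with window range $W_1,\dotsc,W_9$: any stabbing using only points of $S_i$ needs $4$ appearances (greedy is forced through $3,5,7,9$ since $4,6,8\notin S_i$), so $p=4$; but if $am$ additionally appears at time $6$ (with $mb$ absent then, so $6\notin S_i$), then appearances of $a$ or $m$ at times $3,6,9$ cover all of $am$'s obligations in these windows, giving $|\mathcal{C}_m|+|\mathcal{C}_a|=3<p$. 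If \emph{both} $am$ and $mb$ have such private appearances, both of your inequalities fail simultaneously and the per-instance bound $|\mathcal{C}(\mathcal{H})|-p\geq\tfrac12(k_a+k_b)$ can fail outright, so the defect is not repairable by symmetry.

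This is exactly the difficulty the paper's proof is built to overcome, and it is why that proof is long: instead of an LP-style counting inequality, it performs an explicit three-step exchange on $\mathcal{C}(\mathcal{H})$ --- first relocating each endpoint appearance to a time-step $t^*$ at which the \emph{entire} $P_3$ appears (this is where the construction of $S_i$, with consecutive appearances less than $\Delta$ apart, is used), then replacing these by appearances of the central vertex $m$, and finally pruning duplicates --- and verifies that the result is still a feasible solution of $\mathcal{H}$. This yields directly that all appearances of the relevant endpoint are disposable, i.e.\ $|\mathcal{C}(\mathcal{H})|-|\mathcal{X}(\mathcal{H})|\geq k_v$, without ever comparing $|\mathcal{C}_m\cup\mathcal{C}_a|$ to $p$. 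To repair your proof you would have to replace your ``covering a single edge cannot cost less than $p$'' step by such an exchange argument (or otherwise show that appearances of $\mathcal{C}$ at times outside $S_i$ can be shifted onto times of $S_i$ without loss), at which point you essentially recover the paper's proof.
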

		\begin{proof}
			We show the statement by modifying the solutions implied by $\mathcal{C}$ on Phase~1 \textsc{partial $\Delta$-TVC} instances (paths of length $3$).
			More specifically, for each \((v,t) \in \criticalset\) and every Phase~1 \textsc{partial $\Delta$-TVC} instance \(\mathcal{H} \in \mathfrak{H}(v,t)\), we find a vertex appearance in \(\mathcal{C}(\mathcal{H})\) which is ``disposable in the solution''.
			We refer to a set of vertex appearances including $(v,t)$ as disposable in the solution of $\mathcal{H} \in \mathfrak{H}(v,t)$ if we can modify the solution \(\mathcal{C}(\mathcal{H})\) by decreasing its size by the size of that set.
			In our process we not only show that $(v,t)$ is disposable in the solution of $\mathcal{H}$,
			but that the set of all vertex appearances of $v$ in $\mathcal{C}(\mathcal{H})$ (meaning all vertex appearances of form $(v,t_i)$)
			are disposable in the solution of $\mathcal{H}$.
			To produce a solution \(\widetilde{\mathcal{C}(\mathcal{H})}\) of $\mathcal{H}$ which proves that certain vertex appearances are disposable in the solution \(\mathcal{C}(\mathcal{H})\) of $\mathcal{H}$ we perform the following three steps.
			Firstly, we modify the solution
			\(\mathcal{C}(\mathcal{H})\) to only include vertex appearances at time-steps when the entire \(P_3\) underlying \(\mathcal{H}\) appears.
			Secondly, we ensure that the included vertex appearances corresponds to the appearances of the central vertex of the \(P_3\) underlying \(\mathcal{H}\).
			Thirdly, we remove unnecessary vertex appearances, producing the desired solution.
			The process is depicted in \cref{fig:approx-bound-proof}.
			\begin{figure}
				\centering
				\includegraphics[width=0.2\linewidth]{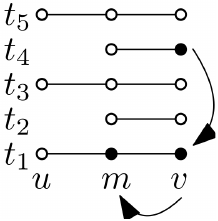}
				\caption{An illustration of a modification of a solution $\mathcal{C}(\mathcal{H})$ presented in \cref{lem:critical-subinstances}.
				\\
				The figure represents a \textsc{partial $\Delta$-TVC} instance $\mathcal{H}$, where the vertex appearance $(v,t_4) \in \mathcal{C}$.
				We first modify the solution $\mathcal{C}(\mathcal{H})$ by replacing $(v,t_4)$ with $(v,t_1)$, and then further replace it with the central vertex $(m,t_1)$.
				\label{fig:approx-bound-proof}}
			\end{figure}
			%
			While determining disposable vertex appearances does not correspond to an injection on \(\criticalset\) we ensure that no single disposable vertex appearance is used for more than two elements of \(\criticalset\), which proves to be sufficient in demonstrating the desired bound.
			
			We start by inspecting vertex appearances \((v,t) \in \criticalset\) in an order increasing with \(t\).
			Recall that \((v,t) \in \criticalset\) implies that there is no Phase~1 \textsc{Partial $\Delta$-TVC} instance in \(\mathfrak{H}(v,t)\) in which the associated temporal graph contains two edge appearances incident to \(v\) at \(t\), because otherwise \(|\mathfrak{H}(v,t)| \leq d - 1\).
			Since \(d - 1 \geq 2\), together with \Cref{obs:covered-instances} this implies that there are at least two Phase~1 \textsc{Partial $\Delta$-TVC} instances in \(\mathfrak{H}(v,t)\) for each of which \(v\) is incident to only one of the underlying edges.
			Notice that here it is important that \(d \geq 3\).

			Let $\mathcal{H}$ be a Phase~1 subinstance from $\mathfrak{H}(v,t)$.
			W.l.o.g.~we denote with $P = (u,m,v)$ the underlying path of $\mathcal{H}$,
			and let $\ell_P$ and $h_P$ be the lowest and the highest time-steps in the definition of $\mathcal{H}$, respectively.
			Note that vertex appearances in the intersection $\mathcal{C}(\mathcal{H}) \cap \criticalset$ are of the form $(v,t_i),(u,t_j)$ for some $t_i, t_j$ (if they exist).
			Suppose that there are $k_v \geq 1$ vertices of form $(v,t_i)$ in \(\mathcal{C}(\mathcal{H})\).
			Our goal is to create the modified solution $\widetilde{\mathcal{C}(\mathcal{H})}$ of $\mathcal{H}$
			where $|\widetilde{\mathcal{C}(\mathcal{H})}| \leq |\mathcal{C}(\mathcal{H})| - k_v$.
			We start by setting 
			$\widetilde{\mathcal{C}(\mathcal{H})} = \mathcal{C}(\mathcal{H})$.
			We first iterate through every vertex appearance $(u,t') \in \mathcal{C}(\mathcal{H})$ (increasingly in $t'$), and do the following.
			We set $T_\mathcal{H} = [\ell_P, h_P + \Delta - 1] \cap [t'-\Delta +1, t'+\Delta -1]$ to be the interval of time-steps $t''$ for which $\mathcal{H}$ is defined, such that $|t'' - t'| \leq \Delta$.
			By the construction of Phase~1 subinstances, both edges $um$ and $mv$ appear at some time step \(t^*\) in $T_\mathcal{H}$.
			We chose \(t^* \in T_\mathcal{H}\) to be maximal such that \((x,t^*)\) covers an appearance of the edge $um$ 
			in the earliest time window $t^* \in W_i$ of \(\mathcal{H}\),
			where the edge $um$ is not yet covered by vertex appearances in 
			$\widetilde{\mathcal{C}(\mathcal{H})} \setminus (u,t')$.
			Intuitively, when removing $(u,t')$ from $\widetilde{\mathcal{C}(\mathcal{H})}$ we may get some time-windows $W_i, W_{i+1}, W_{i+2} \dots$, where the edge $um$ is not yet covered. We now set $t^*$ to be the maximum value in $W_i$ when both $um$ and $mv$ appear. 
			If such a $(u,t^*)$ exists we replace $(u,t')$ by $(u,t^*)$, otherwise we identify $(u,t')$ as disposable in the solution for $\mathcal{H}$.
			%
			We then further modify \(\widetilde{\mathcal{C}(\mathcal{H})}\) by replacing each \((u,t^*)\) from the previous step with \((m,t^*)\).
			We now repeat the same procedure for all $(v,t') \in \widetilde{\mathcal{C}(\mathcal{H})}$.
			After the first two steps are finished $\widetilde{\mathcal{C}(\mathcal{H})}$ consists only of the vertex appearances of the central vertex $m$ of $P$.
			We now continue with the final step and iterate through
			the vertex appearances \((m,t)\) in the order of increasing \(t\), delete unnecessary vertices from the solution $\widetilde{\mathcal{C}(\mathcal{H})}$ for \(\mathcal{H}\) and mark them as disposable.
			This happens if
			\((m,t)\) has been added to \(\widetilde{\mathcal{C}(\mathcal{H})}\) multiple times, when considering different vertex appearances,
			or it can be removed from \(\widetilde{\mathcal{C}(\mathcal{H})}\) without changing the fact that it is a solution.
			In both cases, any appearance of the vertex $v$ from the set \(\criticalset\) which is associated with the unnecessary vertex appearance \((m,t^*)\) is disposable in the solution for \(\mathcal{H}\).
			
			We now have to show that $\widetilde{\mathcal{C}(\mathcal{H})}$ is indeed a solution for $\mathcal{H}$.
			To prove this we have to make sure that exchanging each vertex appearance $(u,t') \in \mathcal{C}(\mathcal{H})$ (resp.~$(v,t') \in \mathcal{C}(\mathcal{H})$) with $(u,t^*)$ (resp.~$(v,t^*)$) and then finally with $(m,t^*)$ in the first two steps of the modification returns a valid solution.
			Suppose for the contradiction that the edge $mv$ appears at time $t$ and that there is a time window $t \in W_i$ where $mv$ is uncovered in $W_i$ by our final solution $\widetilde{\mathcal{C}(\mathcal{H})}$.
			Using the definition of $\mathcal{H}$ and the fact that $\widetilde{\mathcal{C}(\mathcal{H})}$ contains only the appearances of the central vertex of $P$
			it follows that also $um$ appears at time $t$ and is uncovered in $W_i$.
			In the initial solution, $um$ and $mv$ were both covered in $W_i$.
			If they were both covered by the same vertex appearance, it must have been the appearance of the central vertex which would remain in our final solution. Therefore, the edges are covered by two different vertex appearances. Denote them with $(u,t_u)$ and $(v,t_v)$, respectively.
			Note that $t_u, t_v \in W_i$. Therefore, $|t_v - t_u| \leq \Delta$.
			In our procedure we first considered $(u, t_u)$ that was replaced by $(u,t_u^*)$ and then by $(m,t_u^*)$, where the edge $um$ was uncovered by $\widetilde{\mathcal{C}(\mathcal{H})} \setminus (u, t_u)$ in a time window $W_j^u$.
			It also holds that $t_u^*$ is the maximum value of $W_j^u$ when both $um$ and $mv$ occur.
			Since $um$ and $mv$ are uncovered in $W_i$ it must be true also that $t_u^* \notin W_i$, even more $t_u^* < i$.
			Next, the procedure considered $(v,t_v)$, and replaced it by $(v,t_v^*)$ and then further by $(m,t_v^*)$.
			Note here that $t_v^*$ is the maximum value in some time window $W_j^v$ where both $um$ and $mv$ appear and are uncovered by the modified solution $\widetilde{\mathcal{C}(\mathcal{H})} \setminus (v, t_v)$, where $\widetilde{\mathcal{C}(\mathcal{H})}$ already includes the vertex appearance $(m,t_u^*)$.
			This implies that $t_v^* \notin W_j^u$ and furthermore, since $t_u,t_v \in W_i$ and $t_u - t_u^* \leq \Delta$ it follows that $t_v^* \in W_i$.
			Therefore, $um$ and $mv$ cannot be uncovered in $W_i$.
			
			What remains to show is that our modified solution is of the correct size.
			Namely $|\widetilde{\mathcal{C}(\mathcal{H})} | \leq |\mathcal{C}(\mathcal{H})| - k_v$, where $k_v$ is the number of vertex appearances of $v$ in $\mathcal{C}(\mathcal{H})$.
			We show this by proving that each vertex $(v,t_i) \in \mathcal{C}(\mathcal{H}) \cap \criticalset$ is disposable in the solution of $\mathcal{H}$.
			Remember, our process first inspects all the vertex appearances of form $(u,t_u) \in \mathcal{C}(\mathcal{H})$ and replaces them with the corresponding $(m,t_u^*)$.
			Since in the original solution $\mathcal{C}(\mathcal{H})$ the whole underlying path $P$ of $\mathcal{H}$ was covered, it follows that after performing the first two steps for the vertices $(u,t_u)$ the edge $um$ remains covered in $\mathcal{H}$.
			Even more, because each $(u,t_u)$ was exchanged by a corresponding appearance of the central vertex $m$ at time $t_u^*$ when both $um$ and $mv$ appear, it follows that the edge $mv$ is covered in $\mathcal{H}$.
			Therefore, we can denote all $(v,t_i) \in \mathcal{C}(\mathcal{H})$ as disposable in the solution of $\mathcal{H}$.

			Overall, this means that for every vertex appearance \( (v,t) \in \criticalset\) 
			and each of at least two Phase~1 \textsc{Partial $\Delta$-TVC} instances in \(\mathcal{H} \in \mathfrak{H}(v,t)\),
			we can find one disposable vertex appearance in \(\mathcal{C}(\mathcal{H})\).
			On the other hand, these found disposable vertex appearances coincide for at most two vertex appearances in \(\criticalset\) (this is true as $D$ can admit vertex appearances of both endpoints of the underlying path of a \textsc{Partial $\Delta$-TVC} instance $\mathcal{H}$).
			By Hall's theorem~\cite{Hall35} this means we can match each vertex appearance in \(\criticalset\) to a disposable vertex appearance in \(\mathcal{C}\), yielding that \(|\criticalset|\) is upper-bounded by \(\sum_{\mathcal{H} \in \mathfrak{H}} 
			\left(
			|\mathcal{C}(\mathcal{H})| - |\mathcal{X}(\mathcal{H})|\right)\) as desired.
		\end{proof}

		All of the above shows the desired approximation result.
		\begin{theorem}
			For every \(\Delta \geq 2\) and $d\geq 3$, \dTVC\ can be \((d-1)\)-approximated in time \(\mathcal{O}(|E(G)|^2T^2)\).
		\end{theorem}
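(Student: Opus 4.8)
The plan is to assemble the three ingredients prepared above---feasibility of the output, the instance count for the running time, and the chain of inequalities delivering the $(d-1)$ ratio---into the final bound $|\mathcal{X}| \le (d-1)|\mathcal{C}|$.

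First I would confirm that $\mathcal{X}$ is a feasible $\Delta$-TVC: by the construction of Phases~1 and~2 every edge uncovered in some window is eventually placed in some \textsc{Partial $\Delta$-TVC} instance $\mathcal{H} \in \mathfrak{H}$, which is solved exactly via \Cref{thm:optPartialTVC} and added to $\mathcal{X}$, so the procedure halts only once no window leaves an edge uncovered. For the running time I would count instances: Phase~1 generates $O(|E(G)|^2 T)$ and Phase~2 generates $O(|E(G)| T)$ of them; each has underlying graph a $P_3$ or a single edge, so $|E(H)| \le 2$ and the underlying degree is at most $2$, whence \Cref{thm:optPartialTVC} with $c = \min\{2^{2},\Delta\} \le 4$ runs in $O(T)$ per instance. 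Multiplying gives the claimed $\mathcal{O}(|E(G)|^2 T^2)$ bound.

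The substance is the ratio, for which the preceding lemmas do the work. I would combine the bound \eqref{eq:d-1approx-bound2}, namely $|A| \le (d-1)|\mathcal{C}| + |D|$, with the identity \eqref{eq:d-1approx-bound3} to obtain
\[
|\mathcal{X}| + \sum_{\mathcal{H} \in \mathfrak{H}}\bigl(|\mathcal{C}(\mathcal{H})| - |\mathcal{X}(\mathcal{H})|\bigr) \;\le\; (d-1)|\mathcal{C}| + |D|,
\]
and then invoke \eqref{eq:d-1approx-main} from \Cref{lem:critical-subinstances}, which gives $|D| \le \sum_{\mathcal{H} \in \mathfrak{H}}(|\mathcal{C}(\mathcal{H})| - |\mathcal{X}(\mathcal{H})|)$. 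Since the summation dominates $|D|$, subtracting it from both sides makes the residual term $|D| - \sum_{\mathcal{H}}(|\mathcal{C}(\mathcal{H})| - |\mathcal{X}(\mathcal{H})|)$ nonpositive, and what survives is precisely $|\mathcal{X}| \le (d-1)|\mathcal{C}|$.

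I expect no remaining obstacle at the level of the theorem: the genuine difficulty was already absorbed into \Cref{lem:critical-subinstances}, via the charging argument that rewrites each suboptimal $\mathcal{C}(\mathcal{H})$ on a $P_3$ instance into appearances of its central vertex---exposing a disposable appearance---together with the fact that each disposable appearance is reused for at most two elements of $D$, so that Hall's theorem matches $D$ into $\mathcal{C}$. For the theorem itself I would only verify the bookkeeping, in particular that the hypothesis $d \ge 3$ is carried through (it is what guarantees at least two Phase~1 instances per critical appearance, on which \Cref{lem:critical-subinstances} depends) and that the cancellation of $|D|$ across the two sides is legitimate.
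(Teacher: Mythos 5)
Your proposal is correct and follows essentially the same route as the paper: feasibility by construction of the two phases, the instance count $\mathcal{O}(|E(G)|^2 T)$ with $\mathcal{O}(T)$ per instance (since each \textsc{Partial $\Delta$-TVC} instance has at most two edges and degree at most two) for the running time, and the combination of the displayed bounds $|A| \leq (d-1)|\mathcal{C}| + |D|$ and $|A| = |\mathcal{X}| + \sum_{\mathcal{H}}\bigl(|\mathcal{C}(\mathcal{H})| - |\mathcal{X}(\mathcal{H})|\bigr)$ with \cref{lem:critical-subinstances} to cancel $|D|$ and conclude $|\mathcal{X}| \leq (d-1)|\mathcal{C}|$. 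You also correctly identify where the hypothesis $d \geq 3$ enters, namely in guaranteeing the two Phase~1 instances per critical appearance on which \cref{lem:critical-subinstances} relies.
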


		\subsection{An \FPT\ algorithm with respect to the solution size}\label{cp-alg-subsec}
		Our final result settles the complexity of \textsc{\(\Delta\)-TVC} from the viewpoint of parameterized complexity theory~\cite{CyganFKLMPPS15,DowneyFellows13,Niedermeier06} with respect to the standard parameterization of the size of an optimum solution.
		
		\begin{theorem}
			For every $\Delta\geq 2$, \dTVC\ can be solved in $\mathcal{O}((2\Delta)^kT n^2)$ time, where \(k\) is the size of an optimum solution.
			In particular \dTVC\ is in \FPT\ parameterized by \(k\).
		\end{theorem}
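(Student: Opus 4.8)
The plan is to solve $\Delta$-TVC by a bounded search-tree (branching) algorithm parameterized by the target solution size $k$. I would maintain a partial temporal vertex subset $\mathcal{X}$, initially empty, together with a remaining budget (initially $k$), and recurse as follows. At each node of the search tree I first look for a \emph{violated constraint}, that is, a $\Delta$-time window $W_i$ within the lifetime and an edge $e \in E[W_i]$ for which there is no $(v,t) \in \mathcal{X}$ with $v \in e$ and $t \in \lambda(e) \cap W_i$. If no violated constraint remains, then by the definition of $\Delta$-TVC the set $\mathcal{X}$ is a feasible solution and I return it. If the budget has dropped to $0$ while a violated constraint still exists, the current branch fails.

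The branching rule is the heart of the algorithm and is where the exponential base $2\Delta$ arises. Given a violated constraint $(e, W_i)$, every feasible $\Delta$-TVC must contain at least one vertex appearance that covers $e$ in $W_i$, i.e.\ some $(v,t)$ with $v \in e$ and $t \in \lambda(e) \cap W_i$. The set of candidate appearances is $\{(v,t) : v \in e,\ t \in \lambda(e) \cap W_i\}$; since $e$ has exactly two endpoints and $|W_i| = \Delta$, this set has at most $2\Delta$ elements. I would therefore branch into at most $2\Delta$ subproblems, one per candidate, in each case adding the chosen $(v,t)$ to $\mathcal{X}$ and decreasing the budget by one. Exhaustiveness is immediate: any $\Delta$-TVC of size at most $k$ must include one of these candidates to cover $(e, W_i)$, so at least one branch is consistent with it. Correctness then follows by induction on the budget: the termination test exactly checks the defining property of a $\Delta$-TVC, and the branching is exhaustive, so a feasible solution of size at most $k$ is returned whenever one exists.

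For the running time, each branch strictly decreases the budget, so the search tree has depth at most $k$ and branching factor at most $2\Delta$, giving $\mathcal{O}((2\Delta)^k)$ nodes in total. At each node I locate a violated constraint by scanning all $\mathcal{O}(n^2)$ edges and all $\mathcal{O}(T)$ time windows. To keep each individual coverage query at constant cost, I would maintain, for every $(e, W_i)$ pair, a counter of how many appearances of $\mathcal{X}$ currently cover $e$ in $W_i$; inserting or removing an appearance $(v,t)$ on the recursion stack touches only the edges incident to $v$ that are active at $t$ and the at most $\Delta$ windows containing $t$, so these updates are cheap and are amortized away. This yields $\mathcal{O}(Tn^2)$ work per node and the overall bound $\mathcal{O}((2\Delta)^k T n^2)$, placing $\Delta$-TVC in \FPT\ with respect to $k$.

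The branching structure itself is routine; the points requiring genuine care are two. First, one must observe that it is the \emph{window length} $\Delta$, rather than the lifetime $T$, that bounds the number of admissible time-slots for covering a fixed violated constraint—this is exactly what keeps the exponential base at $2\Delta$ and is the only place the sliding-window structure is used. Second, the bookkeeping for coverage queries must be arranged so that it does not introduce an extra factor of $k$ (or of the solution so far) into the per-node cost; maintaining the per-pair coverage counters as described, rather than re-checking $\mathcal{X}$ from scratch, is what secures the clean $Tn^2$ polynomial factor.
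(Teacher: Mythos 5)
Your proposal is correct and follows essentially the same strategy as the paper: a depth-$k$ bounded search tree that, for an edge uncovered in some $\Delta$-time window, branches on the at most $2\Delta$ candidate vertex appearances (two endpoints times $\Delta$ time slots) that could cover it, yielding $\mathcal{O}((2\Delta)^k)$ branches with $\mathcal{O}(Tn^2)$ work per branch. Your write-up is in fact somewhat more careful than the paper's sketch (explicit exhaustiveness argument and coverage bookkeeping), but the underlying algorithm and analysis are identical.
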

		\iflong
		\begin{proof}
			The algorithm starts by selecting an edge $e \in E(G)$ with an uncovered appearance in an arbitrary \(\Delta\)-time window $W_t$ and selects one of (at most) $\Delta$ vertex appearances of either endpoint to cover it.
			By covering one edge at some time point, we also cover some other edges in some other time windows,
			these are also excluded from the subsequent search.
			This step is repeated until after $k$-steps the algorithm stops and checks if selected vertices form a sliding $\Delta$-time window temporal vertex cover of $\mathcal{G}$.
			
			The algorithm builds at most $(2 \Delta)^k$ sets, for each of which checking if it is a temporal vertex cover takes $\mathcal{O}(T n^2)$ time.
			Therefore the running time of the algorithm is $\mathcal{O}((2 \Delta)^k T n^2)$.
		\end{proof}
		\fi
\section{Conclusion}
We presented a comprehensive analysis of the complexity of temporal vertex cover in small-degree graphs. 
We showed that \dTVC\ is NP-hard on the underlying paths (and cycles) already for $\Delta = 2$, while \textsc{TVC} on these underlying graphs can be solved in polynomial time. 
An open question that arises is for what value of $\Delta$ we have that \dTVC\ is NP-hard while $(\Delta + 1)$-\textsc{TVC} is polynomial-time solvable. This would determine the borders and achieve a complete dichotomy of the problem in terms of the value of $\Delta$.

Additionally, we provided a Polynomial-Time Approximation Scheme (PTAS) for \dTVC\ on temporal paths and cycles, complementing the hardness result. 
We present also an exact dynamic algorithm, that we then use in the $d-1$ approximation algorithm (where $d$ is the maximum vertex degree in any time step), which improves the $d$-approximation algorithm presented by Akrida \etal~\cite{AkridaMSZ20}.
Furthermore, we settled the complexity of \dTVC\ from the viewpoint of parameterized complexity theory, showing that \dTVC\ is fixed-parameter tractable when parameterized by the size of an optimum solution.

\bibliographystyle{abbrv}

\end{document}